\newtheorem{defi}{Definition}
\newtheorem{thm}{Theorem}
\newtheorem{cor}{Corollary}
\newtheorem{rem}{Remark}
\newtheorem{lem}{Lemma}
\def\tp{\mathrm{T}}
\def\sn{\mathrm{span}}
\def\nl{\mathrm{null}}
\def\rank{\mathrm{rank}}
\begin{document}

\title{From Control to Mathematics--Part I: Controllability-Based Design for Iterative Methods in Solving Linear Equations}

\author{Deyuan Meng, {\it Senior Member}, {\it IEEE}, and Yuxin Wu %<-this % stops a space\\\vspace{1cm}September 15, 2007

\thanks{This work was supported by the National Natural Science Foundation of China under Grant 61873013 and Grant 61922007.}
\thanks{The authors are with the Seventh Research Division, Beihang University (BUAA), Beijing 100191, P. R. China, and also with the School of Automation Science and Electrical Engineering, Beihang University (BUAA), Beijing 100191, P. R. China (e-mail: dymeng@buaa.edu.cn).}
}

\date{}
\maketitle

%%%%%%%%%%%%%%%%%%%%%%%%%%%%%%%%%%%%%%%%%%%%%%%%%%%%%%%%%%%%%%%%%%%%%%%%%%%%%%%%%%%%
\begin{abstract}
In the interaction between control and mathematics, mathematical tools are fundamental for all the control methods, but it is unclear how control impacts mathematics. This is the first part of our paper that attempts to give an answer with focus on solving linear algebraic equations (LAEs) from the perspective of systems and control, where it mainly introduces the controllability-based design results. By proposing an iterative method that integrates a learning control mechanism, a class of tracking problems for iterative learning control (ILC) is explored for the problem solving of LAEs. A trackability property of ILC is newly developed, by which analysis and synthesis results are established to disclose the equivalence between the solvability of LAEs and the controllability of discrete control systems. Hence, LAEs can be solved by equivalently achieving the perfect tracking tasks of resulting ILC systems via the classic state feedback-based design and analysis methods. It is shown that the solutions for any solvable LAE can all be calculated with different selections of the initial input. Moreover, the presented ILC method is applicable to determining all the least squares solutions of any unsolvable LAE. In particular, a deadbeat design is incorporated to ILC such that the solving of LAEs can be completed within finite iteration steps. The trackability property is also generalized to conventional two-dimensional ILC systems, which creates feedback-based methods, instead of the common used contraction mapping-based methods, for the design and convergence analysis of ILC.
\end{abstract}

%%%%%%%%%%%%%%%%%%%%%%%%%%%%%%%%%%%%%%%%%%%%%%%%%%%%%%%%%%%%%%%%%%%%%%%%%%%%%%%%%%%
\begin{IEEEkeywords}
Controllability, iterative method, learning control, linear algebraic equation, solvability, trackability.
\end{IEEEkeywords}

%%%%%%%%%%%%%%%%%%%%%%%%%%%%%%%%%%%%%%%%%%%%%%%%%%%%%%%%%%%%%%%%%%%%%%%%%%%%%%%%%%%
\section{Introduction}\label{sec1}

\IEEEPARstart{S}{olving} linear algebraic equations (LAEs) has been one of the most fundamentally significant problems in science and engineering \cite{mc:05,sv:00}. It has caught attention in lots of areas since many practical and complex problems can ultimately boil down to the solving of LAEs, such as forecast, estimation, and approximation of nonlinear systems and modelling of physical systems. In general, there are two basic categories of methods for solving LAEs, that is, direct methods and iterative methods. The direct methods are traditionally adopted for solving LAEs, which leverage the Gaussian elimination and its enhancements to directly calculate the exact solutions of LAEs with a process of finite steps. Though the direct methods can lead to the exact solutions of LAEs, it is lack of robustness and requires a large amount of computation and storage, especially for LAEs with higher dimensions. The iterative methods mainly make use of the successive iterations to update the approximate solutions of LAEs such that the exact solutions of them can be determined when the convergence of iterative process is achieved. Thanks to that the iterative methods only need to perform the addition and multiplication operations over a few matrices and vectors, they are generally preferred in practical applications \cite{k:95,g:97}.
%
%One of the classic iterative methods for solving LAEs is the Jacobi method together with its over-relaxation method. It uses only the approximations from previous iterations in executing the current iteration, and thus has a high parallelism degree and a simple computation process but is subject to a relatively slow convergence rate. Another widely used iterative method is the Gauss-Seidel method. In comparison with the Jacobi method, it has a faster convergence rate and needs less storage because of the use of the latest iteration approximations, but as a trade-off, it is lack of the natural parallelism. As an extension of the Gauss-Seidel method, the successive over-relaxation method is of the better convergence performance than the aforementioned methods by selecting appropriate relaxation factors. For more detailed explanations on the classic iterative methods, we refer the readers to, e.g., \cite{mc:05,sv:00,k:95,g:97} and references therein.

The classic iterative methods for solving LAEs make certain use of the feedback mechanism. However, they ignore the use of ``control design'' in feedback, due to which the convergence rate, or even the convergence, for them may not be guaranteed. What will emerge if we can incorporate the idea of control into the iterative methods, and can this make them more effective in solving LAEs? If the answers to them are affirmative, how can we reasonably integrate control into the iterative methods? For this key problem, a promising control-theoretic design method is explored in \cite{bk:03,bk:07}, which not only presents continuous-time and discrete-time algorithms but also is effective, regardless of linear or nonlinear LAEs. In \cite{hj:05,hjl:06}, the design methods from optimal control and robust control are successfully introduced to solve LAEs, which yield solution algorithms with the global convergence or tunable convergence properties. Recently, there have been reported insightful consensus-based network control methods for solving LAEs, together with providing distributed algorithms based on local information exchange (see, e.g., \cite{mlm:15,lmnb:17,ae:20,zcf:20}). It has been revealed in \cite{bk:03,bk:07,hj:05,hjl:06,mlm:15,lmnb:17,ae:20,zcf:20} that the introduction of control design into iterative methods can make them to possess better performances in the crucial aspects of, e.g., complexity, convergence, and robustness, and can also provide a systematic way to design solution algorithms to LAEs.

However, the use of control methods to better performances of iterative methods is less developed especially in comparison with the popular application of iterative methods in the control area (see also \cite{c:01} for similar discussions made on the relation between control design and numerical analysis). Though some attempts are devoted to concerning this issue in, e.g., \cite{bk:03,bk:07,hj:05,hjl:06,mlm:15,lmnb:17,ae:20,zcf:20}, these existing results are mainly aimed at solvable LAEs (with a unique or at least one solution). It is required to further study whether and how the control design ideas can be leveraged to determine all solutions (or least squares solutions) for solvable (or unsolvable) LAEs. More fundamentally, are there inherent relationships between the solvability or unsolvability problems for LAEs and the basic properties for control systems, such as controllability, reachability, or stabilizability? Furthermore, in case of a positive answer to this question, how can the inherent relationships be reasonably disclosed? To our knowledge, these problems still remain open in the interaction between iterative methods for solving LAEs and control design methods, which will be addressed in this paper. Our approach is to incorporate the strategies of iterative learning control (ILC) in establishing effective solving methods for LAEs, in which a salient idea of learning from experience can be leveraged to connect iterative methods with control systems.

As a class of intelligent control methods, ILC contributes to realizing the ``perfect output tracking'' of any desired reference, which is accomplished by learning information from previous iterations and iteratively updating the control input signal (see, e.g., \cite{bta:06,acm:07,x:11}). Once the convergence for the iterative process is ensured, together with decreasing the tracking error to zero, the output tracking objective for any specified ILC system can be achieved perfectly over a fixed interval from the beginning to the end. It has been reported in the literature (see, e.g., \cite{xh:09,hx:09}) that ILC can be enabled to improve the transient response performances of the controlled systems, and be implemented in an off-line manner based on the input and output data, together with using quite limited model knowledge. Simultaneously, the update of the control input for ILC arrives at the approximation to the desired input for the desired reference, which is actually consistent with the exploration of iterative methods in realizing the approximation to the solution for an LAE. This consistency result brings the possibility to solve LAEs from the perspective of control analysis and synthesis based on constructing certain ILC systems, which however has not been investigated in the literature to the best of our knowledge.

With appropriate design of iterative learning controllers, the convergence performance of iterative processes may be greatly enhanced. The design and analysis of ILC mainly resorts to the contraction mapping method, rendering ILC distinct especially from the popular feedback-based control methods and separate from them simultaneously, e.g., see \cite{sw:03,tx:03,kcb:05,lxh:14,chjwc:15,xypy:16,awob:17,cdeag:17,mm:20}. It is known that the controllability is considered as a fundamentally significant property in most popular control methods, whereas it is seldom used in ILC, and instead, the system relative degree is regarded as one of the essential conditions in ILC. Are basic properties, such as reachability, controllability, or stabilizability, of control systems really not significant for ILC? The answer is negative. For the design and analysis of ILC, a realizability hypothesis is usually imposed, which embeds the controllability information of ILC systems (see, e.g., \cite{lgn:19}). However, where and how do these properties take effect in ILC? Furthermore, how can we benefit from them to proceed the studies of ILC and establish a close connection of ILC to the popular state feedback-based control methods? These questions still remain unanswered.

In this paper, we provide a viewpoint of systems and control to solve LAEs by constructing a framework of ILC, for which a fundamental trackability property of ILC is newly introduced by requiring only the existence of some desired inputs for ILC systems in generating the desired reference. It helps to develop the equivalent relation between the solvability of LAEs and the controllability of control systems, as shown in Fig. \ref{p1}. Further, we can arrive at an ILC algorithm to calculate the solutions of LAEs by leveraging the state feedback-based control methods. In addition, we explore the trackability property for traditional two-dimensional (2-D) ILC systems, and thus can benefit from the controllability and state feedback-based design methods to implement the perfect tracking tasks of ILC. Specifically, three main contributions of this paper are summarized as follows.
\begin{enumerate}
\item
We can incorporate the state feedback-based control into the solving of LAEs. It is thanks to the transformation of this problem to a tracking problem of ILC, which can be further disclosed with a connection to the controllability property. Thus, we can give an algorithm to determine all solutions (respectively, all least squares solutions) of any solvable (respectively, unsolvable) LAE by the selections of different initial inputs. By comparison, we may extend the existing control-theoretic iterative methods given in, e.g., \cite{bk:03,bk:07,hj:05,hjl:06,mlm:15,lmnb:17,ae:20,zcf:20} by incorporating the idea of ILC to derive all (least squares) solutions for LAEs.

\item
We propose a design method to improve the convergence rate of the iterative processes that result from the solving of LAEs. By integrating the idea of deadbeat control into the design of ILC, we can enable the convergence of the iterative processes to be accomplished in finite iteration steps. This not only overcomes the shortcoming of slow convergence in the classic iterative methods for solving LAEs, but also preserves their advantages of robustness and simple calculation. It thus renders iterative methods more practical for determining the solutions to LAEs.

\item
We disclose that the realizability hypothesis, required in many ILC literature (see, e.g., \cite{sw:02,s:05,mm:171,sx:20}), is unnecessary for the perfect tracking of ILC. For any trackable desired reference, there generally allow multiple desired inputs that generate this desired reference. We also extend the trackability property to the traditional 2-D ILC systems, and hence the state feedback-based design and analysis methods are effective in implementing the perfect output tracking tasks for them, which narrows the gap between ILC and the classic feedback-based control methods.
\end{enumerate}

As a dual result of this paper, observability-based design results for solving LAEs will be introduced in \cite{mw:21}.

\begin{figure}
\centering
\includegraphics[width=3.2in]{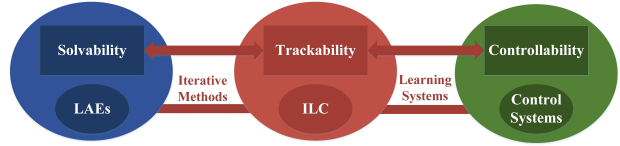}\\
\caption{The equivalent relationship between the solvability of LAEs and the controllability of control systems built via the trackability of ILC.}\label{p1}
\end{figure}

The rest of this paper is organized as follows. We establish a close connection between the solving problem of LAEs and the tracking problem of ILC in Section \ref{sec2}. In Sections \ref{sec3} and \ref{sec4}, the trackability and the controllability of ILC are introduced, respectively, and their relation is further disclosed and used to realize the perfect tracking of ILC based on the state feedback-based design and analysis methods in Section \ref{sec5}. In Section \ref{sec9}, we incorporate the idea of deadbeat control into the design for ILC such that we can develop the convergence for ILC within finite iterations. In Section \ref{sec6}, we present two implementation algorithms of ILC for solving LAEs, regardless of whether the LAEs are solvable. As an application, we apply the trackability property to traditional 2-D ILC systems and achieve the perfect tracking tasks for them with the state feedback-based methods in Section \ref{sec7}. We make concluding remarks in Section \ref{sec8}.

{\it Notations:} Let $\mathbb{Z}_+\triangleq\{0, 1, 2, \cdots\}$, $\mathbb{Z}_N\triangleq\{0, 1, \cdots, N \}$, and $I$ be an identity matrix with required dimensions. For any vector $x\in\mathbb{R}^{n}$, $\left\|x\right\|$ is its any norm, and $\left\|x\right\|_{2}$ and $\left\|x\right\|_{\infty}$ are particularly the Euclidean and infinite norms of $x$, respectively. Let $\rho(A)$ be the spectral radius of any matrix $A\in\mathbb{R}^{n\times n}$. For any linear space $X\subseteq\mathbb{R}^{n}$, $\dim(X)$ denotes its dimension. Let $X^{\bot}$ be the orthogonal complement subspace of any linear subspace $X$ in $\mathbb{R}^{n}$, and thus the direct sum of them is the entire space, which is denoted by $\mathbb{R}^{n}=X\oplus X^{\bot}$. For any vector $z_{k}\in\mathbb{R}^{n}$ changing with respect to an iteration axis that is denoted by $k\in\mathbb{Z}_{+}$, let $\Delta:z_{k}\to \Delta z_{k}\triangleq z_{k+1}-z_{k}$ represent a forward iteration operator.

%%%%%%%%%%%%%%%%%%%%%%%%%%%%%%%%%%%%%%%%%%%%%%%%%%%%%%%%%%%%%%%%%%%%%%%%%%%%%%%%%%%
\section{Problem-Solving of LAEs Via ILC Tracking}\label{sec2}

Let any desired reference $\bm{Y}_{d}\in\mathbb{R}^{p}$ be specified. The problem of interest in this paper is to solve an LAE expressed by
\begin{equation}\label{eq1}
\bm{Y}_{d}=\bm{G}\bm{U}_{d}
\end{equation}

\noindent to obtain the solution $\bm{U}_{d}\in\mathbb{R}^{q}$, where $\bm{G}\in\mathbb{R}^{p\times q}$ is the transfer or mapping matrix. If the LAE (\ref{eq1}) has solutions, then it is said to be {\it solvable}. Generally, there exist multiple solutions for the LAE (\ref{eq1}) when it is solvable. We thus need to not only calculate a certain solution but also determine an analytical formulation of the multiple solutions for the LAE (\ref{eq1}).

We target at addressing the solving problem of the LAE (\ref{eq1}) from the perspective of ILC. Toward this end, we duplicate the LAE (\ref{eq1}) in an iterative manner to arrive at a process, described with respect to the iteration index $k\in\mathbb{Z}_{+}$, as
\begin{equation}\label{eq2}
\bm{Y}_{k}=\bm{G}\bm{U}_{k},\quad\forall k\in\mathbb{Z}_{+}%+\bm{V}
\end{equation}

\noindent where we call $\bm{Y}_{k}\in\mathbb{R}^{p}$ and $\bm{U}_{k}\in\mathbb{R}^{q}$ the output and the input of (\ref{eq2}), respectively. Since (\ref{eq2}) can be employed for the description of an ILC system implicitly involving an independent discrete-time dynamics (see, e.g., \cite{bta:06,acm:07,m:19} for more discussions), it is also directly called a system for clarity, in accordance with which $\bm{U}_{d}$ is called the desired input for generating the desired reference $\bm{Y}_{d}$.

With the construction of the system (\ref{eq2}), the solving problem of the LAE (\ref{eq1}) can be interpreted as an ILC tracking problem.

{\it Problem Statement:} For the problem-solving of the LAE (\ref{eq1}), our objective is to design an input sequence $\left\{\bm{U}_{k}:k\in\mathbb{Z}_{+}\right\}$ for the system (\ref{eq2}), with any initial input $\bm{U}_{0}$ that can be arbitrarily selected, such that the resulting output sequence $\left\{\bm{Y}_{k}:k\in\mathbb{Z}_{+}\right\}$ can approach the desired reference $\bm{Y}_{d}$ as $k\to\infty$, namely,
\begin{equation}\label{eq3}
\lim_{k\to\infty}\bm{Y}_{k}=\bm{Y}_{d}
\end{equation}

\noindent and we can also learn a desired input $\bm{U}_{d}$ through the design of $\bm{U}_{k}$, $\forall k\in\mathbb{Z}_{+}$, and identify the unified properties of all possible desired inputs simultaneously after the ILC process.

To address the abovementioned tracking problem in ILC, we denote the tracking error as $\bm{E}_{k}=\bm{Y}_{d}-\bm{Y}_{k}$, $\forall k\in\mathbb{Z}_{+}$. Obviously, (\ref{eq3}) holds if and only if $\lim_{k\to\infty}\bm{E}_{k}=0$. Let the output deviation and input deviation between two sequential iterations for \eqref{eq2} be denoted as $\Delta\bm{Y}_{k}=\bm{Y}_{k+1}-\bm{Y}_{k}$, $\forall k\in\mathbb{Z}_{+}$ and $\Delta\bm{U}_{k}=\bm{U}_{k+1}-\bm{U}_{k}$, $\forall k\in\mathbb{Z}_{+}$, respectively. Then the use of (\ref{eq2}) yields
\begin{equation}\label{eq4}
%\aligned
\bm{E}_{k+1}
=\bm{E}_{k}-\Delta\bm{Y}_{k}
=\bm{E}_{k}+\bm{G}\bm{\Psi}_{k},\quad\forall k\in\mathbb{Z}_{+}
%\endaligned
\end{equation}

\noindent which describes a dynamic system evolving along the iteration axis $k$ and having a control input given by
\begin{equation}\label{eq5}
%\aligned
\bm{\Psi}_{k}
=-\Delta\bm{U}_{k},\quad\forall k\in\mathbb{Z}_{+}.
%\endaligned
\end{equation}

For distinction from the notations adopted in the system (\ref{eq2}), we present the following definition to introduce some notations of the linear discrete control system (\ref{eq4}).

\begin{defi}\label{defi1}
For the system (\ref{eq4}), $\bm{E}_{k}$ and $\bm{\Psi}_{k}$ are called the $k$-state and the $k$-input, respectively. This system, controlled with some certain $k$-state feedbacks, is said to be $k$-stable if, for any initial $k$-state condition, $\lim_{k\to\infty}\bm{E}_{k}=0$ can be accomplished.
\end{defi}

Based on Definition \ref{defi1}, the tracking objective (\ref{eq3}) is achieved for the system (\ref{eq2}) if and only if the $k$-stability is accomplished for the system (\ref{eq4}). As a consequence of this equivalence result, the design of the input $\bm{U}_{k}$ for the system (\ref{eq2}) to get the tracking objective (\ref{eq3}) can be transformed into designing the $k$-input $\bm{\Psi}_{k}$ for the system (\ref{eq4}) to realize the $k$-stability. It can be easily seen that the $k$-state is accessible for the feedback controller design of the system (\ref{eq4}). This renders the classic state feedback-based design tools in the Kalman state-space framework available for the ILC design of the system (\ref{eq2}) in the presence of the output tracking task (\ref{eq3}), and consequently available for the problem-solving of the LAE (\ref{eq1}). Motivated by such an observation, we will further develop a {\it trackability} property for the system \eqref{eq2}, which matches exactly with the {\it solvability} of the LAE \eqref{eq1}, and disclose the close relation between them and the {\it controllability} of the system \eqref{eq4} (see also Fig. \ref{p1}) such that the solutions of the LAE \eqref{eq1} can be obtained by a $k$-state feedback-based method.

%%%%%%%%%%%%%%%%%%%%%%%%%%%%%%%%%%%%%%%%%%%%%%%%%%%%%%%%%%%%%%%%%%%%%%%%%%%%%%%%%%%%%%%%%%%%%%%%%%%%%%%%%%%%%%%%%%%%%%%%
\section{Trackability in ILC}\label{sec3}

In this section, we propose a fundamental class of trackability problems for the desired references in realizing the tracking tasks of ILC. We introduce the basic trackability properties and criteria, and disclose tight relations of them with the controlled systems of ILC at the same time.

\subsection{Trackability and Trackability Subspace}

Typically, existing ILC results aim at how to design updating laws of inputs to achieve the output tracking objectives and at how to analyze the convergence performances of their resulting ILC processes. However, it is lack of investigations for ILC to explore whether the desired references are trackable with some certain inputs. To address this crucial problem, the realizability assumption is often made directly for the desired references in the ILC literature, but its reasonability has not been developed. For the sake of our discussions on these fundamental problems of ILC, we introduce the following concepts of trackability and realizability for the desired references in ILC, and also present the related properties of the corresponding ILC systems.

\begin{defi}\label{defi2}
A desired reference $\bm{Y}_{d}$ is said to be trackable (respectively, realizable) for the system (\ref{eq2}) if there exists some (respectively, a unique) desired input $\bm{U}_{d}$ fulfilling the LAE (\ref{eq1}).
\end{defi}

\begin{defi}\label{defi4}
The system \eqref{eq2} is said to have the trackability property (respectively, the realizability property) if any desired reference $\bm{Y}_{d}\in\mathbb{R}^p$ is trackable (respectively, realizable) for the system \eqref{eq2}.
\end{defi}

With Definitions \ref{defi2} and \ref{defi4}, we connect the solvability problem of the LAE (\ref{eq1}) tightly to two basic problems of the trackability and the realizability in ILC for the system (\ref{eq2}).

\begin{rem}\label{rem1}
From Definition \ref{defi2}, it is clear that the trackability and realizability represent different concepts for characterizing the output tracking tasks of ILC systems. The realizability of a desired reference requires the uniqueness of the corresponding desired input for the controlled system, but by comparison, the trackability does not. In the literature of ILC, the realizability is usually adopted as a basic assumption, especially for under-actuated systems with less number of the inputs than that of the outputs (see, e.g., \cite{s:05,mm:171,sx:20}). Despite this fact, the trackability provides a more general property than the realizability for ILC by removing the uniqueness requirement, through which more fundamentally important problems in ILC may be explored.
\end{rem}
\begin{rem}\label{12}
From Definition \ref{defi4}, we know that an ILC system has the trackability (respectively, realizability) property if any desired reference is trackable (respectively, realizable). We can clearly see that these two properties for an ILC system are two inherent characteristics independent from the selections of the desired references. When the ILC system dose not possess the trackability property, only a portion of the desired references may be trackable, for which the corresponding desired inputs are available; and otherwise, it is impossible to determine any desired input by iterations. Thus, it is a crucial issue to identify whether the given desired reference is trackable before seeking control inputs for the system \eqref{eq2}.
\end{rem}

It is worth highlighting, however, that few results of ILC are devoted to exploring whether any desired reference of interest is trackable for the system \eqref{eq2}. If not, what properties do the trackable desired references have, and how do they connect to the characteristics of the system (\ref{eq2})? Furthermore, whether and how could some easy-to-check criteria be developed to validate the trackability of the desired references of interest? To address these fundamental problems, we focus on exploiting the output trackability of ILC for the system (\ref{eq2}) during accomplishing the tracking objective (\ref{eq3}) in the presence of any desired reference, and on characterizing the roles of the trackability of the desired references in performing the tracking tasks of ILC.

In Definition \ref{defi2}, the satisfaction of the LAE (\ref{eq1}) can guarantee that the linear combination of two trackable desired references for the system (\ref{eq2}) is also trackable. This evidently hints that all trackable desired references for the system (\ref{eq2}) span a subspace of $\mathbb{R}^{p}$. Let us denote this subspace as $\mathcal{Y}_{T}\subseteq\mathbb{R}^{p}$ that is directly called the {\it trackability subspace} of the system (\ref{eq2}) for simplicity and adopt $\bm{Y}_{d}\in\mathcal{Y}_{T}$ to denote any trackable $\bm{Y}_{d}$ in $\mathbb{R}^{p}$. It clearly follows from Definition \ref{defi4} that $\mathcal{Y}_{T}=\mathbb{R}^{p}$ holds if and only if the system \eqref{eq2} has the trackability property. However, by contrast, similar properties can not be analogously shown for the system (\ref{eq2}) when it is concerned with the realizable desired references owing to the requirement of uniqueness. This problem depends heavily upon the column independence of the matrix $\bm{G}$, which will be addressed with details in the following subsection.

\subsection{Trackability Criteria}

Let $\sn\bm{G}$ be the space that is spanned by the columns of $\bm{G}$. Let us also denote $\dim\left(\sn\bm{G}\right)=m$. Then we represent a set of basis vectors for $\sn\bm{G}$, defined by $\left\{h_{i}\in\mathbb{R}^{p}:i=1,2,\cdots,m\right\}$, with which we denote $\bm{H}_{1}=\left[h_{1},h_{2},\cdots,h_{m}\right]\in\mathbb{R}^{p\times m}$. Moreover, we choose some matrix $\bm{H}_{2}=\left[h_{m+1},h_{m+2},\cdots,h_{p}\right]\in\mathbb{R}^{p\times(p-m)}$ such that $\bm{H}=\left[\bm{H}_{1}~\bm{H}_{2}\right]\in\mathbb{R}^{p\times p}$ is nonsingular. Correspondingly, we denote $\bm{H}^{-1}=\bm{F}$, and let $\bm{F}$ be in a structured form of
\[
\bm{F}=\begin{bmatrix}\bm{F}_{1}^{\tp}\\\bm{F}_{2}^{\tp}\end{bmatrix}
~\hbox{with}~\bm{F}_{1}\in\mathbb{R}^{p\times m},\bm{F}_{2}\in\mathbb{R}^{p\times(p-m)}.
\]

Clearly, $\bm{H}_{1}$, $\bm{H}_{2}$, $\bm{F}_{1}$, and $\bm{F}_{2}$ are full-column rank matrices. These denotations also help develop a useful lemma as follows.

\begin{lem}\label{lem01}
For any matrix $\bm{G}$, the following statements hold:
\begin{enumerate}
\item
$\left(\sn\bm{G}\right)^{\bot}=\sn\bm{F}_{2}$ is such that $\sn\bm{G}\oplus\sn\bm{F}_{2}=\mathbb{R}^{p}$;

\item
$\left(\bm{H}_{1}\bm{F}_{1}^{\tp}\right)^2=\bm{H}_{1}\bm{F}_{1}^{\tp}$ and $\left(\bm{H}_{2}\bm{F}_{2}^{\tp}\right)^2=\bm{H}_{2}\bm{F}_{2}^{\tp}$ are idempotent matrices such that $\bm{H}_{1}\bm{F}_{1}^{\tp}+\bm{H}_{2}\bm{F}_{2}^{\tp}=I$.
\end{enumerate}
\end{lem}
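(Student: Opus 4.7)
The plan is to extract everything from the two matrix identities $\bm{F}\bm{H}=I$ and $\bm{H}\bm{F}=I$, which are available because $\bm{H}$ was chosen nonsingular with inverse $\bm{F}$. Partitioning $\bm{H}=[\bm{H}_{1}~\bm{H}_{2}]$ and $\bm{F}=[\bm{F}_{1}~\bm{F}_{2}]^{\tp}$ according to the stated dimensions, the identity $\bm{F}\bm{H}=I$ immediately yields the four block relations $\bm{F}_{1}^{\tp}\bm{H}_{1}=I_{m}$, $\bm{F}_{1}^{\tp}\bm{H}_{2}=0$, $\bm{F}_{2}^{\tp}\bm{H}_{1}=0$, and $\bm{F}_{2}^{\tp}\bm{H}_{2}=I_{p-m}$, while $\bm{H}\bm{F}=I$ supplies the complementary identity $\bm{H}_{1}\bm{F}_{1}^{\tp}+\bm{H}_{2}\bm{F}_{2}^{\tp}=I_{p}$ outright. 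These block identities are the only machinery needed.

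For statement 1, I would first use $\bm{F}_{2}^{\tp}\bm{H}_{1}=0$ to observe that every column of $\bm{F}_{2}$ is orthogonal to every column of $\bm{H}_{1}$, and hence $\sn\bm{F}_{2}\subseteq(\sn\bm{H}_{1})^{\bot}=(\sn\bm{G})^{\bot}$, where the second equality uses that the columns of $\bm{H}_{1}$ were chosen as a basis of $\sn\bm{G}$. A dimension count then closes the inclusion: since $\bm{F}_{2}$ is full column rank one has $\dim(\sn\bm{F}_{2})=p-m$, and since $\dim(\sn\bm{G})=m$ one also has $\dim((\sn\bm{G})^{\bot})=p-m$, forcing $(\sn\bm{G})^{\bot}=\sn\bm{F}_{2}$. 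The direct sum decomposition $\sn\bm{G}\oplus\sn\bm{F}_{2}=\mathbb{R}^{p}$ is then the standard orthogonal complement fact in $\mathbb{R}^{p}$.

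For statement 2, the identity $\bm{H}_{1}\bm{F}_{1}^{\tp}+\bm{H}_{2}\bm{F}_{2}^{\tp}=I$ is read off directly from $\bm{H}\bm{F}=I$. Idempotency then follows by inserting the inner block identities: compute $(\bm{H}_{1}\bm{F}_{1}^{\tp})^{2}=\bm{H}_{1}(\bm{F}_{1}^{\tp}\bm{H}_{1})\bm{F}_{1}^{\tp}=\bm{H}_{1}I_{m}\bm{F}_{1}^{\tp}=\bm{H}_{1}\bm{F}_{1}^{\tp}$, and analogously $(\bm{H}_{2}\bm{F}_{2}^{\tp})^{2}=\bm{H}_{2}(\bm{F}_{2}^{\tp}\bm{H}_{2})\bm{F}_{2}^{\tp}=\bm{H}_{2}\bm{F}_{2}^{\tp}$ using $\bm{F}_{2}^{\tp}\bm{H}_{2}=I_{p-m}$.

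There is really no hard step here; the proof is essentially bookkeeping with the partitioned inverse. The only point that deserves care is statement 1, where one must not try to prove the orthogonality by some property of the specific construction of $\bm{H}_{2}$ (which is only required to make $\bm{H}$ nonsingular and need not itself be orthogonal to $\sn\bm{G}$). The correct route is the indirect one through $\bm{F}_{2}$: even though $\bm{H}_{2}$ is arbitrary up to nonsingularity of $\bm{H}$, the corresponding rows of $\bm{F}$ are automatically forced to be annihilators of $\sn\bm{G}$, and the dimension count then pins them down as exactly $(\sn\bm{G})^{\bot}$.
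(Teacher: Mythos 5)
Your proposal is correct and follows essentially the same route as the paper: both parts are read off from the block identities in $\bm{F}\bm{H}=I$ and $\bm{H}\bm{F}=I$, with statement 1 obtained from $\bm{F}_{2}^{\tp}\bm{H}_{1}=0$ together with $\sn\bm{G}=\sn\bm{H}_{1}$ and a dimension count, and statement 2 from $\bm{F}_{1}^{\tp}\bm{H}_{1}=I$, $\bm{F}_{2}^{\tp}\bm{H}_{2}=I$, and $\bm{H}\bm{F}=I$. Your explicit inclusion-plus-dimension argument merely spells out what the paper states more tersely, so there is nothing to add.
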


\begin{proof}
``1):'' From $\bm{F}\bm{H}=I$, $\bm{F}_{2}^{\tp}\bm{H}_{1}=0$ follows. Since the columns of $\bm{H}_{1}$ form a set of basis vectors for $\sn\bm{G}$, we know $\sn\bm{G}=\sn\bm{H}_{1}$. From these two facts, we can obtain
\begin{equation}\label{eq6}
\bm{\vartheta}^{\tp}\bm{\theta}=0,\quad\forall\bm{\theta}\in\sn\bm{G},\forall\bm{\vartheta}\in\sn\bm{F}_{2}.
\end{equation}

\noindent As a consequence of (\ref{eq6}), we can arrive at $\left(\sn\bm{G}\right)^{\bot}=\sn\bm{F}_{2}$, and then $\sn\bm{G}\oplus\sn\bm{F}_{2}=\mathbb{R}^{p}$ by considering $\dim\left(\sn\bm{G}\right)=m$ and $\dim\left(\sn\bm{F}_{2}\right)=p-m$.

``2):'' From $\bm{F}\bm{H}=I$, it is direct to get $\bm{F}_{1}^{\tp}\bm{H}_{1}=I$ and $\bm{F}_{2}^{\tp}\bm{H}_{2}=I$. This leads to that $\left(\bm{H}_{1}\bm{F}_{1}^{\tp}\right)^2=\bm{H}_{1}\bm{F}_{1}^{\tp}$ and $\left(\bm{H}_{2}\bm{F}_{2}^{\tp}\right)^2=\bm{H}_{2}\bm{F}_{2}^{\tp}$ are idempotent. Moreover, $\bm{H}_{1}\bm{F}_{1}^{\tp}+\bm{H}_{2}\bm{F}_{2}^{\tp}=I$ is a straightforward consequence of $\bm{H}\bm{F}=I$.
\end{proof}

To proceed with Lemma \ref{lem01}, we establish a trackability result of any desired reference in ILC.

\begin{lem}\label{lem02}
For any desired reference $\bm{Y}_{d}\in\mathbb{R}^{p}$ of the system (\ref{eq2}), the following statements are equivalent:
\begin{enumerate}
\item
$\bm{Y}_{d}\in\mathcal{Y}_{T}$;

\item
$\bm{Y}_{d}\in\sn\bm{G}$;

\item
$\bm{F}_{2}^{\tp}\bm{Y}_{d}=0$;

\item
$\bm{H}_{1}\bm{F}_{1}^{\tp}\bm{Y}_d=\bm{Y}_d$.
\end{enumerate}
\end{lem}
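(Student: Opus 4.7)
The plan is to prove the four-way equivalence by a short cycle $\text{1)}\Rightarrow\text{2)}\Rightarrow\text{3)}\Rightarrow\text{4)}\Rightarrow\text{1)}$, since Lemma~\ref{lem01} already supplies essentially everything needed. The crux is to recognize that all four statements express, in different notations, the geometric fact that $\bm{Y}_d$ lies in $\sn\bm{G}$. No serious obstacle is expected; the work is just in translating between the algebraic, range-theoretic, orthogonal, and projection formulations.

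First, for $\text{1)}\Leftrightarrow\text{2)}$, I would unfold Definition~\ref{defi2}: $\bm{Y}_d\in\mathcal{Y}_T$ means there is some $\bm{U}_d$ with $\bm{Y}_d=\bm{G}\bm{U}_d$, which is by definition the same as $\bm{Y}_d\in\sn\bm{G}$. For $\text{2)}\Rightarrow\text{3)}$, I would invoke Lemma~\ref{lem01}.1, which gives $(\sn\bm{G})^{\bot}=\sn\bm{F}_2$; since $\bm{Y}_d\in\sn\bm{G}$, it must be orthogonal to every column of $\bm{F}_2$, so $\bm{F}_2^{\tp}\bm{Y}_d=0$.

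Next, for $\text{3)}\Rightarrow\text{4)}$, I would use the resolution of identity from Lemma~\ref{lem01}.2: $\bm{H}_1\bm{F}_1^{\tp}+\bm{H}_2\bm{F}_2^{\tp}=I$. Applying both sides to $\bm{Y}_d$ and using $\bm{F}_2^{\tp}\bm{Y}_d=0$ immediately yields $\bm{H}_1\bm{F}_1^{\tp}\bm{Y}_d=\bm{Y}_d$. Finally, for $\text{4)}\Rightarrow\text{1)}$, I note that $\bm{H}_1\bm{F}_1^{\tp}\bm{Y}_d$ is a linear combination of the columns of $\bm{H}_1$, hence lies in $\sn\bm{H}_1=\sn\bm{G}$; therefore $\bm{Y}_d\in\sn\bm{G}$, i.e., there exists $\bm{U}_d$ solving \eqref{eq1}, which by Definition~\ref{defi2} is the statement $\bm{Y}_d\in\mathcal{Y}_T$.

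The only point deserving a little care is the identification $\sn\bm{G}=\sn\bm{H}_1$, which I would justify once (from the construction of $\bm{H}_1$ as a basis of $\sn\bm{G}$) and reuse. Since Lemma~\ref{lem01} has done the heavy lifting, the proof is essentially bookkeeping, and I expect it to be quite short.
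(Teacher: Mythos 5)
Your proposal is correct and follows essentially the same route as the paper: the identical cycle $1)\Rightarrow2)\Rightarrow3)\Rightarrow4)\Rightarrow1)$ using Lemma~\ref{lem01} for the middle steps and $\sn\bm{G}=\sn\bm{H}_{1}$ for the closing step. The only cosmetic difference is that the paper makes the last implication constructive by writing $\bm{H}_{1}=\bm{G}\bm{H}_{3}$ and exhibiting $\bm{U}_{d}=\bm{H}_{3}\bm{F}_{1}^{\tp}\bm{Y}_{d}$ explicitly, whereas you argue existence via $\bm{Y}_{d}\in\sn\bm{H}_{1}=\sn\bm{G}$; both are valid.
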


\begin{proof}
We prove the equivalences among four statements in a circular manner.

``1)$\Rightarrow$2):'' If $\bm{Y}_{d}\in\mathcal{Y}_{T}$, then the LAE (\ref{eq1}) holds for some $\bm{U}_{d}$ from Definition \ref{defi2}. Clearly, (\ref{eq1}) implies $\bm{Y}_{d}\in\sn\bm{G}$.

``2)$\Rightarrow$3):'' With $\left(\sn\bm{G}\right)^{\bot}=\sn\bm{F}_{2}$ in Lemma \ref{lem01}, $\bm{F}_{2}^{\tp}\bm{Y}_{d}=0$ is an immediate result of $\bm{Y}_{d}\in\sn\bm{G}$.

``3)$\Rightarrow$4):'' By $\bm{H}_{1}\bm{F}_{1}^{\tp}+\bm{H}_{2}\bm{F}_{2}^{\tp}=I$ in Lemma \ref{lem01} together with $\bm{F}_{2}^{\tp}\bm{Y}_{d}=0$, we can deduce
\[
\bm{H}_{1}\bm{F}_{1}^{\tp}\bm{Y}_d=\left(I-\bm{H}_{2}\bm{F}_{2}^{\tp}\right)\bm{Y}_d=\bm{Y}_d.
\]

``4)$\Rightarrow$1):'' Owing to $\sn\bm{G}=\sn\bm{H}_{1}$, $\bm{H}_{1}=\bm{G}\bm{H}_{3}$ holds for some $\bm{H}_{3}\in\mathbb{R}^{q\times m}$. We thus define $\bm{U}_{d}=\bm{H}_{3}\bm{F}_{1}^{\tp}\bm{Y}_{d}$ and it holds
\[
\bm{Y}_{d}=\bm{H}_{1}\bm{F}_{1}^{\tp}\bm{Y}_{d}=\bm{G}\left(\bm{H}_{3}\bm{F}_{1}^{\tp}\bm{Y}_{d}\right)=\bm{G}\bm{U}_{d}
\]

\noindent namely, the LAE (\ref{eq1}) holds. According to Definition \ref{defi2}, $\bm{Y}_{d}\in\mathcal{Y}_{T}$ follows immediately.
\end{proof}

By moving Lemma \ref{lem02} a bit further, we present a realizability result of any desired reference in ILC.

\begin{lem}\label{lem03}
Consider the system (\ref{eq2}) with any $\bm{Y}_{d}\in\mathcal{Y}_{T}$. Then $\bm{Y}_{d}$ is realizable if and only if $\rank\left(\bm{G}\right)=q$.
\end{lem}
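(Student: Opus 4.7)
The plan is to reduce realizability to the uniqueness of the solution of the LAE \eqref{eq1} and then characterize that uniqueness through the null space of $\bm{G}$. Since $\bm{Y}_d\in\mathcal{Y}_T$ is assumed trackable, Lemma \ref{lem02} guarantees at least one desired input $\bm{U}_d$ with $\bm{Y}_d=\bm{G}\bm{U}_d$, so by Definition \ref{defi2} realizability is equivalent to this $\bm{U}_d$ being the only such vector. The claim therefore becomes: the equation $\bm{G}\bm{U}_d=\bm{Y}_d$ has a unique solution if and only if $\rank(\bm{G})=q$, which is a textbook linear-algebra statement but needs to be phrased carefully in the notation of the paper.

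For sufficiency, I would assume $\rank(\bm{G})=q$ so that $\bm{G}\in\mathbb{R}^{p\times q}$ has full column rank and hence $\nl\bm{G}=\{0\}$. If two desired inputs $\bm{U}_d^{(1)}$ and $\bm{U}_d^{(2)}$ both satisfy the LAE \eqref{eq1}, then $\bm{G}(\bm{U}_d^{(1)}-\bm{U}_d^{(2)})=0$, which forces $\bm{U}_d^{(1)}=\bm{U}_d^{(2)}$ and yields realizability.

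For necessity, I would argue by contraposition: suppose $\rank(\bm{G})<q$, so that $\nl\bm{G}$ contains some nonzero vector $\bm{v}\in\mathbb{R}^q$. Then, starting from the desired input $\bm{U}_d$ guaranteed by trackability, the vector $\bm{U}_d+\bm{v}$ also satisfies $\bm{G}(\bm{U}_d+\bm{v})=\bm{G}\bm{U}_d=\bm{Y}_d$, producing a second, distinct desired input and violating realizability. The main (and really only) thing to watch is that the construction in this direction is valid for the fixed $\bm{Y}_d$ under consideration because $\nl\bm{G}$ is intrinsic to the mapping $\bm{G}$ and does not depend on $\bm{Y}_d$; there is no genuine obstacle beyond this bookkeeping.
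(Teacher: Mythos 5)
Your proof is correct and follows essentially the same route as the paper: reduce realizability to uniqueness of the solution of \eqref{eq1} and characterize that uniqueness via $\nl\bm{G}=\{0\}$, i.e., $\rank(\bm{G})=q$. The only difference is that you spell out the necessity direction explicitly (constructing $\bm{U}_d+\bm{v}$ with $\bm{v}\in\nl\bm{G}$ nonzero), whereas the paper states the equivalence $\bm{U}_d^1-\bm{U}_d^2=0\Leftrightarrow\rank(\bm{G})=q$ without elaboration; your added detail is a welcome clarification rather than a departure.
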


\begin{proof}
For $\bm{Y}_{d}\in\mathcal{Y}_{T}$, let $\bm{U}_{d}^{1}$ and $\bm{U}_{d}^{2}$ be two desired inputs that both fulfill the LAE (\ref{eq1}), namely, $\bm{Y}_{d}=\bm{G}\bm{U}_{d}^{1}=\bm{G}\bm{U}_{d}^{2}$. This obviously leads to $\bm{G}\left(\bm{U}_{d}^{1}-\bm{U}_{d}^{2}\right)=0$. Since $\bm{U}_{d}^{1}$ and $\bm{U}_{d}^{2}$ denote any two desired inputs satisfying (\ref{eq1}), we can validate
\[%\aligned
\bm{U}_{d}^{1}-\bm{U}_{d}^{2}=0
~\Leftrightarrow~\rank\left(\bm{G}\right)=q
%\endaligned
\]

\noindent which, together with Definition \ref{defi2}, implies that $\bm{Y}_{d}$ is realizable if and only if $\rank\left(\bm{G}\right)=q$.
\end{proof}

%With Lemma \ref{lem03}, we reveal that the realizability of the desired references in ILC of the system (\ref{eq2}) requires the matrix $\bm{G}$ being of full-column rank. Under the satisfaction of this requirement, we can obtain that the linear combination of any two realizable desired references for the system (\ref{eq2}) is also realizable. Thus, it follows that all realizable desired references for the system (\ref{eq2}) span a subspace of $\mathbb{R}^{p}$, denoted by $\mathcal{Y}_{R}\subseteq\mathbb{R}^{p}$. We directly call $\mathcal{Y}_{R}$ the realizability subspace of the system (\ref{eq2}) for convenience and use $\bm{Y}_{d}\in\mathcal{Y}_{R}$ to denote any realizable $\bm{Y}_{d}$ in $\mathbb{R}^{p}$.

%Lemma \ref{lem03} also gives an implication that for any given system \eqref{eq2}, if there exists a realizable nonzero desired reference $\bm{Y}_{d}$, then all the nonzero trackable desired references are realizable, namely, either $\mathcal{Y}_{R}=\mathcal{Y}_{T}\subseteq\mathbb{R}^{p}$ or $\mathcal{Y}_{R}=\{0\}$ holds.

With Lemma \ref{lem03}, we reveal that the realizability of the desired references in ILC of the system (\ref{eq2}) requires the matrix $\bm{G}$ being of full-column rank. Under the satisfaction of this requirement, we can obtain that the linear combination of any two realizable desired references for the system (\ref{eq2}) is also realizable. Thus, it follows that all realizable desired references for the system (\ref{eq2}) span a subspace of $\mathbb{R}^{p}$, denoted by $\mathcal{Y}_{R}\subseteq\mathbb{R}^{p}$. We directly call $\mathcal{Y}_{R}$ the {\it realizability subspace} of the system (\ref{eq2}) for convenience and use $\bm{Y}_{d}\in\mathcal{Y}_{R}$ to denote any realizable $\bm{Y}_{d}$ in $\mathbb{R}^{p}$. By Lemma \ref{lem03}, we particularly have $\mathcal{Y}_{R}\subseteq\mathcal{Y}_{T}\subseteq\mathbb{R}^{p}$.

Based on Lemmas \ref{lem02} and \ref{lem03}, the following theorem introduces basic trackability and realizability properties in ILC.

\begin{thm}\label{thm01}
For the system (\ref{eq2}), the following two properties can be developed.
\begin{enumerate}
\item
{\it Trackability:} $\mathcal{Y}_{T}=\sn\bm{G}$ holds, and in particular, $\mathcal{Y}_{T}=\mathbb{R}^{p}$ if and only if $\rank\left(\bm{G}\right)=p$.

\item
{\it Realizability:} Either $\mathcal{Y}_{R}=\mathcal{Y}_{T}$ or $\mathcal{Y}_{R}=\{0\}$ holds, where $\mathcal{Y}_{R}=\mathcal{Y}_{T}$ if and only if $\rank\left(\bm{G}\right)=q$.

%\item
%{\it Realizability:} Either $\mathcal{Y}_{R}=\mathcal{Y}_{T}$ or $\mathcal{Y}_{R}=\left\{0\right\}$ holds, where $\mathcal{Y}_{R}=\mathcal{Y}_{T}$ if and only if $\rank\left(\bm{G}\right)=q$.
\end{enumerate}
\end{thm}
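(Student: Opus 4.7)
The plan is to obtain both statements as direct consequences of Lemmas \ref{lem02} and \ref{lem03}, so almost no new work is needed beyond a careful set-theoretic assembly.

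For the trackability claim, I would read the equivalence 1)$\Leftrightarrow$2) of Lemma \ref{lem02} element-wise as $\bm{Y}_{d}\in\mathcal{Y}_{T}\Leftrightarrow\bm{Y}_{d}\in\sn\bm{G}$ for every $\bm{Y}_{d}\in\mathbb{R}^{p}$, from which $\mathcal{Y}_{T}=\sn\bm{G}$ follows as an equality of subsets of $\mathbb{R}^{p}$. The ``in particular'' half then collapses to the standard linear-algebra fact that $\sn\bm{G}=\mathbb{R}^{p}$ iff $\dim(\sn\bm{G})=p$, i.e., $\rank(\bm{G})=p$.

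For the realizability claim, I would split on whether $\bm{G}$ has full column rank. If $\rank(\bm{G})=q$, Lemma \ref{lem03} says every trackable reference is realizable, which yields $\mathcal{Y}_{T}\subseteq\mathcal{Y}_{R}$; the reverse inclusion is immediate from the definitions, since realizability is strictly stronger than trackability, so $\mathcal{Y}_{R}=\mathcal{Y}_{T}$. If instead $\rank(\bm{G})<q$, then $\nl\bm{G}\neq\{0\}$ and, for any $\bm{Y}_{d}\in\mathcal{Y}_{T}$, a solution of $\bm{Y}_{d}=\bm{G}\bm{U}_{d}$ can always be perturbed by a nonzero null vector to produce a second solution, so no $\bm{Y}_{d}$ is realizable. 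Taking the span of the (empty) collection of realizable references then yields $\mathcal{Y}_{R}=\{0\}$.

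The main point requiring care is the degenerate reference $\bm{Y}_{d}=0$ in the second case: formally $\bm{Y}_{d}=0$ admits infinitely many inputs whenever $\nl\bm{G}\neq\{0\}$, so it is not realizable in the sense of Definition \ref{defi2}. The conclusion $\mathcal{Y}_{R}=\{0\}$ is nevertheless consistent once one interprets $\mathcal{Y}_{R}$ as the \emph{linear span} of the realizable references (as set up immediately before Lemma \ref{lem03}), so that the span of the empty collection is $\{0\}$ rather than relying on $0$ itself being a realizable reference. I expect this bookkeeping, not the algebra, to be the only stumbling block; the rest is a direct assembly.
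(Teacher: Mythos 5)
Your proposal is correct and follows essentially the same route as the paper: part 1) is read off from the equivalence 1)$\Leftrightarrow$2) of Lemma \ref{lem02} plus the dimension count, and part 2) splits on $\rank(\bm{G})=q$ via Lemma \ref{lem03}, with the rank-deficient case handled by noting that a nontrivial $\nl\bm{G}$ destroys uniqueness of the desired input (the paper cites a textbook theorem where you give the direct null-vector perturbation, which is the same fact). Your extra remark on the degenerate reference $\bm{Y}_{d}=0$ and on reading $\mathcal{Y}_{R}$ as a span is a slightly more careful bookkeeping of a point the paper glosses over, but it does not change the argument.
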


\begin{proof}
``1):'' From the equivalent results between 1) and 2) in Lemma \ref{lem02}, $\mathcal{Y}_{T}=\sn\bm{G}$ follows directly. We, in particular, can arrive at
\[
\mathcal{Y}_{T}=\mathbb{R}^{p}\Leftrightarrow~\sn\bm{G}=\mathbb{R}^{p}\Leftrightarrow~\dim\left(\sn\bm{G}\right)=p\Leftrightarrow~\rank\left(\bm{G}\right)=p.
\]

\noindent Thus, the trackability result 1) is obtained.

``2):'' With Lemma \ref{lem03}, we can verify that $\mathcal{Y}_{R}=\mathcal{Y}_{T}$ if and only if $\rank\left(\bm{G}\right)=q$. Otherwise, let $\rank\left(\bm{G}\right)\neq q$. Then $\rank\left(\bm{G}\right)=m<q$ holds, and for the null space $\nl\bm{G}$ of $\bm{G}$, i.e.,
\[
\nl\bm{G}
=\left\{\bm{\phi}\in\mathbb{R}^{q}\big|\bm{G}\bm{\phi}=0\right\}
\]

\noindent we have $\nl\bm{G}=\left(\sn\bm{G}^{\tp}\right)^{\bot}$, and as a consequence,
\begin{equation}\label{eq7}
\dim\left(\nl\bm{G}\right)
=q-\dim\left(\sn\bm{G}^{\tp}\right)
=q-m>0.
\end{equation}

\noindent With (\ref{eq7}), there exist more than one desired inputs that generate every nonzero trackable desired reference based on Theorem 2 of \cite[Subchapter 3.10]{lt:85}. Namely, for every nonzero $\bm{Y}_{d}\in\mathcal{Y}_{T}$, we have $\bm{Y}_{d}\not\in\mathcal{Y}_{R}$, and thus $\mathcal{Y}_{R}=\{0\}$ holds owing to $\mathcal{Y}_{R}\subseteq\mathcal{Y}_{T}$. The realizability result 2) is developed.
\end{proof}

\begin{rem}\label{rem2}
By $\mathcal{Y}_{T}=\sn\bm{G}$, it discloses that the trackability subspace of the system (\ref{eq2}) is exactly the spanning space for the columns of $\bm{G}$. This, however, is generally no longer applicable for the realizability subspace $\mathcal{Y}_{R}$ of the system (\ref{eq2}), which only works (namely, $\mathcal{Y}_{R}=\sn\bm{G}$) if and only if $\bm{G}$ is of full-column rank. Otherwise, $\mathcal{Y}_{R}=\{0\}$ emerges, namely, there do not exist any nonzero realizable desired references. Therefore, we reveal with Theorem \ref{thm01} that the trackability for ILC systems is a much more available property than the realizability. Simultaneously, for the system (\ref{eq2}), the trackability and realizability properties can also be determined according to Theorem \ref{thm01}. To be specific, the system \eqref{eq2} has the trackability (respectively, realizability) property if and only if $\rank\left(\bm{G}\right)=p$ holds (respectively, $\bm{G}$ is an invertible square matrix).
\end{rem}

%With Theorem \ref{thm01}, the trackability and realizability of the ILC systems can be further obtained in the following corollary.
%
%\begin{cor}\label{cor05}
%For the system \eqref{eq2}, it is
%\begin{enumerate}
%\item trackable, if and only if $\rank\left(\bm{G}\right)=p$.
%\item realizable, if and only if $\bm{G}$ is an invertible square matrix, namely, $p=q$ and $\rank\left(\bm{G}\right)=p$ hold.
%\end{enumerate}
%\end{cor}
%
%\begin{proof}
%Corollary \ref{cor05} is a direct consequence of Theorem \ref{thm01} by incorporating Definition \ref{defi4}.
%\end{proof}

%%%%%%%%%%%%%%%%%%%%%%%%%%%%%%%%%%%%%%%%%%%%%%%%%%%%%%%%%%%%%%%%%%%%%%%%%%%%%%%%%%%%%%%%%%%%%%%%%%%%%%%%%%%%%%%%%%%%%%%%
\section{Controllability in ILC}\label{sec4}

In this section, we introduce the fundamental controllability property and its related problems into the tracking tasks of ILC, which helps to bring a viewpoint of addressing the tracking problems of ILC, and thus of dealing with the problem-solving of LAEs, from the classic theories and methods of systems and control. We develop the ideas for controllability, controllability subspace, controllability criteria, and controllability decomposition of the linear discrete systems (see, e.g., \cite[Chapter 3]{am:06}) to ILC systems.

\subsection{Controllability and Controllability Subspace}

%With the basic controllability results, we can realize that the linear combination of two controllable $k$-states

Next, we turn to explore the tracking error system (\ref{eq4}) that is formulated in the state-space form of a linear discrete system. To this end, let us introduce two concepts for the controllability of the linear discrete systems (see, e.g., \cite[Chapter 3]{am:06}).

\begin{defi}\label{defi5}
For the system \eqref{eq4}, a nonzero $k$-state $\bm{\beta}$ in $\mathbb{R}^{p}$ is said to be controllable if, for some finite $l\in\mathbb{Z}_+$, there exists some $k$-input sequence $\bm{\Psi}_k$, $\forall k\in\mathbb{Z}_{l-1}$ that transfers the $k$-state from $\bm{\beta}$ to the origin at the $l$th iteration.
\end{defi}

\begin{defi}\label{defi6}
For the system (\ref{eq4}), if all, some but not all, and none of the nonzero $k$-states in $\mathbb{R}^{p}$ are controllable, respectively, then (\ref{eq4}) is said to be completely controllable, incompletely controllable, and completely uncontrollable, respectively.
%
%The system \eqref{eq4} is, respectively, said to be
%\begin{itemize}
%\item
%completely controllable if all of the nonzero $k$-states in $\mathbb{R}^{p}$ are controllable;
%
%\item
%incompletely controllable if a portion, but not all, of the nonzero $k$-states in $\mathbb{R}^{p}$ are controllable;
%
%\item
%completely uncontrollable if none of the nonzero $k$-states in $\mathbb{R}^{p}$ are controllable.
%\end{itemize}
\end{defi}

%By noticing that $\bm{E}_k=\bm{E}_0+\sum_{i=0}^{k-1}\bm{G}\bm{\Psi}_i$, we can verify that the linear combination of two controllable $k$-states for the system (\ref{eq4}) is also controllable. Hence,

With Definition \ref{defi5}, we can validate that all the controllable $k$-states of the system \eqref{eq4}, together with the null vector, constitute a subspace of $\mathbb{R}^{p}$. We call it the {\it controllability subspace} of the system \eqref{eq4}, which is denoted by $\mathcal{C}_{C}$. It follows from Definition \ref{defi6} that $\mathcal{C}_{C}=\mathbb{R}^p$ holds if and only if the system \eqref{eq4} is completely controllable. For clarity, let $\mathcal{C}_{NC}=\mathcal{C}_{C}^{\bot}$ be the {\it uncontrollability subspace} of the system (\ref{eq4}). We thus have $\mathcal{C}_{C}\oplus\mathcal{C}_{NC}=\mathbb{R}^{p}$, and can develop specific properties of both $\mathcal{C}_{C}$ and $\mathcal{C}_{NC}$ as follows.

\begin{thm}\label{thm03}
For the system (\ref{eq4}), the following controllability properties hold:
\begin{enumerate}
\item
$\mathcal{C}_{NC}=\nl\bm{G}^{\tp}$;

\item
$\mathcal{C}_{C}=\sn\bm{G}$.
\end{enumerate}
\end{thm}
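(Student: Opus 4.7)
The plan is to prove statement 2 first and then derive statement 1 as an immediate corollary via the orthogonal complement relationship $\mathcal{C}_{NC}=\mathcal{C}_{C}^{\bot}$ together with the standard linear algebra identity $\left(\sn\bm{G}\right)^{\bot}=\nl\bm{G}^{\tp}$ (which is part of Lemma \ref{lem01} with $\bm{F}_2$ a basis matrix for $\nl\bm{G}^{\tp}$).

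For statement 2, I would first iterate the tracking error dynamics (\ref{eq4}) explicitly. Because the $k$-state matrix of (\ref{eq4}) is the identity, starting from any initial $k$-state $\bm{E}_0=\bm{\beta}$ and applying inputs $\bm{\Psi}_0,\bm{\Psi}_1,\dots,\bm{\Psi}_{l-1}$ produces
\[
\bm{E}_{l}=\bm{\beta}+\bm{G}\sum_{k=0}^{l-1}\bm{\Psi}_{k}.
\]
By Definition \ref{defi5}, $\bm{\beta}$ is controllable if and only if there exist some finite $l\in\mathbb{Z}_+$ and some input sequence such that $\bm{E}_l=0$, i.e., if and only if $\bm{\beta}=-\bm{G}\sum_{k=0}^{l-1}\bm{\Psi}_k$ for some choice of inputs. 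This is equivalent to $\bm{\beta}\in\sn\bm{G}$.

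To finish the ``if'' direction cleanly, I would exhibit an explicit input: given $\bm{\beta}\in\sn\bm{G}$, write $\bm{\beta}=\bm{G}\bm{\alpha}$ for some $\bm{\alpha}\in\mathbb{R}^{q}$, and take $l=1$ with $\bm{\Psi}_0=-\bm{\alpha}$, which yields $\bm{E}_1=\bm{\beta}-\bm{G}\bm{\alpha}=0$. This shows $\sn\bm{G}\subseteq\mathcal{C}_C$. The reverse inclusion $\mathcal{C}_C\subseteq\sn\bm{G}$ follows immediately from the displayed formula above, since every controllable $\bm{\beta}$ must lie in $\sn\bm{G}$. Combining both inclusions (and noting that the zero vector trivially belongs to $\sn\bm{G}$) yields $\mathcal{C}_C=\sn\bm{G}$.

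For statement 1, since $\mathcal{C}_{NC}=\mathcal{C}_{C}^{\bot}$ by definition and $\mathcal{C}_C=\sn\bm{G}$ just established, I would conclude $\mathcal{C}_{NC}=\left(\sn\bm{G}\right)^{\bot}=\nl\bm{G}^{\tp}$. I do not anticipate a serious obstacle in this proof: the entire argument hinges on the fact that the $k$-state matrix in (\ref{eq4}) is the identity, so controllability collapses to reachability in a single step and coincides with membership in the column span of the input matrix $\bm{G}$. The only subtlety worth stating carefully is the one-step construction used to establish the ``if'' direction, which also has the useful corollary that any controllable $k$-state can in fact be transferred to the origin at $l=1$.
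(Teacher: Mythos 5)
Your proof is correct, but it reverses the logical order of the paper's argument and uses a more elementary mechanism for the key step. The paper first establishes $\mathcal{C}_{NC}=\nl\bm{G}^{\tp}$ by two inclusions: for $\bm{\alpha}\in\mathcal{C}_{NC}$ it tests $\bm{\alpha}$ against the particular controllable state generated by the input choice $\bm{\Psi}_{k}=-l^{-1}\bm{G}^{\tp}\bm{\alpha}$, arriving at $\bm{\alpha}^{\tp}\bm{G}\bm{G}^{\tp}\bm{\alpha}=0$ and hence $\bm{G}^{\tp}\bm{\alpha}=0$ via a quadratic-form argument; the reverse inclusion is immediate. It then deduces $\mathcal{C}_{C}=\sn\bm{G}$ from statement 1) by combining the inclusion $\sn\bm{G}\subseteq\mathcal{C}_{C}$ with a dimension count using $\mathcal{C}_{C}\oplus\mathcal{C}_{NC}=\mathbb{R}^{p}$. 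You instead read off $\mathcal{C}_{C}=\sn\bm{G}$ directly from the explicit solution $\bm{E}_{l}=\bm{\beta}+\bm{G}\sum_{k=0}^{l-1}\bm{\Psi}_{k}$ (the paper in fact writes down this same characterization as its equation (8) but does not exploit it immediately), and then obtain statement 1) by orthogonal complementation together with the standard identity $\left(\sn\bm{G}\right)^{\bot}=\nl\bm{G}^{\tp}$. Your route is shorter, avoids both the quadratic-form trick and the dimension argument, and yields the useful by-product that every controllable $k$-state can be steered to the origin in a single iteration; the paper's route makes the structure of the uncontrollability subspace the primary object, which fits its later use of the controllability decomposition in Theorem \ref{thm05}. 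Both arguments are sound.
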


\begin{proof}
``1):'' For any controllable $k$-state $\bm{\beta}\in\mathcal{C}_{C}$, there exist some $l\in\mathbb{Z}_{+}$ and the corresponding input $\left\{\bm{\Psi}_{k}:\forall k\in\mathbb{Z}_{l-1}\right\}$ such that the solution of the system (\ref{eq4}) starting with $\bm{\beta}$ arrives at the origin after $l$ iterations, namely,
\[
0=\bm{E}_{l}
=\bm{E}_{0}+\sum_{k=0}^{l-1}\bm{G}\bm{\Psi}_{k}
=\bm{\beta}+\sum_{k=0}^{l-1}\bm{G}\bm{\Psi}_{k}.
\]

\noindent This implies that $\bm{\beta}\in\mathcal{C}_{C}$ is a controllable $k$-state for the system (\ref{eq4}) if and only if it can be described in the form of
\begin{equation}\label{eq8}
\bm{\beta}
=-\sum_{k=0}^{l-1}\bm{G}\bm{\Psi}_{k},\quad\forall\bm{\beta}\in\mathcal{C}_{C}.
\end{equation}

``$\mathcal{C}_{NC}\subseteq\nl\bm{G}^{\tp}$:'' For any $\bm{\alpha}\in\mathcal{C}_{NC}$, we can obtain $\bm{\alpha}^{\tp}\bm{\beta}=0$, $\forall\bm{\beta}\in\mathcal{C}_{C}$. By specifically considering the controllable $k$-state generated by $\bm{\Psi}_{k}=-l^{-1}\bm{G}^{\tp}\bm{\alpha}$, we can resort to (\ref{eq8}) to derive
\begin{equation*}\label{}
0=\bm{\alpha}^{\tp}\bm{\beta}
=-\bm{\alpha}^{\tp}\sum_{k=0}^{l-1}\bm{G}\bm{\Psi}_{k}
=\bm{\alpha}^{\tp}\bm{G}\bm{G}^{\tp}\bm{\alpha}
\end{equation*}

\noindent which leads to $\bm{G}^{\tp}\bm{\alpha}=0$, i.e., $\bm{\alpha}\in\nl\bm{G}^{\tp}$. This clearly implies $\mathcal{C}_{NC}\subseteq\nl\bm{G}^{\tp}$.

``$\mathcal{C}_{NC}\supseteq\nl\bm{G}^{\tp}$:'' For any $\bm{\alpha}\in\nl\bm{G}^{\tp}$, $\bm{\alpha}^{\tp}\bm{G}=0$ holds, and thus we can leverage (\ref{eq8}) to arrive at
\begin{equation*}\label{}
\bm{\alpha}^{\tp}\bm{\beta}
=-\sum_{k=0}^{l-1}\left(\bm{\alpha}^{\tp}\bm{G}\right)\bm{\Psi}_{k}
=0,\quad\forall\bm{\beta}\in\mathcal{C}_{C}
\end{equation*}

\noindent which ensures $\bm{\alpha}\in\mathcal{C}_{NC}$. Consequently, $\mathcal{C}_{NC}\supseteq\nl\bm{G}^{\tp}$ holds.

By combining $\mathcal{C}_{NC}\subseteq\nl\bm{G}^{\tp}$ with $\mathcal{C}_{NC}\supseteq\nl\bm{G}^{\tp}$, the result 1) can be immediately obtained.

``2):'' For any $\bm{\alpha}\in\mathcal{C}_{NC}$, the use of $\mathcal{C}_{NC}=\nl\bm{G}^{\tp}$ in 1) yields $\bm{\alpha}^{\tp}\bm{\beta}=0$, $\forall\bm{\beta}\in\sn\bm{G}$. This implies $\sn\bm{G}\subseteq\mathcal{C}_{C}$. Again using $\mathcal{C}_{NC}=\nl\bm{G}^{\tp}$, we can derive%$\mathcal{C}_{NC}=\left(\sn\bm{G}\right)^{\bot}$
\begin{equation}\label{eq9}
\dim\left(\mathcal{C}_{NC}\right)+\dim\left(\sn\bm{G}\right)=p.
\end{equation}

\noindent From $\mathcal{C}_{C}\oplus\mathcal{C}_{NC}=\mathbb{R}^{p}$, it is clear that $\dim\left(\mathcal{C}_{NC}\right)+\dim\left(\mathcal{C}_{C}\right)=p$, which, together with (\ref{eq9}), yields
\[
\dim\left(\sn\bm{G}\right)
=\dim\left(\mathcal{C}_{C}\right)
=p-\dim\left(\mathcal{C}_{NC}\right).
\]

\noindent Thanks to $\sn\bm{G}\subseteq\mathcal{C}_{C}$, we actually have $\mathcal{C}_{C}=\sn\bm{G}$. Namely, the result 2) holds.
\end{proof}

We can see from Theorem \ref{thm03} that the controllability of any $k$-state depends only on the matrix $\bm{G}$. The good property benefits from the specific structure of the system (\ref{eq4}) that is established from the tracking task (\ref{eq3}) of ILC for the system (\ref{eq2}). Obviously, we can use Theorem \ref{thm03} and Definition \ref{defi6} to, respectively, obtain $\sn\bm{G}=\mathbb{R}^{p}$, $\{0\}\subset\sn\bm{G}\subset\mathbb{R}^{p}$, and $\bm{G}=0$ when (\ref{eq4}) is completely controllable, incompletely controllable, and completely uncontrollable, respectively. Next, the complete controllability of the system (\ref{eq4}) is directly called controllability for simplicity.

\subsection{Controllability Criteria}

With the property of the controllable $k$-states in Theorem \ref{thm03}, we can present the controllability criteria for the system (\ref{eq4}).

\begin{thm}\label{thm04}
For the system (\ref{eq4}), it is controllable if and only if any of the following conditions holds:
\begin{enumerate}
\item
(\ref{eq4}) is $k$-stabilizable;

\item
$\rank\left(\bm{G}\right)=p$.
\end{enumerate}
\end{thm}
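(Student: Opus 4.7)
The plan is to establish the two equivalences separately, with Theorem \ref{thm03} doing most of the heavy lifting for condition 2) and the algebraic structure of system (\ref{eq4}) (state matrix $I$, input matrix $\bm{G}$) doing the work for condition 1).

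First I would dispatch condition 2) directly from Theorem \ref{thm03}. Since $\mathcal{C}_{C}=\sn\bm{G}$ was established there, and Definition \ref{defi6} combined with the sentence following Theorem \ref{thm03} says that (\ref{eq4}) is (completely) controllable if and only if $\mathcal{C}_{C}=\mathbb{R}^{p}$, we can chain equivalences: (\ref{eq4}) is controllable $\iff\sn\bm{G}=\mathbb{R}^{p}\iff\dim(\sn\bm{G})=p\iff\rank(\bm{G})=p$. This is a one-line argument.

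Next, for condition 1) I would prove the two directions separately. For the forward direction (controllability $\Rightarrow$ $k$-stabilizability), I would exploit the fact that under controllability we already know $\rank(\bm{G})=p$, so $\bm{G}\bm{G}^{\tp}$ is invertible and I can take the deadbeat $k$-state feedback $\bm{\Psi}_{k}=-\bm{G}^{\tp}(\bm{G}\bm{G}^{\tp})^{-1}\bm{E}_{k}$. Substituting into (\ref{eq4}) yields $\bm{E}_{k+1}=\bm{E}_{k}-\bm{G}\bm{G}^{\tp}(\bm{G}\bm{G}^{\tp})^{-1}\bm{E}_{k}=0$, so $\bm{E}_{k}=0$ for every $k\geq1$ and any initial $\bm{E}_{0}$, which is exactly $k$-stability in the sense of Definition \ref{defi1}.

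For the reverse direction ($k$-stabilizability $\Rightarrow$ controllability) I would argue by contraposition: suppose $\rank(\bm{G})<p$, so by Theorem \ref{thm03}, $\mathcal{C}_{NC}=\nl\bm{G}^{\tp}$ is nontrivial and contains some nonzero $\bm{\alpha}$ with $\bm{\alpha}^{\tp}\bm{G}=0$. For any $k$-state feedback producing an input sequence $\bm{\Psi}_{k}$ from $\bm{E}_{k}$, applying $\bm{\alpha}^{\tp}$ to (\ref{eq4}) gives $\bm{\alpha}^{\tp}\bm{E}_{k+1}=\bm{\alpha}^{\tp}\bm{E}_{k}+(\bm{\alpha}^{\tp}\bm{G})\bm{\Psi}_{k}=\bm{\alpha}^{\tp}\bm{E}_{k}$, so the scalar $\bm{\alpha}^{\tp}\bm{E}_{k}$ is preserved forever. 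Choosing the initial condition $\bm{E}_{0}=\bm{\alpha}$ yields $\bm{\alpha}^{\tp}\bm{E}_{k}=\|\bm{\alpha}\|_{2}^{2}>0$ for all $k$, which precludes $\lim_{k\to\infty}\bm{E}_{k}=0$. Hence no feedback can achieve $k$-stability, contradicting $k$-stabilizability.

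The main obstacle is a conceptual rather than computational one: since the notion of $k$-state feedback is introduced somewhat informally in Definition \ref{defi1} without restricting to a particular parametric class, the reverse direction for 1) must accommodate essentially arbitrary feedback laws. The key observation that unlocks this step is that $\bm{\alpha}^{\tp}\bm{G}=0$ annihilates the control action regardless of how $\bm{\Psi}_{k}$ is produced from $\bm{E}_{k}$, giving a universal invariance in the uncontrollable direction that blocks convergence from any $\bm{E}_{0}$ with nonzero component along $\bm{\alpha}$.
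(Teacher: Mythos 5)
Your proposal is correct, and while it shares the paper's overall skeleton (reducing everything to the rank condition $\rank\left(\bm{G}\right)=p$ via Theorem \ref{thm03}), the two implications involving $k$-stabilizability are handled by genuinely different arguments. The paper asserts that controllability ``naturally'' implies $k$-stabilizability without exhibiting a stabilizing feedback, then shows $k$-stabilizability forces $\rank\left(\bm{G}\right)=p$ by noting that a linear feedback of the form (\ref{eq10}) yields the closed loop (\ref{eq11}), whose $k$-stability is equivalent to $\rho\left(I-\bm{G}\bm{K}\right)<1$, which in turn forces $\bm{G}\bm{K}$ to be nonsingular; the converse direction produces some $\bm{K}$ satisfying (\ref{eq12}) and invokes Theorem \ref{thm03} for controllability. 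You instead make the forward direction constructive with the explicit deadbeat gain $\bm{K}=\bm{G}^{\tp}\left(\bm{G}\bm{G}^{\tp}\right)^{-1}$ (which gives $I-\bm{G}\bm{K}=0$ and hence satisfies (\ref{eq12}) trivially), and you prove the reverse direction by contraposition using the invariance $\bm{\alpha}^{\tp}\bm{E}_{k+1}=\bm{\alpha}^{\tp}\bm{E}_{k}$ for $\bm{\alpha}\in\nl\bm{G}^{\tp}$. The latter is the more robust argument: it rules out convergence from $\bm{E}_{0}=\bm{\alpha}$ under \emph{any} feedback law, linear or not, whereas the paper's spectral-radius argument only excludes static linear feedbacks of the form (\ref{eq10}) and therefore implicitly assumes that ``$k$-stabilizable'' refers to that class. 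Your version fills the two soft spots in the paper's proof at the cost of a slightly longer write-up; the paper's version is shorter because it leans on the standing convention that $k$-state feedback means (\ref{eq10}).
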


\begin{proof}
If the system (\ref{eq4}) is controllable, then it naturally is $k$-stabilizable, and as a consequence, there exists some $k$-state feedback with a gain matrix $\bm{K}\in\mathbb{R}^{q\times p}$ in the form of
\begin{equation}\label{eq10}
\bm{\Psi}_{k}=-\bm{K}\bm{E}_{k},\quad\forall k\in\mathbb{Z}_{+}
\end{equation}

\noindent to stabilize the resulting closed-loop system, given by
\begin{equation}\label{eq11}
\bm{E}_{k+1}=\left(I-\bm{G}\bm{K}\right)\bm{E}_{k},\quad\forall k\in\mathbb{Z}_{+}.
\end{equation}

\noindent Clearly, the $k$-stability of the system (\ref{eq11}) holds if and only if
\begin{equation}\label{eq12}
\rho\left(I-\bm{G}\bm{K}\right)<1
\end{equation}

\noindent which implies the nonsingularity of $\bm{G}\bm{K}$. Hence, $\rank\left(\bm{G}\right)=p$ is immediate.

On the contrary, if $\rank\left(\bm{G}\right)=p$, then (\ref{eq12}) holds for some $\bm{K}\in\mathbb{R}^{q\times p}$, under which (\ref{eq11}) is a $k$-stable system. That is, the system (\ref{eq4}) is $k$-stabilizable under some $k$-state feedback controller (\ref{eq10}). Moreover, we incorporate the result of Theorem \ref{thm03} and can deduce
\[
\dim\left(\mathcal{C}_{C}\right)
=\dim\left(\sn\bm{G}\right)
=\rank\left(\bm{G}\right)
=p
\]

\noindent and hence, $\mathcal{C}_{C}=\mathbb{R}^{p}$. That is, the controllability of the system (\ref{eq4}) is obtained.
\end{proof}

As a counterpart result of Theorem \ref{thm04}, the following theorem is introduced in the presence of the incomplete controllability of the system (\ref{eq4}).

\begin{thm}\label{thm05}
If the system (\ref{eq4}) is incompletely controllable, then there exists some nonsingular linear transformation $\widehat{\bm{E}}_{k}=\bm{F}\bm{E}_{k}$ such that $\widehat{\bm{E}}_{k}^{C}=\bm{F}_{1}^{\tp}\bm{E}_{k}\in\mathbb{R}^{m}$ satisfies
\begin{equation}\label{eq13}
\widehat{\bm{E}}_{k+1}^{C}=\widehat{\bm{E}}_{k}^{C}+\bm{F}_{1}^{\tp}\bm{G}\bm{\Psi}_{k},\quad\forall k\in\mathbb{Z}_{+}
\end{equation}

\noindent and $\widehat{\bm{E}}_{k}^{NC}=\bm{F}_{2}^{\tp}\bm{E}_{k}\in\mathbb{R}^{p-m}$ satisfies
\begin{equation}\label{eq14}
\widehat{\bm{E}}_{k}^{NC}=\widehat{\bm{E}}_{0}^{NC},\quad\forall k\in\mathbb{Z}_{+}.
\end{equation}

\noindent Moreover, the subsystem (\ref{eq13}) is controllable.
\end{thm}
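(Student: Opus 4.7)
The plan is to reuse the nonsingular matrix $\bm{F}=\bm{H}^{-1}$ already constructed before Lemma \ref{lem01}, since its block structure $\bm{F}=[\bm{F}_{1}^{\tp};\bm{F}_{2}^{\tp}]$ is tailor-made for splitting $\mathbb{R}^{p}$ along the controllable subspace $\mathcal{C}_{C}=\sn\bm{G}$ and its orthogonal complement. Under the incomplete-controllability hypothesis we have $0<m<p$, so both blocks are non-trivial, and the transformation $\widehat{\bm{E}}_{k}=\bm{F}\bm{E}_{k}$ is genuinely nonsingular with well-defined components $\widehat{\bm{E}}_{k}^{C}=\bm{F}_{1}^{\tp}\bm{E}_{k}$ and $\widehat{\bm{E}}_{k}^{NC}=\bm{F}_{2}^{\tp}\bm{E}_{k}$ of the claimed dimensions.

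First, I would premultiply the system \eqref{eq4} by $\bm{F}_{1}^{\tp}$ to obtain (\ref{eq13}) immediately, with no cancellation required. Next, I would premultiply \eqref{eq4} by $\bm{F}_{2}^{\tp}$, and the whole point of the construction is that the input channel drops out: since $\sn\bm{G}=\sn\bm{H}_{1}$, we can write $\bm{G}=\bm{H}_{1}\bm{M}$ for some $\bm{M}\in\mathbb{R}^{m\times q}$, and then $\bm{F}_{2}^{\tp}\bm{G}=(\bm{F}_{2}^{\tp}\bm{H}_{1})\bm{M}=0$ by the identity $\bm{F}_{2}^{\tp}\bm{H}_{1}=0$ already derived inside the proof of Lemma \ref{lem01}. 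This yields $\widehat{\bm{E}}_{k+1}^{NC}=\widehat{\bm{E}}_{k}^{NC}$ for all $k\in\mathbb{Z}_{+}$, and a trivial induction gives \eqref{eq14}.

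For the final assertion, I would appeal directly to Theorem \ref{thm04} applied to the subsystem \eqref{eq13}: it is controllable iff its input matrix $\bm{F}_{1}^{\tp}\bm{G}\in\mathbb{R}^{m\times q}$ has full row rank $m$. Using the factorization $\bm{G}=\bm{H}_{1}\bm{M}$ together with $\bm{F}_{1}^{\tp}\bm{H}_{1}=I$ (also from the proof of Lemma \ref{lem01}), we get $\bm{F}_{1}^{\tp}\bm{G}=\bm{M}$; and since $m=\rank(\bm{G})=\rank(\bm{H}_{1}\bm{M})\le\rank(\bm{M})\le m$, the matrix $\bm{M}$ must have rank $m$. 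Hence the subsystem \eqref{eq13} is controllable by Theorem \ref{thm04}.

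The only genuinely substantive step is the middle one, i.e.\ recognizing that the orthogonality relations $\bm{F}_{2}^{\tp}\bm{H}_{1}=0$ and $\bm{F}_{1}^{\tp}\bm{H}_{1}=I$ baked into $\bm{F}=\bm{H}^{-1}$ are exactly what decouples the uncontrollable component and preserves the rank in the controllable one; everything else is bookkeeping. I do not expect any real obstacle, since the factorization $\bm{G}=\bm{H}_{1}\bm{M}$ is forced by $\sn\bm{G}=\sn\bm{H}_{1}$ and the rest is direct substitution into \eqref{eq4}.
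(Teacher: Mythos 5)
Your proposal is correct and follows essentially the same route as the paper: the same transformation $\widehat{\bm{E}}_{k}=\bm{F}\bm{E}_{k}$, the same use of $\bm{F}_{2}^{\tp}\bm{G}=0$ to decouple the uncontrollable component, and the same appeal to Theorem \ref{thm04} after establishing $\rank\left(\bm{F}_{1}^{\tp}\bm{G}\right)=m$. The only differences are cosmetic: you derive $\bm{F}_{2}^{\tp}\bm{G}=0$ and the rank identity via the factorization $\bm{G}=\bm{H}_{1}\bm{M}$, whereas the paper invokes the orthogonality relation (\ref{eq6}) and computes $\rank\left(\bm{G}\right)=\rank\left(\bm{F}\bm{G}\right)=\rank\left(\bm{F}_{1}^{\tp}\bm{G}\right)$ directly.
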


\begin{proof}
From (\ref{eq6}), we have $\bm{F}_{2}^{\tp}\bm{G}=0$. We thus apply $\widehat{\bm{E}}_{k}=\bm{F}\bm{E}_{k}=\left[\left(\widehat{\bm{E}}_{k}^{C}\right)^{\tp}~\left(\widehat{\bm{E}}_{k}^{NC}\right)^{\tp}\right]^{\tp}$ to the system (\ref{eq4}) and can derive
\begin{equation}\label{eq15}
\aligned
\begin{bmatrix}\widehat{\bm{E}}_{k+1}^{C}\\\widehat{\bm{E}}_{k+1}^{NC}\end{bmatrix}
&=\begin{bmatrix}\widehat{\bm{E}}_{k}^{C}\\\widehat{\bm{E}}_{k}^{NC}\end{bmatrix}
+\begin{bmatrix}\bm{F}_{1}^{\tp}\\\bm{F}_{2}^{\tp}\end{bmatrix}
\bm{G}\bm{\Psi}_{k}\\
&=\begin{bmatrix}\widehat{\bm{E}}_{k}^{C}\\\widehat{\bm{E}}_{k}^{NC}\end{bmatrix}
+\begin{bmatrix}\bm{F}_{1}^{\tp}\bm{G}\\0\end{bmatrix}\bm{\Psi}_{k},\quad\forall k\in\mathbb{Z}_{+}
\endaligned
\end{equation}

\noindent from which (\ref{eq13}) is straightforward. Also, (\ref{eq15}) implies
\[\widehat{\bm{E}}_{k+1}^{NC}
=\widehat{\bm{E}}_{k}^{NC}
%=\widehat{\bm{E}}_{0}^{NC}
,\quad\forall k\in\mathbb{Z}_{+}\]

\noindent which is equivalent to (\ref{eq14}).

For the subsystem (\ref{eq13}), we can further obtain $\rank\left(\bm{F}_{1}^{\tp}\bm{G}\right)=m$ because the use of the nonsingularity of $\bm{F}$ and $\rank\left(\bm{G}\right)=m$ leads to
\begin{equation}\label{e019}
%\aligned
%m=
\rank\left(\bm{G}\right)
=\rank\left(\bm{F}\bm{G}\right)
%=\rank\left(\begin{bmatrix}\bm{F}_{1}^{\tp}\\\bm{F}_{2}^{\tp}\end{bmatrix}\bm{G}\right)
=\rank\left(\begin{bmatrix}\bm{F}_{1}^{\tp}\bm{G}\\0\end{bmatrix}\right)\\
=\rank\left(\bm{F}_{1}^{\tp}\bm{G}\right).
%\endaligned
\end{equation}

\noindent Hence, (\ref{eq13}) is a controllable system by considering Theorem \ref{thm04} for this subsystem.
\end{proof}

\begin{rem}\label{rem4}
By Theorems \ref{thm03}, \ref{thm04} and \ref{thm05}, we obtain fundamental controllability properties for the tracking error systems of ILC. They actually provide criteria on the design of $k$-state feedback controllers to achieve the $k$-stability. Of particular note is the close relation between these results and rank conditions of the matrix $\bm{G}$. The full-row rank of $\bm{G}$ is a necessary and sufficient condition for the controllability of the system (\ref{eq4}). Otherwise, there exist uncontrollable $k$-states for the system (\ref{eq4}), of which a standard controllability decomposition is provided in (\ref{eq15}). In particular, it is interesting to derive (\ref{eq14}) by this decomposition, which reveals that all the uncontrollable components of the $k$-states are not dynamic but fixed along the iteration axis.
\end{rem}

%%%%%%%%%%%%%%%%%%%%%%%%%%%%%%%%%%%%%%%%%%%%%%%%%%%%%%%%%%%%%%%%%%%%%%%%%%%%%%%%%%%%%%%%%%%%%%%%%%%%%%%%%%%%%%%%%%%%%%%%
\section{Controllability-Based Design of ILC}\label{sec5}

In this section, we connect the trackability problems for ILC systems to the fundamental controllability problems, based on which we explore the design and analysis of ILC in performing output tracking tasks. For this purpose, next we present a result to reveal how to verify the trackability of the desired references in ILC from the perspective of the controllability.

\begin{thm}\label{thm06}
For the system (\ref{eq2}) and the associated tracking error system (\ref{eq4}), it always holds:
\begin{equation}\label{eq16}
\mathcal{Y}_{T}=\mathcal{C}_{C}.
\end{equation}

\noindent Further, for any $\bm{Y}_{d}\in\mathbb{R}^{p}$, $\bm{Y}_{d}\in\mathcal{Y}_{T}$ if and only if
\begin{equation}\label{eq17}
\widehat{\bm{E}}_{k}^{NC}
=\bm{F}_{2}^{\tp}\bm{E}_{0}
%=\widehat{\bm{E}}_{0}^{NC}
=0,\quad\forall k\in\mathbb{Z}_{+}
\end{equation}

\noindent and otherwise, $\bm{Y}_{d}\not\in\mathcal{Y}_{T}$ if and only if
\begin{equation}\label{eq18}
\widehat{\bm{E}}_{k}^{NC}
=\bm{F}_{2}^{\tp}\bm{E}_{0}
%=\widehat{\bm{E}}_{0}^{NC}
\neq0,\quad\forall k\in\mathbb{Z}_{+}.
\end{equation}
\end{thm}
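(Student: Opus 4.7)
The plan is to derive Theorem~\ref{thm06} by assembling three ingredients already in hand: the subspace identities $\mathcal{Y}_T=\sn\bm{G}$ from Theorem~\ref{thm01}(1) and $\mathcal{C}_C=\sn\bm{G}$ from Theorem~\ref{thm03}(2); the Lemma~\ref{lem02} characterization of $\bm{Y}_d\in\mathcal{Y}_T$ by $\bm{F}_2^{\tp}\bm{Y}_d=0$; and the iteration-invariance $\widehat{\bm{E}}_k^{NC}=\widehat{\bm{E}}_0^{NC}$ coming from the controllability decomposition in Theorem~\ref{thm05}. So the proof is really just a chain of equivalences, with one short direct computation to link $\bm{E}_0$ back to $\bm{Y}_d$.

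First I would dispose of \eqref{eq16}: chaining $\mathcal{Y}_T=\sn\bm{G}=\mathcal{C}_C$ gives the identity in one line. This part is immediate and can be stated without calculation.

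Next, for \eqref{eq17} and \eqref{eq18}, my key observation is that $\bm{F}_2^{\tp}\bm{E}_0$ does not depend on the choice of $\bm{U}_0$. Indeed, from $\bm{E}_0=\bm{Y}_d-\bm{G}\bm{U}_0$ and $\bm{F}_2^{\tp}\bm{G}=0$ (which was noted inside the proof of Lemma~\ref{lem01} as a consequence of $\bm{F}_2^{\tp}\bm{H}_1=0$ and $\sn\bm{G}=\sn\bm{H}_1$), I get $\bm{F}_2^{\tp}\bm{E}_0=\bm{F}_2^{\tp}\bm{Y}_d$. Applying Theorem~\ref{thm05}, in particular \eqref{eq14}, then gives $\widehat{\bm{E}}_k^{NC}=\widehat{\bm{E}}_0^{NC}=\bm{F}_2^{\tp}\bm{E}_0=\bm{F}_2^{\tp}\bm{Y}_d$ for every $k\in\mathbb{Z}_{+}$. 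Now invoking the equivalence $\bm{Y}_d\in\mathcal{Y}_T\Leftrightarrow \bm{F}_2^{\tp}\bm{Y}_d=0$ from Lemma~\ref{lem02} closes both directions: the ``if and only if'' for \eqref{eq17} and, by negation, the one for \eqref{eq18}.

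The only subtle point I foresee is the corner case in which the system~\eqref{eq4} is completely controllable, so that $m=p$ and the block $\bm{F}_2$ is vacuous; Theorem~\ref{thm05} is stated under incomplete controllability. I would handle this by remarking that when $m=p$, the claims \eqref{eq17}, \eqref{eq18} become vacuously consistent (the $\bm{F}_2^{\tp}$-block is empty, and $\mathcal{Y}_T=\mathbb{R}^p$ by Theorem~\ref{thm01}(1), so every $\bm{Y}_d$ is trackable and \eqref{eq17} holds trivially while \eqref{eq18} never arises). Aside from this bookkeeping, the argument is purely a consolidation of prior lemmas, so I do not anticipate any real technical obstacle; the main care is in making the dependence on $\bm{U}_0$ explicit so that the formula $\widehat{\bm{E}}_k^{NC}=\bm{F}_2^{\tp}\bm{E}_0$ is seen to be independent of the initial input choice and hence a property of $\bm{Y}_d$ alone.
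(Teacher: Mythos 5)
Your proposal is correct and follows essentially the same route as the paper's proof: equation \eqref{eq16} from the common identity with $\sn\bm{G}$ via Theorems \ref{thm01} and \ref{thm03}, and then the chain $\widehat{\bm{E}}_{k}^{NC}=\bm{F}_{2}^{\tp}\bm{E}_{0}=\bm{F}_{2}^{\tp}\bm{Y}_{d}$ (using $\bm{F}_{2}^{\tp}\bm{G}=0$ and \eqref{eq14}) combined with the Lemma \ref{lem02} criterion $\bm{F}_{2}^{\tp}\bm{Y}_{d}=0$. Your explicit treatment of the completely controllable corner case ($m=p$, vacuous $\bm{F}_{2}$ block) is a small tidiness improvement over the paper, which leaves that implicit.
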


\begin{proof}
From Theorems \ref{thm01} and \ref{thm03}, we can directly gain (\ref{eq16}) because two subspaces are both identical with $\sn\bm{G}$. Further, we leverage (\ref{eq2}) and $\bm{F}_{2}^{\tp}\bm{G}=0$ by Lemma \ref{lem01} to validate that for any $\bm{Y}_{d}\in\mathbb{R}^{p}$,
\[\aligned
\widehat{\bm{E}}_{k}^{NC}%=\bm{F}_{2}^{\tp}\bm{E}_{k}
=0,\quad\forall k\in\mathbb{Z}_{+}
~&\Leftrightarrow~\bm{F}_{2}^{\tp}\left(\bm{Y}_{d}-\bm{G}\bm{U}_{k}\right)=0,\quad\forall k\in\mathbb{Z}_{+}\\
~&\Leftrightarrow~\bm{F}_{2}^{\tp}\bm{Y}_{d}=0\\
~&\Leftrightarrow~\bm{Y}_{d}\in\mathcal{Y}_{T}
\endaligned\]

\noindent where we also incorporate the equivalent results of Lemma \ref{lem02}. For the same reason, we can deduce
\[\aligned
\widehat{\bm{E}}_{k}^{NC}%=\bm{F}_{2}^{\tp}\bm{E}_{k}
\neq0,\quad\forall k\in\mathbb{Z}_{+}
~&\Leftrightarrow~\bm{F}_{2}^{\tp}\left(\bm{Y}_{d}-\bm{G}\bm{U}_{k}\right)\neq0,\quad\forall k\in\mathbb{Z}_{+}\\
~&\Leftrightarrow~\bm{F}_{2}^{\tp}\bm{Y}_{d}\neq0\\
~&\Leftrightarrow~\bm{Y}_{d}\not\in\mathcal{Y}_{T}.
\endaligned\]

\noindent Then all results of this theorem can be derived with (\ref{eq14}).
\end{proof}

Based on Theorem \ref{thm06}, we can study the trackability problems of ILC systems by instead treating the controllability problems of their resulting tracking error systems. This makes it feasible to address the trackability analysis and the updating law design of ILC by resorting only to its tracking error system. Further, it is worth noting that all trackable desired references correspond to $\widehat{\bm{E}}_{k}^{NC}=0$, $\forall k\in\mathbb{Z}_{+}$, namely, the uncontrollable components of their relevant $k$-states are always equal to zero.

To proceed with the development of Theorem \ref{thm06}, we develop a fundamental tracking result of ILC in the following theorem, regardless of the system (\ref{eq2}) in the presence of any plant model.

\begin{thm}\label{thm07}
Let any desired reference $\bm{Y}_{d}\in\mathbb{R}^{p}$ be specified. Then the system (\ref{eq2}) can realize the tracking objective (\ref{eq3}) under some updating law of ILC in the form of
\begin{equation}\label{eq19}
\bm{U}_{k+1}=\bm{U}_{k}+\bm{K}\bm{E}_{k},\quad\forall k\in\mathbb{Z}_{+}
\end{equation}

\noindent if and only if $\bm{Y}_{d}\in\mathcal{Y}_{T}$. Otherwise, the system (\ref{eq2}) can no longer achieve the tracking objective (\ref{eq3}), regardless of the application of any input sequence $\left\{\bm{U}_{k}:k\in\mathbb{Z}_{+}\right\}$. %In particular, for the design of the updating law (\ref{eq19}), the initial input $\bm{U}_{0}$ can be arbitrarily selected, and the gain matrix $\bm{K}$ can be determined by applying $\bm{K}=\widehat{\bm{K}}\bm{F}_{1}^{\tp}$ with some gain matrix $\widehat{\bm{K}}\in\mathbb{R}^{q\times m}$ satisfying
%\begin{equation}\label{}
%\rho\left(I-\bm{F}_{1}^{\tp}\bm{G}\widehat{\bm{K}}\right)<1
%\end{equation}
%
%\noindent which can always be satisfied.
\end{thm}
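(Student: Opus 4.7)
The plan is to reduce everything to the controllability decomposition of Theorem~\ref{thm05} together with the trackability criterion of Theorem~\ref{thm06}. The starting observation is that the updating law (\ref{eq19}) is exactly the $k$-state feedback $\bm{\Psi}_{k}=-\bm{K}\bm{E}_{k}$ applied to the tracking error system (\ref{eq4}) via the identification (\ref{eq5}), so that the closed-loop $k$-state dynamics reads $\bm{E}_{k+1}=(I-\bm{G}\bm{K})\bm{E}_{k}$.

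For \emph{sufficiency}, assuming $\bm{Y}_{d}\in\mathcal{Y}_{T}$, I would apply the similarity transformation $\widehat{\bm{E}}_{k}=\bm{F}\bm{E}_{k}$ to cast the closed-loop system into block-triangular form,
\[
\widehat{\bm{E}}_{k+1}^{C}=\left(I-\bm{F}_{1}^{\tp}\bm{G}\bm{K}\bm{H}_{1}\right)\widehat{\bm{E}}_{k}^{C}-\bm{F}_{1}^{\tp}\bm{G}\bm{K}\bm{H}_{2}\widehat{\bm{E}}_{k}^{NC},\quad\widehat{\bm{E}}_{k+1}^{NC}=\widehat{\bm{E}}_{k}^{NC},
\]
which follows directly from the block structure of $\bm{F}\bm{G}$ already displayed in (\ref{eq15}) together with $\bm{F}_{2}^{\tp}\bm{G}=0$. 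By Theorem~\ref{thm06}, the hypothesis $\bm{Y}_{d}\in\mathcal{Y}_{T}$ forces $\widehat{\bm{E}}_{k}^{NC}=0$ for every $k\in\mathbb{Z}_{+}$, killing the coupling term. Since $\bm{F}_{1}^{\tp}\bm{G}\in\mathbb{R}^{m\times q}$ has full row rank $m$ (as established inside the proof of Theorem~\ref{thm05}), it admits a right inverse $\bm{M}\in\mathbb{R}^{q\times m}$; I would then pick any $\bm{K}\in\mathbb{R}^{q\times p}$ satisfying $\bm{K}\bm{H}_{1}=\bm{M}$, which is possible because $\bm{H}=[\bm{H}_{1}~\bm{H}_{2}]$ is nonsingular so that $\bm{K}\bm{H}_{2}$ can be assigned arbitrarily. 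This gives $I-\bm{F}_{1}^{\tp}\bm{G}\bm{K}\bm{H}_{1}=0$, hence $\widehat{\bm{E}}_{k}^{C}=0$ for all $k\geq 1$, and therefore $\bm{E}_{k}=\bm{H}_{1}\widehat{\bm{E}}_{k}^{C}+\bm{H}_{2}\widehat{\bm{E}}_{k}^{NC}\to 0$, i.e., (\ref{eq3}) holds.

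For \emph{necessity} under (\ref{eq19}), I would argue that $\bm{Y}_{k}=\bm{G}\bm{U}_{k}\in\sn\bm{G}$ for every $k$ and that $\sn\bm{G}$ is a closed (finite-dimensional) subspace of $\mathbb{R}^{p}$, so $\bm{Y}_{k}\to\bm{Y}_{d}$ forces $\bm{Y}_{d}\in\sn\bm{G}=\mathcal{Y}_{T}$ by Theorem~\ref{thm01}. For the stronger \emph{``otherwise''} clause, applicable to an arbitrary input sequence $\{\bm{U}_{k}\}$, I would invoke the invariance (\ref{eq14}): regardless of the choice of $\bm{\Psi}_{k}$, one has $\widehat{\bm{E}}_{k}^{NC}=\widehat{\bm{E}}_{0}^{NC}=\bm{F}_{2}^{\tp}(\bm{Y}_{d}-\bm{G}\bm{U}_{0})=\bm{F}_{2}^{\tp}\bm{Y}_{d}$ via $\bm{F}_{2}^{\tp}\bm{G}=0$ from Lemma~\ref{lem01}. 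If $\bm{Y}_{d}\notin\mathcal{Y}_{T}$, Lemma~\ref{lem02} gives $\bm{F}_{2}^{\tp}\bm{Y}_{d}\neq 0$, so $\widehat{\bm{E}}_{k}^{NC}$ is a fixed nonzero vector for all $k$ and $\bm{E}_{k}\not\to 0$.

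The step I expect to be the main obstacle is the sufficiency direction: one has to verify the precise block-triangular form of the closed-loop system and to show that the $q\times p$ degrees of freedom in $\bm{K}$ truly suffice to realize any prescribed right-inverse value of $\bm{K}\bm{H}_{1}$ without the free block $\bm{K}\bm{H}_{2}$ causing trouble---which it does not, precisely because the $\widehat{\bm{E}}_{k}^{NC}$-coupling term has already been annihilated by Theorem~\ref{thm06}.
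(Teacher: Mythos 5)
Your proof is correct, and its skeleton --- pass to the error system (\ref{eq4}), apply the controllability decomposition of Theorem \ref{thm05}, and use Theorem \ref{thm06} to identify trackability with the vanishing of the uncontrollable component $\widehat{\bm{E}}_{k}^{NC}$ --- is the same as the paper's. You deviate in two places, both legitimately. First, for sufficiency the paper only asks for a gain $\widehat{\bm{K}}$ with $\rho\left(I-\bm{F}_{1}^{\tp}\bm{G}\widehat{\bm{K}}\right)<1$ (condition (\ref{eq21})), i.e., asymptotic stability of the controllable block, whereas you choose $\bm{K}\bm{H}_{1}$ to be an exact right inverse of $\bm{F}_{1}^{\tp}\bm{G}$, which makes the closed-loop matrix zero and yields convergence in one iteration; this is a special (deadbeat) instance of the paper's condition --- indeed it is precisely the design the paper later formalizes in Theorem \ref{t3} --- so it establishes existence just as well, at the cost of hiding the fact that a whole family of gains works. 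Second, your necessity argument is genuinely different and more elementary: instead of running the closed-loop decomposition (\ref{eq22}) backwards as the paper does, you simply observe that $\bm{Y}_{k}=\bm{G}\bm{U}_{k}$ always lies in the closed finite-dimensional subspace $\sn\bm{G}=\mathcal{Y}_{T}$, so any limit of outputs must too; this argument applies verbatim to arbitrary input sequences and therefore already subsumes the ``otherwise'' clause, which you additionally re-derive (consistently with the paper) from the invariance (\ref{eq14}) of $\widehat{\bm{E}}_{k}^{NC}=\bm{F}_{2}^{\tp}\bm{Y}_{d}\neq0$.
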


\begin{proof}
From (\ref{eq5}), we can validate the equivalence between the updating law (\ref{eq19}) and the  $k$-state feedback (\ref{eq10}). Note that $\rank\left(\bm{F}_{1}^{\tp}\bm{G}\right)=m$ and $\rank\left(\bm{H}_{1}\right)=m$, namely, $\bm{F}_{1}^{\tp}\bm{G}$ and $\bm{H}_{1}$ are of full-row rank and of full-column rank, respectively. Hence, there always exists some gain matrix $\bm{K}\in\mathbb{R}^{q\times p}$ such that
\begin{equation}\label{eq20}
\rho\left(I-\bm{F}_{1}^{\tp}\bm{G}\bm{K}\bm{H}_{1}\right)<1
\end{equation}

\noindent and some gain matrix $\widehat{\bm{K}}\in\mathbb{R}^{q\times m}$ such that
\begin{equation}\label{eq21}
\rho\left(I-\bm{F}_{1}^{\tp}\bm{G}\widehat{\bm{K}}\right)<1.
\end{equation}

\noindent By these preparations, next we show that the tracking objective (\ref{eq3}) holds for the system (\ref{eq2}) under the updating law (\ref{eq19}) if and only if $\bm{Y}_{d}\in\mathcal{Y}_{T}$.

{\it Necessity:} If (\ref{eq3}) holds for (\ref{eq2}) under (\ref{eq19}), then $\lim_{k\to\infty}\bm{E}_{k}=0$ holds for (\ref{eq11}) (namely, the closed-loop system resulting from (\ref{eq4}) and (\ref{eq10})). With Lemma \ref{lem01} and from (\ref{eq11}), we can leverage $\widehat{\bm{E}}_{k}=\bm{F}\bm{E}_{k}$, $\widehat{\bm{E}}_{k}^{C}=\bm{F}_{1}^{\tp}\bm{E}_{k}$, and $\widehat{\bm{E}}_{k}^{NC}=\bm{F}_{2}^{\tp}\bm{E}_{k}$ to obtain (\ref{eq14}) and
\begin{equation}\label{eq22}
\widehat{\bm{E}}_{k+1}^{C}=\left(I-\bm{F}_{1}^{\tp}\bm{G}\bm{K}\bm{H}_{1}\right)\widehat{\bm{E}}_{k}^{C}
-\bm{F}_{1}^{\tp}\bm{G}\bm{K}\bm{H}_{2}\widehat{\bm{E}}_{k}^{NC},\quad\forall k\in\mathbb{Z}_{+}.
\end{equation}

\noindent Based on $\lim_{k\to\infty}\bm{E}_{k}=0$ and with $\widehat{\bm{E}}_{k}^{NC}=\bm{F}_{2}^{\tp}\bm{E}_{k}$, the use of (\ref{eq14}) leads to (\ref{eq17}). It hence follows from Theorem \ref{thm06} that $\bm{Y}_{d}\in\mathcal{Y}_{T}$. In addition, the substitution of (\ref{eq17}) into (\ref{eq22}) yields
\begin{equation}\label{e15}
\widehat{\bm{E}}_{k+1}^{C}=\left(I-\bm{F}_{1}^{\tp}\bm{G}\bm{K}\bm{H}_{1}\right)\widehat{\bm{E}}_{k}^{C},\quad\forall k\in\mathbb{Z}_{+}
\end{equation}

\noindent which is $k$-stable, and hence $\lim_{k\to\infty}\widehat{\bm{E}}_{k}^{C}=0$, under the spectral radius condition (\ref{eq20}). Clearly, this coincides with the two facts of $\lim_{k\to\infty}\bm{E}_{k}=0$ and of $\widehat{\bm{E}}_{k}^{C}=\bm{F}_{1}^{\tp}\bm{E}_{k}$.

{\it Sufficiency:} Let us consider the controllable subsystem (\ref{eq13}), and there always exists some $k$-state feedback $\bm{\Psi}_{k}=-\widehat{\bm{K}}\bm{E}_{k}^{C}$ to generate the closed-loop system given by
\begin{equation}\label{eq23}
\widehat{\bm{E}}_{k+1}^{C}=\left(I-\bm{F}_{1}^{\tp}\bm{G}\widehat{\bm{K}}\right)\widehat{\bm{E}}_{k}^{C},\quad\forall k\in\mathbb{Z}_{+}
\end{equation}

\noindent of which the $k$-stability can be ensured with the spectral radius condition (\ref{eq21}). This implies $\lim_{k\to\infty}\widehat{\bm{E}}_{k}^{C}=0$. If $\bm{Y}_{d}\in\mathcal{Y}_{T}$, then (\ref{eq17}) holds based on Theorem \ref{thm06}. Hence, we have $\lim_{k\to\infty}\widehat{\bm{E}}_{k}=0$ which, together with $\bm{E}_{k}=\bm{H}\widehat{\bm{E}}_{k}$, yields $\lim_{k\to\infty}\bm{E}_{k}=0$. Owing to $\widehat{\bm{E}}_{k}^{C}=\bm{F}_{1}^{\tp}\bm{E}_{k}$, we denote $\bm{K}=\widehat{\bm{K}}\bm{F}_{1}^{\tp}$, and as a consequence of $\bm{\Psi}_{k}=-\widehat{\bm{K}}\bm{E}_{k}^{C}$, we can deduce that (\ref{eq10}) is equivalent to (\ref{eq19}). By incorporating these facts, we can see that the tracking objective (\ref{eq3}) holds for the system (\ref{eq2}) under the updating law (\ref{eq19}) when designing its gain matrices of the form $\bm{K}=\widehat{\bm{K}}\bm{F}_{1}^{\tp}$ and choosing $\widehat{\bm{K}}$ according to (\ref{eq21}), regardless of any $\bm{Y}_{d}\in\mathcal{Y}_{T}$.

Otherwise, we can see from Theorem \ref{thm06} that $\bm{Y}_{d}\not\in\mathcal{Y}_{T}$ if and only if (\ref{eq18}) holds. It is immediate from (\ref{eq18}) that $\lim_{k\to\infty}\bm{E}_{k}=0$ does not hold any longer because of $\widehat{\bm{E}}_{k}^{NC}=\bm{F}_{2}^{\tp}\bm{E}_{k}$. Namely, (\ref{eq3}) can not be achieved for (\ref{eq2}), despite any $\bm{U}_{k}$, $\forall k\in\mathbb{Z}_{+}$.
\end{proof}

\begin{rem}\label{rem5}
In Theorem \ref{thm07}, we identify a property of ILC that the trackability of the desired references actually can provide a necessary and sufficient condition to realize the output tracking tasks. The property makes us able to provide ILC of the system (\ref{eq2}) with an output tracking result for a general case, regardless of any rank condition of $\bm{G}$. This is different from conventional ILC results that generally require a full (row or column) rank condition. In addition, Theorem \ref{thm07} introduces a way to leverage the state feedback-based design and analysis to address output tracking problems of ILC. We can thus achieve the design and analysis of ILC from the perspective of system stability, rather than resorting to traditional approaches for ILC based on, e.g., contraction mapping analysis.
\end{rem}
\begin{rem}\label{rem7}
With Theorem \ref{thm07}, it reveals that for any $\bm{Y}_{d}\in\mathcal{Y}_{T}$, we can design the updating law (\ref{eq19}) of ILC for the system (\ref{eq2}) to realize the tracking objective (\ref{eq3}). We can particularly design the gain matrix $\bm{K}$ of (\ref{eq19}) according to $\bm{K}=\widehat{\bm{K}}\bm{F}_{1}^{\tp}$, for which $\widehat{\bm{K}}$ is selected to satisfy (\ref{eq21}). Since $\bm{F}_{1}^{\tp}\bm{G}$ is of full-row rank, there always exists $\widehat{\bm{K}}$ that fulfills (\ref{eq21}). This spectral radius condition in fact ensures the $k$-stability of the system (\ref{eq23}) corresponding to the controllable components of the $k$-states. Because (\ref{eq19}) is actually a generalized form of the P-type updating law of ILC, Theorem \ref{thm07} also provides the explanation about why the P-type updating law is powerful and most applied in the ILC tracking tasks from the viewpoints of state feedback and controllability.
\end{rem}

In addition to the fundamental tracking problem of concern in Theorem \ref{thm07}, another fundamental problem emerging is: what is the performance of the correspondingly used input sequence, or what does ILC learn? To answer this question, we separately consider two cases depending on whether the desired reference is trackable for the system (\ref{eq2}) or not.

We first contribute to addressing the case for the system (\ref{eq2}) in the presence of the trackable desired references, and  denote the set of the desired inputs that generate any  $\bm{Y}_{d}\in\mathcal{Y}_{T}$ as
\begin{equation}\label{eq24}
\mathcal{U}_{d}\left(\bm{Y}_{d}\right)
=\left\{\bm{U}_{d}\in\mathbb{R}^{q}
\big|\bm{Y}_{d}=\bm{G}\bm{U}_{d}\right\},\quad\forall\bm{Y}_{d}\in\mathcal{Y}_{T}.
\end{equation}

\noindent It is clear that $\mathcal{U}_{d}\left(\bm{Y}_{d}\right)$ is exactly the set of the solutions to the LAE (\ref{eq1}) for any $\bm{Y}_{d}\in\mathcal{Y}_{T}$. We can further reveal that ILC can be designed to learn all the desired inputs collected in $\mathcal{U}_{d}\left(\bm{Y}_{d}\right)$ for any $\bm{Y}_{d}\in\mathcal{Y}_{T}$ through the selection of the initial inputs.

\begin{thm}\label{thm08}
Consider the system (\ref{eq2}) with any $\bm{Y}_{d}\in\mathcal{Y}_{T}$, and let the updating law (\ref{eq19}) be applied, and the condition (\ref{eq20}) be satisfied. Then for any initial input $\bm{U}_{0}$, the sequence of inputs $\bm{U}_{k}$, $\forall k\in\mathbb{Z}_{+}$, generated by (\ref{eq19}), converges such that the input learned with ILC forms a convex set as
\begin{equation}\label{eq25}
\aligned
\mathcal{U}_{\mathrm{ILC}}(\bm{Y}_{d})
%&=\left\{\bm{U}_{\infty}\triangleq\lim_{k\to\infty}\bm{U}_{k}\Big|\bm{U}_{0}\in\mathbb{R}^{q}\right\}\\
=\bigg\{\bm{U}_{\infty}&=\left[I-\bm{K}\bm{H}_{1}\left(\bm{F}_{1}^{\tp}\bm{G}\bm{K}\bm{H}_{1}\right)^{-1}\bm{F}_{1}^{\tp}\bm{G}\right]\bm{U}_{0}\\
&~~~+\bm{K}\bm{H}_{1}\left(\bm{F}_{1}^{\tp}\bm{G}\bm{K}\bm{H}_{1}\right)^{-1}\bm{F}_{1}^{\tp}\bm{Y}_{d}
\Big|\bm{U}_{0}\in\mathbb{R}^{q}\bigg\},\\%,~\forall\bm{Y}_{d}\in\mathcal{Y}_{T}.\\
&~~~~~~~~~~~~~~~~~~~~~~~~~~~~~~~~~~~~~~\forall\bm{Y}_{d}\in\mathcal{Y}_{T}
\endaligned
\end{equation}

\noindent where $\bm{U}_{\infty}\triangleq\lim_{k\to\infty}\bm{U}_{k}$. Moreover, it always holds
\begin{equation}\label{eq26}
\mathcal{U}_{\mathrm{ILC}}(\bm{Y}_{d})
=\mathcal{U}_{d}(\bm{Y}_{d}),\quad\forall\bm{Y}_{d}\in\mathcal{Y}_{T}
\end{equation}

\noindent where it particularly follows
\begin{equation}\label{eq27}
\mathcal{U}_{\mathrm{ILC}}(\bm{Y}_{d})
=\mathcal{U}_{d}(\bm{Y}_{d})
=\left\{\left(\bm{G}^{\tp}\bm{G}\right)^{-1}\bm{G}^{\tp}\bm{Y}_{d}\right\},\quad\forall\bm{Y}_{d}\in\mathcal{Y}_{T}
\end{equation}

\noindent if and only if $\mathcal{Y}_{T}=\mathcal{Y}_{R}$ holds. %$\rank\left(\bm{G}\right)=q$.
\end{thm}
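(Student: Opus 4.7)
The plan is to pin down the sequence $\{\bm{U}_k\}$ in closed form via the controllability decomposition, and then use that closed form to characterize the limit set. Since $\bm{Y}_d \in \mathcal{Y}_T$, Theorem \ref{thm06} gives $\widehat{\bm{E}}_k^{NC} = \bm{F}_2^{\tp}\bm{E}_k \equiv 0$, so substituting $\bm{\Psi}_k = -\bm{K}\bm{E}_k$ (equivalently (\ref{eq19}) via (\ref{eq5})) into the decomposition (\ref{eq22}) collapses the closed-loop dynamics to the single autonomous recursion $\widehat{\bm{E}}_{k+1}^C = (I-\bm{F}_1^{\tp}\bm{G}\bm{K}\bm{H}_1)\widehat{\bm{E}}_k^C$ on the controllable component, which decays exponentially under (\ref{eq20}). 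In particular $\bm{E}_k = \bm{H}_1\widehat{\bm{E}}_k^C = \bm{H}_1(I-\bm{F}_1^{\tp}\bm{G}\bm{K}\bm{H}_1)^{k}\widehat{\bm{E}}_0^C$.

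Next, I would telescope (\ref{eq19}) to obtain $\bm{U}_{k+1} = \bm{U}_0 + \bm{K}\bm{H}_1\sum_{j=0}^{k}(I-\bm{F}_1^{\tp}\bm{G}\bm{K}\bm{H}_1)^{j}\widehat{\bm{E}}_0^C$, and sum the Neumann series (valid by (\ref{eq20})) to $(\bm{F}_1^{\tp}\bm{G}\bm{K}\bm{H}_1)^{-1}$. Substituting $\widehat{\bm{E}}_0^C = \bm{F}_1^{\tp}(\bm{Y}_d - \bm{G}\bm{U}_0)$ produces formula (\ref{eq25}) directly, establishing both convergence of $\{\bm{U}_k\}$ and the affine dependence of $\bm{U}_\infty$ on $\bm{U}_0$. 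Convexity of $\mathcal{U}_{\mathrm{ILC}}(\bm{Y}_d)$ is immediate from this affine parameterization.

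To prove (\ref{eq26}) I would argue the two inclusions separately. For $\mathcal{U}_{\mathrm{ILC}}(\bm{Y}_d) \subseteq \mathcal{U}_d(\bm{Y}_d)$, multiply (\ref{eq25}) by $\bm{G}$; using the identity $\bm{G} = \bm{H}_1\bm{F}_1^{\tp}\bm{G}$ (a consequence of $\bm{F}_2^{\tp}\bm{G}=0$ together with $\bm{H}_1\bm{F}_1^{\tp} + \bm{H}_2\bm{F}_2^{\tp} = I$ from Lemma \ref{lem01}), the factor $\bm{G}\bm{K}\bm{H}_1(\bm{F}_1^{\tp}\bm{G}\bm{K}\bm{H}_1)^{-1}$ collapses to $\bm{H}_1$, leaving $\bm{G}\bm{U}_\infty = \bm{H}_1\bm{F}_1^{\tp}\bm{Y}_d = \bm{Y}_d$ by Lemma \ref{lem02}. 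For the reverse inclusion, given any $\bm{U}^{\ast} \in \mathcal{U}_d(\bm{Y}_d)$ I would pick $\bm{U}_0 = \bm{U}^{\ast}$; then $\bm{F}_1^{\tp}\bm{G}\bm{U}^{\ast} = \bm{F}_1^{\tp}\bm{Y}_d$ forces the two $(\bm{F}_1^{\tp}\bm{G}\bm{K}\bm{H}_1)^{-1}$-terms in (\ref{eq25}) to cancel, yielding $\bm{U}_\infty = \bm{U}^{\ast}$.

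Finally, for (\ref{eq27}), combining Theorem \ref{thm01}(2) ($\mathcal{Y}_T = \mathcal{Y}_R$ iff $\rank(\bm{G}) = q$) with Lemma \ref{lem03} ($\mathcal{U}_d(\bm{Y}_d)$ is a singleton iff $\rank(\bm{G}) = q$) reduces the remaining task to checking that, when $\bm{G}$ has full column rank and $\bm{Y}_d = \bm{G}\bm{U}_d$, the unique solution is $\bm{U}_d = (\bm{G}^{\tp}\bm{G})^{-1}\bm{G}^{\tp}\bm{Y}_d$, which is immediate. The step I expect to be the main obstacle is the $\subseteq$ inclusion: it is not obvious at a glance that the $\bm{G}\bm{U}_0$ contribution in (\ref{eq25}) vanishes after applying $\bm{G}$, and it is precisely the decomposition identity $\bm{G} = \bm{H}_1\bm{F}_1^{\tp}\bm{G}$ inherited from the controllability structure that makes the telescoped expression close up cleanly.
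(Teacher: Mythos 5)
Your proposal is correct, and the route you take to the limit formula \eqref{eq25} differs in a worthwhile way from the paper's. The paper works with the input recursion $\bm{U}_{k+1}=\left(I-\bm{K}\bm{H}_{1}\bm{F}_{1}^{\tp}\bm{G}\right)\bm{U}_{k}+\bm{K}\bm{Y}_{d}$ as a dynamical system in its own right: it constructs an explicit nonsingular transformation $\bm{\Omega}=\left[\bm{\Omega}_{1}~\bm{\Omega}_{2}\right]$ with $\bm{\Omega}_{1}=\bm{K}\bm{H}_{1}\left(\bm{F}_{1}^{\tp}\bm{G}\bm{K}\bm{H}_{1}\right)^{-1}$, verifies its inverse blockwise, and decouples the recursion into a convergent component $\bm{U}_{1,k}^{\ast}\to\bm{F}_{1}^{\tp}\bm{Y}_{d}$ and an iteration-invariant component $\bm{U}_{2,k}^{\ast}=\bm{\Xi}_{2}\bm{U}_{0}$. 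You instead never treat $\bm{U}_{k}$ as an autonomous system: you solve the error dynamics in closed form ($\bm{E}_{j}=\bm{H}_{1}(I-\bm{F}_{1}^{\tp}\bm{G}\bm{K}\bm{H}_{1})^{j}\widehat{\bm{E}}_{0}^{C}$, legitimate because trackability forces $\widehat{\bm{E}}_{j}^{NC}\equiv0$ by Theorem \ref{thm06}), telescope the update law, and sum the Neumann series to $\left(\bm{F}_{1}^{\tp}\bm{G}\bm{K}\bm{H}_{1}\right)^{-1}$. This is more elementary and sidesteps the verification that $\bm{\Omega}$ is invertible with the stated inverse, which is the most laborious part of the paper's Step (i); what the paper's decomposition buys in exchange is an explicit identification of the invariant subspace $\bm{\Xi}_{2}\bm{U}_{k}$ of the input iteration, a structural fact the authors reuse in the proofs of Theorems \ref{t1} and \ref{t3}. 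Your reverse inclusion for \eqref{eq26} is also cleaner: initializing at $\bm{U}_{0}=\bm{U}^{\ast}\in\mathcal{U}_{d}(\bm{Y}_{d})$ makes the two terms in \eqref{eq25} cancel directly, whereas the paper constructs a whole family $\bm{U}_{0}=\bm{\chi}-\bm{K}\bm{H}_{1}\bm{\eta}$ to reach the same conclusion. The forward inclusion and the equivalence of \eqref{eq27} with $\mathcal{Y}_{T}=\mathcal{Y}_{R}$ follow the same logic as the paper (via $\bm{G}=\bm{H}_{1}\bm{F}_{1}^{\tp}\bm{G}$, Lemma \ref{lem02}, Theorem \ref{thm01}, and Lemma \ref{lem03}).
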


\begin{proof}
See Appendix A.
\end{proof}

\begin{rem}\label{rem8}
We reveal by Theorem \ref{thm08}, together with Theorem \ref{thm07}, that for any trackable desired reference, we can leverage the design with a P-type ILC algorithm to not only accomplish the tracking objective but also determine a corresponding input to generate the desired reference. Furthermore, the inputs learned through ILC are dependent linearly upon the desired reference and the initial input, based on which we can develop all desired inputs associated with the desired reference with the selections of different initial inputs.
%It is worth noticing that even though for any given $\bm{Y}_d\in\mathcal{Y}_T$, $\bm{U}_\infty$ depends heavily on the selections of $\bm{K}$, $\bm{H}$ and the initial input, the input convex set $\mathcal{U}_{\mathrm{ILC}}(\bm{Y}_{d})$ for $\bm{Y}_d$ is fixed regardless of the selections of them. In particular, the unique input learned for the case when $\rank\left(\bm{G}\right)=q$ also reads as
%\begin{equation}\label{}
%\mathcal{U}_{\mathrm{ILC}}(\bm{Y}_{d})
%=\left\{\left(\bm{G}^{\tp}\bm{G}\right)^{-1}\bm{G}^{\tp}\bm{Y}_{d}\right\},
%=\mathcal{U}_{d}(\bm{Y}_{d}),
%\quad\forall\bm{Y}_{d}\in\mathcal{Y}_{T}.
%\end{equation}
It is worth noting also for Theorems \ref{thm07} and \ref{thm08} that under the spectral radius condition (\ref{eq20}), we ensure their convergence results with an exponentially fast speed. %and (\ref{eq21}).
\end{rem}%
%\begin{rem}\label{rem9}
%To establish Theorem \ref{thm08}, the selection condition (\ref{eq20}) of the gain matrix $\bm{K}$ is adopted. This condition, however, can be achieved, regardless of any matrix $\bm{G}$, owing to the full-row rank of $\bm{F}_{1}^{\tp}\bm{G}$ and the full-column rank of $\bm{H}_{1}$. In addition, (\ref{eq20}) can be ensured if $\bm{K}=\widehat{\bm{K}}\bm{F}_{1}^{\tp}$ is designed with $\widehat{\bm{K}}$ satisfying (\ref{eq21}). It is due to that (\ref{eq20}) naturally follows from (\ref{eq21}), namely,
%\begin{equation}\label{e017}
%\aligned
%\rho\left(I-\bm{F}_{1}^{\tp}\bm{G}\bm{K}\bm{H}_{1}\right)
%&=\rho\left(I-\bm{F}_{1}^{\tp}\bm{G}\widehat{\bm{K}}\left(\bm{F}_{1}^{\tp}\bm{H}_{1}\right)\right)\\
%&=\rho\left(I-\bm{F}_{1}^{\tp}\bm{G}\widehat{\bm{K}}\right)
%\endaligned
%\end{equation}
%
%\noindent where we also use $\bm{K}=\widehat{\bm{K}}\bm{F}_{1}^{\tp}$ and $\bm{F}_{1}^{\tp}\bm{H}_{1}=I$. Hence, the design process of (\ref{eq19}) to develop Theorem \ref{thm07} based on (\ref{eq21}) also works for the establishment of Theorem \ref{thm08}. Furthermore, either of the conditions (\ref{eq20}) and (\ref{eq21}) can guarantee the convergence results of Theorems \ref{thm07} and \ref{thm08} with an exponentially fast speed.
%\end{rem}

When the full-row rank condition of $\bm{G}$ is ensured, Theorem \ref{thm07} can be developed to work for the tracking tasks of the system (\ref{eq2}) in the presence of any desired references.

\begin{cor}\label{cor01}
Let $\rank\left(\bm{G}\right)=p$ hold for the system (\ref{eq2}). Then for any desired reference $\bm{Y}_{d}\in\mathbb{R}^{p}$, there exists some updating law (\ref{eq19}) of ILC to achieve the tracking objective (\ref{eq3}), together with the resulting sequence of inputs converging into a convex set given by
\begin{equation}\label{eq38}
\aligned
\mathcal{U}_{\mathrm{ILC}}(\bm{Y}_{d})
=\bigg\{\bm{U}_{\infty}&=\left[I-\bm{K}\left(\bm{G}\bm{K}\right)^{-1}\bm{G}\right]\bm{U}_{0}\\
&~~~+\bm{K}\left(\bm{G}\bm{K}\right)^{-1}\bm{Y}_{d}
\Big|\bm{U}_{0}\in\mathbb{R}^{q}\bigg\},\quad\forall\bm{Y}_{d}\in\mathbb{R}^{p}
\endaligned
\end{equation}

\noindent for which the design condition of the gain matrix $\bm{K}$ is provided by (\ref{eq12}).
\end{cor}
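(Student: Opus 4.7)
The plan is to derive Corollary \ref{cor01} as a direct specialization of Theorems \ref{thm07} and \ref{thm08} to the full-row-rank case. First I would invoke part 1 of Theorem \ref{thm01}: under $\rank\left(\bm{G}\right)=p$ one has $\mathcal{Y}_{T}=\sn\bm{G}=\mathbb{R}^{p}$, so every $\bm{Y}_{d}\in\mathbb{R}^{p}$ is trackable and the hypotheses of Theorems \ref{thm07} and \ref{thm08} are automatically met for an arbitrary desired reference in $\mathbb{R}^{p}$, rather than only for those lying in a proper subspace.

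Next I would specialize the auxiliary matrices $\bm{H}_{1},\bm{H}_{2},\bm{F}_{1},\bm{F}_{2}$ of Section \ref{sec3} to the present case. Since $m=\rank\left(\bm{G}\right)=p$, the space $\sn\bm{G}$ already equals $\mathbb{R}^{p}$, so one may take $\bm{H}_{1}=I_{p}$ and $\bm{F}_{1}=I_{p}$, with the blocks $\bm{H}_{2}$ and $\bm{F}_{2}$ absent. Under this choice the design condition (\ref{eq20}) of Theorem \ref{thm07}, namely $\rho\left(I-\bm{F}_{1}^{\tp}\bm{G}\bm{K}\bm{H}_{1}\right)<1$, collapses exactly to $\rho\left(I-\bm{G}\bm{K}\right)<1$, which is (\ref{eq12}). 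Theorem \ref{thm07} then supplies a gain $\bm{K}$ for which the P-type updating law (\ref{eq19}) drives the output sequence to $\bm{Y}_{d}$.

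For the characterization (\ref{eq38}) of the learned inputs, I would substitute the same specialized $\bm{H}_{1}=I$ and $\bm{F}_{1}^{\tp}=I$ into formula (\ref{eq25}) of Theorem \ref{thm08}: every occurrence of $\bm{H}_{1}$ and $\bm{F}_{1}^{\tp}$ disappears, the factor $\bm{F}_{1}^{\tp}\bm{G}\bm{K}\bm{H}_{1}$ reduces to $\bm{G}\bm{K}$, and (\ref{eq38}) emerges verbatim. Convexity of the set and its parametrization by $\bm{U}_{0}\in\mathbb{R}^{q}$ are inherited from Theorem \ref{thm08}.

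The only real obstacle is confirming that the degenerate choice with $\bm{H}_{2}$ and $\bm{F}_{2}$ absent is compatible with Lemma \ref{lem01} and with the controllability decomposition in the proofs of Theorems \ref{thm07} and \ref{thm08}; this is routine, because the uncontrollable component $\widehat{\bm{E}}_{k}^{NC}$ simply does not exist when $m=p$. If one prefers to sidestep the degenerate block structure entirely, the formula can be rederived in one line without passing through Theorem \ref{thm08}: telescoping (\ref{eq19}) gives $\bm{U}_{\infty}=\bm{U}_{0}+\bm{K}\sum_{k=0}^{\infty}\bm{E}_{k}$, and $\bm{E}_{k}=\left(I-\bm{G}\bm{K}\right)^{k}\bm{E}_{0}$ together with (\ref{eq12}) makes the Neumann series converge to $\left(\bm{G}\bm{K}\right)^{-1}\bm{E}_{0}=\left(\bm{G}\bm{K}\right)^{-1}\left(\bm{Y}_{d}-\bm{G}\bm{U}_{0}\right)$, which delivers (\ref{eq38}) directly.
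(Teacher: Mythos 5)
Your proposal is correct and follows essentially the same route as the paper, which likewise takes $\bm{H}_{1}=I$ and $\bm{F}_{1}=I$ under $\rank\left(\bm{G}\right)=p$ and then reads the corollary off Theorems \ref{thm07} and \ref{thm08}. The supplementary telescoping/Neumann-series derivation of (\ref{eq38}) is a valid and pleasantly direct shortcut, but it is not needed for correctness.
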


\begin{proof}
Owing to $\rank\left(\bm{G}\right)=p$ and from Theorem \ref{thm04}, $\bm{H}_{1}=I$ and $\bm{F}_{1}=I$ can be directly taken. Then this corollary follows as an immediate consequence of Theorems \ref{thm07} and \ref{thm08}.
\end{proof}

As a counterpart of Corollary \ref{cor01}, the following tracking result of ILC under the full-column rank condition of $\bm{G}$ is presented.

\begin{cor}\label{cor02}
Let $\rank\left(\bm{G}\right)=q$ hold for the system (\ref{eq2}). Then for any realizable desired reference $\bm{Y}_{d}\in\mathcal{Y}_{R}$, there exists some updating law (\ref{eq19}) of ILC to both achieve the tracking objective (\ref{eq3}) and ensure
\begin{equation}\label{eq39}
\lim_{k\to\infty}\bm{U}_{k}=\left(\bm{G}^{\tp}\bm{G}\right)^{-1}\bm{G}^{\tp}\bm{Y}_{d},\quad\forall\bm{U}_{0}\in\mathbb{R}^{q}
\end{equation}

\noindent for which the design condition is to guarantee the gain matrix $\bm{K}$ to satisfy
\begin{equation}\label{eq40}
\rho\left(I-\bm{K}\bm{G}\right)<1.
\end{equation}
\end{cor}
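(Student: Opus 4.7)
\emph{Proof proposal.} The plan is to read this corollary as the ``full column rank'' specialization of Theorems~\ref{thm07} and~\ref{thm08}, and to argue that both the convergence formula and the design inequality~\eqref{eq40} collapse out of the general setting because $\rank(\bm G)=q$ forces $m=\dim(\sn\bm G)=q$. First I would invoke Theorem~\ref{thm01}(2) to note that $\rank(\bm G)=q$ yields $\mathcal{Y}_R=\mathcal{Y}_T=\sn\bm G$, so every realizable $\bm Y_d\in\mathcal{Y}_R$ is trackable and Theorems~\ref{thm07}--\ref{thm08} are applicable. Theorem~\ref{thm08} then tells me that the set $\mathcal{U}_{\mathrm{ILC}}(\bm Y_d)$ is a singleton given by~\eqref{eq27}, which is exactly the right-hand side of~\eqref{eq39}, and this same singleton coincides with $\mathcal{U}_d(\bm Y_d)$, so the unique desired input $\bm U_d=(\bm G^{\tp}\bm G)^{-1}\bm G^{\tp}\bm Y_d$ is learned irrespective of the choice of $\bm U_0$.

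Next I would justify that the general spectral condition~\eqref{eq20} reduces to the simpler~\eqref{eq40} in this regime. Because $m=q$, $\bm H_1\in\mathbb{R}^{p\times q}$ is a basis of $\sn\bm G$ and the columns of $\bm G$ lie in $\sn\bm G$, so $\bm G=\bm H_1\bm M$ for a unique $\bm M\in\mathbb{R}^{q\times q}$; the identities $\bm F_1^{\tp}\bm H_1=I$ and $\rank(\bm G)=q$ force $\bm M$ to be invertible, hence $\bm F_1^{\tp}\bm G=\bm M$. A similarity argument then gives
\[
\rho\bigl(I-\bm F_1^{\tp}\bm G\bm K\bm H_1\bigr)=\rho\bigl(I-\bm M\bm K\bm H_1\bigr)=\rho\bigl(I-\bm K\bm H_1\bm M\bigr)=\rho\bigl(I-\bm K\bm G\bigr),
\]
so~\eqref{eq20} is equivalent to~\eqref{eq40}, and such a $\bm K$ clearly exists (for instance, the left inverse $\bm K=(\bm G^{\tp}\bm G)^{-1}\bm G^{\tp}$ makes $I-\bm K\bm G=0$). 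With this the tracking objective~\eqref{eq3} is delivered by Theorem~\ref{thm07}.

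As a cleaner alternative I would probably just work with the input error directly, bypassing the transformation $\widehat{\bm E}_k=\bm F\bm E_k$: set $\widetilde{\bm U}_k\triangleq\bm U_k-\bm U_d$ and use $\bm E_k=\bm G\widetilde{\bm U}_k$ together with~\eqref{eq19} to obtain the one-line recursion
\[
\widetilde{\bm U}_{k+1}=\widetilde{\bm U}_k+\bm K\bm G(\bm U_d-\bm U_k)=(I-\bm K\bm G)\widetilde{\bm U}_k,\quad\forall k\in\mathbb{Z}_+,
\]
from which $\widetilde{\bm U}_k=(I-\bm K\bm G)^k\widetilde{\bm U}_0$. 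Convergence of $\bm U_k$ to $\bm U_d$ from arbitrary $\bm U_0$ is then equivalent to $\rho(I-\bm K\bm G)<1$, and $\bm E_k=\bm G\widetilde{\bm U}_k\to 0$ follows for free, which closes the argument.

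I do not anticipate a real obstacle; the only step that needs a careful word is verifying that~\eqref{eq40} is consistent with~\eqref{eq20}, because the two conditions look structurally different. The similarity computation above is the cleanest way to settle that. Everything else is a direct specialization of already-proved results, so the proof should be quite short.
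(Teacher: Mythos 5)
Your proposal is correct and follows essentially the same route as the paper: the tracking and limit results are read off from Theorems~\ref{thm07} and~\ref{thm08} (via \eqref{eq27}), and your ``cleaner alternative'' recursion $\widetilde{\bm{U}}_{k+1}=(I-\bm{K}\bm{G})\widetilde{\bm{U}}_{k}$ is exactly the paper's argument with $\delta\bm{U}_{k}=\bm{U}_{d}-\bm{U}_{k}$, which is what yields the design condition \eqref{eq40}. The only slip is the sign in ``$\bm{E}_{k}=\bm{G}\widetilde{\bm{U}}_{k}$'' (with your convention $\widetilde{\bm{U}}_{k}=\bm{U}_{k}-\bm{U}_{d}$ it should be $\bm{E}_{k}=-\bm{G}\widetilde{\bm{U}}_{k}$, though your subsequent computation is already consistent with this), and your extra similarity computation showing $\rho(I-\bm{F}_{1}^{\tp}\bm{G}\bm{K}\bm{H}_{1})=\rho(I-\bm{K}\bm{G})$ when $m=q$ is a valid explicit justification of the equivalence between \eqref{eq20} and \eqref{eq40} that the paper leaves implicit.
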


\begin{proof}
For the system (\ref{eq2}) under the updating law (\ref{eq19}), the tracking result of (\ref{eq3}) follows straightforwardly from Theorem \ref{thm07}, regardless of any $\bm{Y}_{d}\in\mathcal{Y}_{R}$. Simultaneously, we use Theorem \ref{thm08} for the case $\rank\left(\bm{G}\right)=q$, and can obtain (\ref{eq39}) thanks to (\ref{eq27}), where $\bm{U}_{d}=\left(\bm{G}^{\tp}\bm{G}\right)^{-1}\bm{G}^{\tp}\bm{Y}_{d}$ is exactly the unique solution for the LAE (\ref{eq1}). Furthermore, if we denote $\delta\bm{U}_{k}=\bm{U}_{d}-\bm{U}_{k}$, then we can equivalently deduce from (\ref{eq1}) and (\ref{eq28}) that
\begin{equation}\label{eq41}
\delta\bm{U}_{k+1}=\left(I-\bm{K}\bm{G}\right)\delta\bm{U}_{k},\quad\forall k\in\mathbb{Z}_{+}.
\end{equation}

\noindent Note that we can ensure $\lim_{k\to\infty}\delta\bm{U}_{k}=0$ by (\ref{eq39}) for any initial condition $\delta\bm{U}_{0}$. By considering this stability result for (\ref{eq41}), the design condition (\ref{eq40}) of $\bm{K}$ can be developed straightforwardly. The proof of this corollary is completed.
\end{proof}

\begin{rem}\label{rem6}
In Corollaries \ref{cor01} and \ref{cor02}, we specifically contribute to ILC for the controlled systems with the full (row or column) rank conditions. This class of ILC problems is one of the most considered classes in the ILC literature, especially by applying the simple P-type updating law (\ref{eq19}) with the design conditions (\ref{eq12}) and (\ref{eq40}). Based on Corollaries \ref{cor01} and \ref{cor02}, we reveal that this specific class of ILC problems can be particularly addressed in our framework developed by benefiting from the controllability and trackability criteria related to ILC systems.
\end{rem}

From Theorem \ref{thm07}, if the desired reference $\bm{Y}_d$ is untrackable, that is, $\bm{Y}_d\notin\sn\bm{G}$, then it is impossible to determine an input sequence $\left\{\bm{U}_{k}:k\in\mathbb{Z}_{+}\right\}$ to realize the tracking objective \eqref{eq3} for the system \eqref{eq2}. In this circumstance, we contribute to exploring the updating law \eqref{eq19} for the system \eqref{eq2} such that the sequence of output $\left\{\bm{Y}_{k}:k\in\mathbb{Z}_{+}\right\}$ converges (namely, $\bm{Y}_{\infty}\triangleq\lim_{k\to\infty}\bm{Y}_{k}$ exists), and simultaneously produces a converged sequence of input $\left\{\bm{U}_k:k\in\mathbb{Z}_+\right\}$ able to minimize $\left\|\bm{Y}_{d}-\bm{G}\bm{U}_{\infty}\right\|_{2}$, namely,
\[
\left\|\bm{Y}_{d}-\bm{Y}_{\infty}\right\|_{2}
=\left\|\bm{Y}_{d}-\bm{G}\bm{U}_{\infty}\right\|_{2}
=\min_{\widetilde{\bm{U}}\in\mathbb{R}^{q}}\left\|\bm{Y}_{d}-\bm{G}\widetilde{\bm{U}}\right\|_{2},~\forall\bm{Y}_{d}\notin\mathcal{Y}_{T}.
\]

\noindent It obviously corresponds to the calculation of the least squares solutions for the LAE (\ref{eq1}). Towards this end, let us denote the set of all least squares solutions for the LAE (\ref{eq1}) in the presence of any untrackable desired reference $\bm{Y}_{d}\notin\mathcal{Y}_{T}$ as
\begin{equation}\label{e1}
\aligned
\overline{\mathcal{U}}_{d}\left(\bm{Y}_{d}\right)
=\left\{\overline{\bm{U}}_{d}\in\mathbb{R}^{q}
\Big|\left\|\bm{Y}_{d}-\bm{G}\overline{\bm{U}}_{d}\right\|_{2}
=\min_{\widetilde{\bm{U}}\in\mathbb{R}^{q}}\left\|\bm{Y}_{d}-\bm{G}\widetilde{\bm{U}}\right\|_{2}\right\}&,\\
\forall\bm{Y}_{d}\notin\mathcal{Y}_{T}&.
\endaligned
\end{equation}

For the tracking problem of the system \eqref{eq2} in the presence of any $\bm{Y}_{d}\notin\mathcal{Y}_{T}$, we can present the following theorem to develop learning results under the updating law \eqref{eq19}.

\begin{thm}\label{t1}
Consider the system (\ref{eq2}) with any $\bm{Y}_{d}\notin\mathcal{Y}_{T}$, and let the updating law (\ref{eq19}) be applied with $\bm{K}=\bm{\widehat{K}}\bm{F}^{\tp}_1$, and the condition (\ref{eq21}) be satisfied. Then for any initial input $\bm{U}_{0}$, the sequences of the output $\bm{Y}_{k}$, $\forall k\in\mathbb{Z}_{+}$ and the input $\bm{U}_{k}$, $\forall k\in\mathbb{Z}_{+}$ converge, respectively, to
% such that the output and the input learned with ILC, respectively, form the sets
\begin{equation}\label{e2}
%\overline{\mathcal{Y}}_{\mathrm{ILC}}(\bm{Y}_{d})=\bigg\{\bm{H}_1\bm{F}_1^{\tp}\bm{Y}_d\Big| \sn\bm{H}_1=\sn\bm{G}\bigg\},\quad\forall\bm{Y}_{d}\notin\mathcal{Y}_{T}.
\bm{Y}_{\infty}%\triangleq\lim_{k\to\infty}\bm{Y}_k
=\bm{H}_{1}\bm{F}_{1}^{\tp}\bm{Y}_{d}
\end{equation}

\noindent and to $\bm{U}_{\infty}$ collected in the convex set as
\begin{equation}\label{e3}
\aligned
\overline{\mathcal{U}}_{\mathrm{ILC}}(\bm{Y}_{d})
=\Bigg\{\bm{U}_{\infty}&=\left[I-\widehat{\bm{K}}\left(\bm{F}_{1}^{\tp}\bm{G}\widehat{\bm{K}}\right)^{-1}\bm{F}_{1}^{\tp}\bm{G}\right]\bm{U}_{0}\\
&~~~+\widehat{\bm{K}}\left(\bm{F}_{1}^{\tp}\bm{G}\widehat{\bm{K}}\right)^{-1}\bm{F}_{1}^{\tp}\bm{Y}_{d}
\Big|\bm{U}_{0}\in\mathbb{R}^{q}\Bigg\},\\
&~~~~~~~~~~~~~~~~~~~~~~~~~~~~~~~\forall\bm{Y}_{d}\notin\mathcal{Y}_{T}.
\endaligned
\end{equation}

\noindent Moreover, it holds
\begin{equation}\label{e4}
\overline{\mathcal{U}}_{\mathrm{ILC}}(\bm{Y}_{d})
=\overline{\mathcal{U}}_{d}(\bm{Y}_{d}),\quad\forall\bm{Y}_{d}\notin\mathcal{Y}_{T}
\end{equation}

\noindent if and only if the selection of $\bm{H}$ is such that $\bm{H}_1^{\tp}\bm{H}_2=0$.
%\noindent such that $\|\bm{Y}_{d}-\overline{\mathcal{Y}}_{\mathrm{ILC}}(\bm{Y}_{d})\|_2\leq\|\bm{Y}_{d}-\bm{Y}\|_2$, $\forall \bm{Y}\in\mathbb{R}^p$.
\end{thm}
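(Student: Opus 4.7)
The plan is to mirror the structure of the proof of Theorem \ref{thm08}, only now allowing a persistent nonzero uncontrollable component in the tracking error. First I would substitute $\bm{K}=\widehat{\bm{K}}\bm{F}_1^{\tp}$ into the tracking error dynamics \eqref{eq4} and apply the transformation $\widehat{\bm{E}}_k=\bm{F}\bm{E}_k$ from Theorem \ref{thm05}. Using $\bm{F}_2^{\tp}\bm{G}=0$ from Lemma \ref{lem01} one obtains $\widehat{\bm{E}}_k^{NC}=\bm{F}_2^{\tp}\bm{E}_k=\bm{F}_2^{\tp}\bm{Y}_d$ for every $k\in\mathbb{Z}_+$, while the controllable component obeys $\widehat{\bm{E}}_{k+1}^C=(I-\bm{F}_1^{\tp}\bm{G}\widehat{\bm{K}})\widehat{\bm{E}}_k^C$, which decays to zero under \eqref{eq21}. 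Reassembling $\bm{E}_\infty=\bm{H}_1\cdot 0+\bm{H}_2\bm{F}_2^{\tp}\bm{Y}_d$ and invoking $\bm{H}_1\bm{F}_1^{\tp}+\bm{H}_2\bm{F}_2^{\tp}=I$ from Lemma \ref{lem01} yields $\bm{Y}_\infty=\bm{Y}_d-\bm{E}_\infty=\bm{H}_1\bm{F}_1^{\tp}\bm{Y}_d$, which is exactly \eqref{e2}.

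Next I would compute the input limit by summing the increments $\bm{U}_{k+1}-\bm{U}_k=\widehat{\bm{K}}\widehat{\bm{E}}_k^C=\widehat{\bm{K}}(I-\bm{F}_1^{\tp}\bm{G}\widehat{\bm{K}})^k\bm{F}_1^{\tp}\bm{E}_0$. The Neumann series $\sum_{k=0}^\infty(I-\bm{F}_1^{\tp}\bm{G}\widehat{\bm{K}})^k=(\bm{F}_1^{\tp}\bm{G}\widehat{\bm{K}})^{-1}$ converges by \eqref{eq21}, so
\[
\bm{U}_\infty=\bm{U}_0+\widehat{\bm{K}}(\bm{F}_1^{\tp}\bm{G}\widehat{\bm{K}})^{-1}\bm{F}_1^{\tp}(\bm{Y}_d-\bm{G}\bm{U}_0),
\]
which after regrouping is precisely the formula in \eqref{e3}; convexity of $\overline{\mathcal{U}}_{\mathrm{ILC}}(\bm{Y}_d)$ follows because $\bm{U}_0\mapsto\bm{U}_\infty$ is affine.

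To prepare for \eqref{e4} I would rewrite both sets as affine translates of $\nl\bm{G}$. Letting $P\triangleq I-\widehat{\bm{K}}(\bm{F}_1^{\tp}\bm{G}\widehat{\bm{K}})^{-1}\bm{F}_1^{\tp}\bm{G}$, one checks that $P$ is idempotent with $\bm{F}_1^{\tp}\bm{G}P=0$ and $P\widehat{\bm{K}}=0$, and a dimension count gives $\sn P=\nl(\bm{F}_1^{\tp}\bm{G})$. Since $\sn\bm{G}=\sn\bm{H}_1$ while $\nl\bm{F}_1^{\tp}=\sn\bm{H}_2$ (using $\bm{F}_1^{\tp}\bm{H}_2=0$ and $\dim\nl\bm{F}_1^{\tp}=p-m$) and $\sn\bm{H}_1\cap\sn\bm{H}_2=\{0\}$ by the nonsingularity of $\bm{H}$, one concludes $\nl(\bm{F}_1^{\tp}\bm{G})=\nl\bm{G}$. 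Thus $\overline{\mathcal{U}}_{\mathrm{ILC}}(\bm{Y}_d)=\widehat{\bm{K}}(\bm{F}_1^{\tp}\bm{G}\widehat{\bm{K}})^{-1}\bm{F}_1^{\tp}\bm{Y}_d+\nl\bm{G}$, whereas the normal equations characterise $\overline{\mathcal{U}}_d(\bm{Y}_d)=\bm{U}_d^*+\nl\bm{G}$ for any vector $\bm{U}_d^*$ whose image $\bm{G}\bm{U}_d^*$ is the orthogonal projection of $\bm{Y}_d$ onto $\sn\bm{G}$.

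The hard part, and the step I expect to take the most care, is the equivalence in \eqref{e4}. Using $\bm{H}_1=\bm{G}\bm{H}_3$ from the proof of Lemma \ref{lem02} one quickly shows $\bm{G}\widehat{\bm{K}}(\bm{F}_1^{\tp}\bm{G}\widehat{\bm{K}})^{-1}=\bm{H}_1$, so the two affine sets coincide if and only if their representatives have identical images under $\bm{G}$, i.e.\ $\bm{H}_1\bm{F}_1^{\tp}\bm{Y}_d$ is the orthogonal projection of $\bm{Y}_d$ onto $\sn\bm{G}$ for every $\bm{Y}_d\notin\mathcal{Y}_T$. Since $\bm{H}_1\bm{F}_1^{\tp}$ is already idempotent with range $\sn\bm{H}_1$ and kernel $\sn\bm{H}_2$, this happens exactly when those subspaces are orthogonal, which is precisely $\bm{H}_1^{\tp}\bm{H}_2=0$. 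This delivers both directions of \eqref{e4} and closes the proof.
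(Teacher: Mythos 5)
Your proposal is correct, and while the output-convergence part coincides with the paper's argument (controllability decomposition, constant uncontrollable component $\bm{F}_2^{\tp}\bm{Y}_d$, decaying controllable component under \eqref{eq21}), you diverge from the paper in two substantive places. For the input limit, the paper constructs an explicit nonsingular transformation $\overline{\bm{\Omega}}=[\overline{\bm{\Omega}}_1~\overline{\bm{\Omega}}_2]$ of the input recursion and splits it into a convergent $m$-dimensional subsystem and an iteration-invariant $(q-m)$-dimensional one, whereas you simply sum the increments $\Delta\bm{U}_k=\widehat{\bm{K}}\left(I-\bm{F}_1^{\tp}\bm{G}\widehat{\bm{K}}\right)^k\bm{F}_1^{\tp}\bm{E}_0$ and evaluate the Neumann series $\sum_{k\ge0}\left(I-\bm{F}_1^{\tp}\bm{G}\widehat{\bm{K}}\right)^k=\left(\bm{F}_1^{\tp}\bm{G}\widehat{\bm{K}}\right)^{-1}$; this is more elementary, reaches \eqref{e3} in one line, and avoids building $\overline{\bm{\Omega}}^{-1}$ entirely. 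For the equivalence \eqref{e4}, the paper first proves an auxiliary lemma parametrizing $\overline{\mathcal{U}}_d(\bm{Y}_d)$ by $\{1,3\}$-generalized inverses and then checks that $\widehat{\bm{K}}\left(\bm{F}_1^{\tp}\bm{G}\widehat{\bm{K}}\right)^{-1}\bm{F}_1^{\tp}$ is such an inverse exactly when $\bm{H}_1^{\tp}\bm{H}_2=0$; you instead exhibit both sets as cosets of $\nl\bm{G}$ (via the idempotent $P$ and the normal equations) and reduce the question to whether the idempotent $\bm{H}_1\bm{F}_1^{\tp}$, whose range is $\sn\bm{G}$ and kernel is $\sn\bm{H}_2$, is the \emph{orthogonal} projector onto $\sn\bm{G}$. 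These are two faces of the same fact -- a $\{1,3\}$-inverse is precisely one for which $\bm{G}\bm{G}^{\{1,3\}}$ is the orthogonal projector onto $\sn\bm{G}$ -- but your route dispenses with the generalized-inverse machinery of the paper's Lemma 8. Two small points to tighten: the identity $\bm{G}\widehat{\bm{K}}\left(\bm{F}_1^{\tp}\bm{G}\widehat{\bm{K}}\right)^{-1}=\bm{H}_1$ is most cleanly obtained from $\bm{G}=\left(\bm{H}_1\bm{F}_1^{\tp}+\bm{H}_2\bm{F}_2^{\tp}\right)\bm{G}=\bm{H}_1\bm{F}_1^{\tp}\bm{G}$ rather than from $\bm{H}_1=\bm{G}\bm{H}_3$; and in the necessity direction you should note explicitly that agreement of $\bm{H}_1\bm{F}_1^{\tp}$ with the orthogonal projector on the set $\mathbb{R}^p\setminus\sn\bm{G}$, which spans $\mathbb{R}^p$ since $\sn\bm{G}$ is proper, forces agreement as matrices (the paper's own necessity step has a comparable implicit reduction).
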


\begin{proof}
See Appendix B.
\end{proof}

\begin{rem}\label{rem12}
In Theorem \ref{t1}, we reveal that we can still obtain convergent updating laws of ILC in the presence of untrackable desired references. Since $\bm{H}_{1}\bm{F}_{1}^{\tp}$ is idempotent, (\ref{e2}) shows that for any $\bm{Y}_{d}\notin\mathcal{Y}_{T}$, the output of the system (\ref{eq2}) converges not to $\bm{Y}_{d}$ but to its projection onto $\sn\left(\bm{H}_{1}\bm{F}_{1}^{\tp}\right)$ along $\sn\left(\bm{H}_{2}\bm{F}_{2}^{\tp}\right)$. In particular, we can minimize the residual tracking error in the sense of the least squares error by setting $\bm{H}$ such that $\bm{H}_{1}^{\tp}\bm{H}_{2}=0$, and correspondingly determine all least squares solutions for the LAE (\ref{eq1}) by resorting to the converged inputs of the system (\ref{eq2}) under different selections of the initial inputs. Since $\bm{H}_{2}$ is determined to arrive at a nonsingular matrix together with $\bm{H}_{1}$, this property and the condition $\bm{H}_{1}^{\tp}\bm{H}_{2}=0$ can be accomplished simultaneously for some $\bm{H}_{2}$. This actually renders it possible to always develop Theorem \ref{t1} for any selected matrix $\bm{H}_{1}$.
%We show in Theorem \ref{t1} that though the tracking objective \eqref{eq3} is impossible to be achieved for any untrackable desired reference despite any input sequence, we can still take advantage of a P-type ILC algorithm such that the tracking error norm can converge to a minimum value under the special selection of $\bm{H}$ with $\bm{H}_1^{\tp}\bm{H}_2=0$. Similar to the inputs learned for the trackable desired references, the inputs learned by ILC for the untrackable desired references are dependent linearly upon the desired reference and the initial input. It is worth mentioning that for a given $\bm{Y}_d\notin\mathcal{Y}_T$, the learned output $\bm{Y}_\infty$ only relies on the selection of $\bm{H}$ in the presence of any initial input, whereas the learned input $\bm{U}_\infty$ is heavily influenced by the selections of not only $\bm{H}$ but also $\widehat{\bm{K}}$ and the initial input. Through the selections of different initial inputs, we can gain a fixed convex set $\overline{\mathcal{U}}_{\mathrm{ILC}}(\bm{Y}_{d})$ independent of the selections of $\bm{H}$ and $\widehat{\bm{K}}$, which is exactly the least squares solution set of the LAE \eqref{eq1} for any unsolvable $\bm{Y}_d$.
\end{rem}
%\begin{rem}
%In the proof of Theorem \ref{t1}, it shows the learned output $\bm{Y}_{\infty}$ for any untrackable desired reference is exactly the projection of $\bm{Y}_d$ onto $\sn\bm{H}_1$ along $\sn\bm{H}_2$, which is totally dependent on the selection of $\bm{H}_2$. For any trackable desired reference, we can verify from $\bm{H}_{1}\bm{F}_{1}^{\tp}\bm{Y}_d=\bm{Y}_d$ in Lemma \ref{lem02} that the learned output $\bm{Y}_{\infty}$ is also the projection of $\bm{Y}_d$ onto $\sn\bm{H}_1$ along $\sn\bm{H}_2$. Due to $\bm{Y}_d\in\sn\bm{H}_1$ for any trackable desired reference, the projection of $\bm{Y}_d$ onto $\sn\bm{H}_1$ along $\sn\bm{H}_2$ is always $\bm{Y}_d$ regardless of the selection of $\bm{H}_2$. As a consequence, it is interesting to see that the output learned via ILC is exactly the projection of $\bm{Y}_d$ onto $\sn\bm{H}_1$ along $\sn\bm{H}_2$ in spite of the trackability of the desired reference.
%\end{rem}

%%%%%%%%%%%%%%%%%%%%%%%%%%%%%%%%%%%%%%%%%%%%%%%%%%%%%%%%%%%%%%%%%%%%%%%%%%%%%%%%%%%%%%%%%%%%%%%%%%%%%%%%%%%%%%%%%%%%%%%%
\section{Finite-Iteration Convergence of ILC}\label{sec9}

In Theorems \ref{thm07}, \ref{thm08}, and \ref{t1}, the convergence of ILC only resorts to the spectral radius conditions (\ref{eq20}) and (\ref{eq21}) that can achieve the exponential convergence for the sequences of both $\bm{Y}_k$ and $\bm{U}_k$. A further question arising for the improvement of the ILC convergence rate is: whether and how can $\bm{Y}_k$ and $\bm{U}_k$ be driven to converge within finite iterations? If so, we can obtain control design tools to realize the convergence of iterative methods for solving LAEs within finite iterations, which works despite the calculation of solutions or least squares solutions for LAEs.

To provide the abovementioned question with an affirmative answer, we introduce the idea of deadbeat control to the design of ILC, and propose a finite iteration convergence result below.

\begin{thm}\label{t3}
Consider the system (\ref{eq2}) with any $\bm{Y}_{d}\in\mathbb{R}^{p}$, and let the updating law (\ref{eq19}) be applied. For $\bm{Y}_{d}\in\mathcal{Y}_{T}$, there exists some $\nu_{1}\in\mathbb{Z}_{m}$ ($m=\rank\left(\bm{G}\right)$) such that
\begin{equation}\label{e013}
\bm{U}_{k}
\left\{\aligned
&\neq\bm{U}_{d}, &\forall k&\leq\nu_{1}-1\\
&=\bm{U}_{d}, &\forall k&\geq\nu_{1}
\endaligned
\right.,
~~
\bm{Y}_{k}
\left\{\aligned
&\neq\bm{Y}_{d}, &\forall k&\leq\nu_{1}-1\\
&=\bm{Y}_{d}, &\forall k&\geq\nu_{1}
\endaligned
\right.
\end{equation}

\noindent holds for some $\bm{U}_{d}\in\mathcal{U}_{\mathrm{ILC}}(\bm{Y}_{d})$ given by (\ref{eq25}) if and only if
\begin{equation}\label{e11}
\rho\left(I-\bm{F}_{1}^{\tp}\bm{G}\bm{K}\bm{H}_{1}\right)=0
\end{equation}

\noindent where $\nu_{1}$ is the minimal integer such that
\begin{equation}\label{e12}
\left(I-\bm{F}_{1}^{\tp}\bm{G}\bm{K}\bm{H}_{1}\right)^{\nu_{1}-1}\neq0~\hbox{and}~
\left(I-\bm{F}_{1}^{\tp}\bm{G}\bm{K}\bm{H}_{1}\right)^{\nu_{1}}=0.
\end{equation}

\noindent For $\bm{Y}_{d}\notin\mathcal{Y}_{T}$, there exists some $\nu_{2}\in\mathbb{Z}_{m}$ such that
\begin{equation}\label{e014}
\bm{U}_{k}
\left\{\aligned
&\neq\overline{\bm{U}}_{d}, &\forall k&\leq\nu_{2}-1\\
&=\overline{\bm{U}}_{d}, &\forall k&\geq\nu_{2}
\endaligned
\right.,
~~
\bm{Y}_{k}
\left\{\aligned
&\neq\bm{H}_1\bm{F}_1^{\tp}\bm{Y}_{d}, &\forall k&\leq\nu_{2}-1\\
&=\bm{H}_1\bm{F}_1^{\tp}\bm{Y}_{d}, &\forall k&\geq\nu_{2}
\endaligned
\right.
\end{equation}

\noindent holds for some $\overline{\bm{U}}_{d}\in\overline{\mathcal{U}}_{\mathrm{ILC}}(\bm{Y}_{d})$ given by (\ref{e3}) if and only if
\begin{equation}\label{e13}
\rho\left(I-\bm{F}_{1}^{\tp}\bm{G}\widehat{\bm{K}}\right)=0.
\end{equation}

\noindent Further, $\bm{K}=\bm{\widehat{K}}\bm{F}^{\tp}_{1}$ holds for (\ref{eq19}), and $\nu_{2}$ is the minimal integer such that
\begin{equation}\label{e14}
\left(I-\bm{F}_{1}^{\tp}\bm{G}\widehat{\bm{K}}\right)^{\nu_{2}-1}\neq0~\hbox{and}~
\left(I-\bm{F}_{1}^{\tp}\bm{G}\widehat{\bm{K}}\right)^{\nu_{2}}=0.
\end{equation}
\end{thm}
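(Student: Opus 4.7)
The plan is to reduce both halves of the theorem to the elementary fact that, for a linear recurrence $\bm{z}_{k+1} = M \bm{z}_k$ on $\mathbb{R}^m$, the sequence $\bm{z}_k$ becomes identically zero after finitely many iterations (for a generic initial condition) if and only if $M$ is nilpotent, i.e.\ $\rho(M) = 0$, with the smallest such common termination index being the nilpotency index of $M$. The controllability decomposition built in Theorems \ref{thm05} and \ref{thm07} has already packaged the error dynamics (\ref{eq4}) into exactly this form, so the task is to identify the correct $M$ in each case and read off the consequences for $\bm{U}_k$ and $\bm{Y}_k$.

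For the trackable case $\bm{Y}_d \in \mathcal{Y}_T$, Theorem \ref{thm06} forces $\widehat{\bm{E}}_k^{NC} \equiv 0$, so equation (\ref{e15}) in the proof of Theorem \ref{thm07} yields $\widehat{\bm{E}}_k^C = M_1^k \widehat{\bm{E}}_0^C$ with $M_1 = I - \bm{F}_1^\tp \bm{G} \bm{K} \bm{H}_1 \in \mathbb{R}^{m \times m}$, hence $\bm{E}_k = \bm{H}_1 M_1^k \widehat{\bm{E}}_0^C$. Since $\bm{H}_1$ has full column rank, $\bm{E}_k = 0$ is equivalent to $M_1^k \widehat{\bm{E}}_0^C = 0$, and the update (\ref{eq19}) then freezes $\bm{U}_{k+1} = \bm{U}_k$. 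I would invoke the standard equivalence that $\rho(M_1) = 0$ holds if and only if $M_1$ is nilpotent with some nilpotency index $\nu_1 \leq m$ (via Jordan form, or via Cayley--Hamilton since the characteristic polynomial is then $\lambda^m$), which directly produces both directions of the iff and pins $\nu_1$ to the minimal integer satisfying (\ref{e12}). The constant value reached by $\bm{U}_k$ for $k \geq \nu_1$ is the limit $\bm{U}_d \in \mathcal{U}_{\mathrm{ILC}}(\bm{Y}_d)$ already identified in Theorem \ref{thm08}.

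For the untrackable case $\bm{Y}_d \notin \mathcal{Y}_T$, I would exploit the identities $\bm{F}_1^\tp \bm{H}_1 = I$ and $\bm{F}_1^\tp \bm{H}_2 = 0$ (both from $\bm{F}\bm{H} = I$, Lemma \ref{lem01}) to show that the gain factorization $\bm{K} = \widehat{\bm{K}} \bm{F}_1^\tp$ decouples $\widehat{\bm{E}}_k^{NC}$ from the $\widehat{\bm{E}}_k^C$-equation, reducing it to the autonomous recursion $\widehat{\bm{E}}_{k+1}^C = M_2 \widehat{\bm{E}}_k^C$ with $M_2 = I - \bm{F}_1^\tp \bm{G} \widehat{\bm{K}}$, exactly as was used in the proof of Theorem \ref{t1}. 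Because $\widehat{\bm{E}}_k^{NC} = \bm{F}_2^\tp \bm{Y}_d$ is frozen by Theorem \ref{thm05}, the same nilpotency argument now gives the iff with $\rho(M_2) = 0$ and defines $\nu_2$ by (\ref{e14}). When $\widehat{\bm{E}}_k^C = 0$ one has $\bm{E}_k = \bm{H}_2 \bm{F}_2^\tp \bm{Y}_d$, so $\bm{Y}_k = (I - \bm{H}_2 \bm{F}_2^\tp) \bm{Y}_d = \bm{H}_1 \bm{F}_1^\tp \bm{Y}_d$ by the idempotence identity of Lemma \ref{lem01}, and the input freezes at some $\overline{\bm{U}}_d \in \overline{\mathcal{U}}_{\mathrm{ILC}}(\bm{Y}_d)$ supplied by Theorem \ref{t1}.

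The most delicate step will be matching the intrinsic definitions (\ref{e12}) and (\ref{e14}) of $\nu_1,\nu_2$ as nilpotency indices of $M_1,M_2$ to the strict termination pattern in (\ref{e013}) and (\ref{e014}), which demands $\bm{U}_k \neq \bm{U}_d$ for every $k \leq \nu_1 - 1$. An initial condition $\widehat{\bm{E}}_0^C \in \ker M_1^{\,j}$ with $j < \nu_1$ would cause convergence strictly before the nilpotency index, so I would read ``there exists some $\nu_1$'' in the statement as asserting the worst-case step, attained whenever $\widehat{\bm{E}}_0^C \notin \ker M_1^{\nu_1 - 1}$. Such vectors exist precisely because $M_1^{\nu_1 - 1} \neq 0$, and the initial input $\bm{U}_0$ can be chosen to realize any desired $\widehat{\bm{E}}_0^C = \bm{F}_1^\tp (\bm{Y}_d - \bm{G} \bm{U}_0)$; the argument for $M_2$ and $\nu_2$ is identical. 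Everything else in the proof collapses to routine algebraic consequences of the controllability decomposition and Lemma \ref{lem01}.
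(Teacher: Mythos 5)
Your proposal is correct and takes essentially the same route as the paper: the paper runs the nilpotency argument on the transformed input component $\bm{U}_{1,k}^{\ast}=\bm{F}_{1}^{\tp}\bm{G}\bm{U}_{k}$ via (\ref{eq32}) and (\ref{e016}), and since $\bm{U}_{1,k}^{\ast}-\bm{F}_{1}^{\tp}\bm{Y}_{d}=-\widehat{\bm{E}}_{k}^{C}$ this is exactly the recursion you analyze in error coordinates, governed by the same matrices $I-\bm{F}_{1}^{\tp}\bm{G}\bm{K}\bm{H}_{1}$ and $I-\bm{F}_{1}^{\tp}\bm{G}\widehat{\bm{K}}$. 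Your explicit handling of initial conditions for which termination occurs strictly before the nilpotency index is a welcome refinement of a step the paper passes over with ``clearly,'' but it does not constitute a different method.
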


\begin{proof}
See Appendix C.
\end{proof}

\begin{rem}\label{rem3}
With Theorem \ref{t3}, a finite iteration convergence result of ILC is obtained, regardless of trackable or untrackable desired references. This is realized by appropriate selections of the ILC gain matrix. In addition, the number $\nu_{1}$ (respectively, $\nu_{2}$) of iteration steps for ILC convergence is exactly the degree of the minimal polynomial for $I-\bm{F}_{1}^{\tp}\bm{G}\bm{K}\bm{H}_{1}$ (respectively, $I-\bm{F}_{1}^{\tp}\bm{G}\widehat{\bm{K}}$) versus the trackable (respectively, untrackable) desired reference. It actually incorporates the idea of deadbeat control into the improvement of the convergence rate for ILC, which is thus applicable for improving the convergence rate of iterative methods for solving LAEs from the control design perspective. %It is worth noting also that owing to (\ref{e017}), (\ref{e11}) holds when we particularly adopt $\bm{K}=\widehat{\bm{K}}\bm{F}^{\tp}_{1}$ by selecting $\widehat{\bm{K}}$ according to (\ref{e13}). Since $\bm{F}_{1}^{\tp}\bm{G}$ is of the full-row rank based on (\ref{e019}), we can obtain a candidate selection of $\widehat{\bm{K}}$ satisfying (\ref{e13}) as
%\begin{equation}\label{e018}
%\widehat{\bm{K}}=\bm{G}^{\tp}\bm{F}_{1}\left(\bm{F}_{1}^{\tp}\bm{G}\bm{G}^{\tp}\bm{F}_{1}\right)^{-1}\left(I-\widetilde{\bm{K}}\right)
%\end{equation}
%
%\noindent where $\widetilde{\bm{K}}\in\mathbb{R}^{m\times m}$ is any nilpotent matrix.
\end{rem}

%%%%%%%%%%%%%%%%%%%%%%%%%%%%%%%%%%%%%%%%%%%%%%%%%%%%%%%%%%%%%%%%%%%%%%%%%%%%%%%%%%%%%%%%%%%%%%%%%%%%%%%%%%%%%%%%%%%%%%%%
\section{Problem-Solving of LAEs Via ILC Methods}\label{sec6}

In this section, we consider how to leverage the ILC method to solve LAEs. Before proceeding further, we present a lemma for the gain matrix design of ILC.

\begin{lem}\label{lem09}
For any $\bm{G}\in\mathbb{R}^{p\times q}$, there exists some $\bm{K}\in\mathbb{R}^{q\times p}$ to satisfy (\ref{eq20}) (respectively, (\ref{e11})) if and only if there exists some $\widehat{\bm{K}}\in\mathbb{R}^{q\times m}$ to satisfy (\ref{eq21}) (respectively, (\ref{e13})).
\end{lem}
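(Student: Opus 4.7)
The plan is to prove both equivalences simultaneously by exhibiting an explicit correspondence between admissible $\bm{K}$ and admissible $\widehat{\bm{K}}$, and showing that under this correspondence the matrix inside the spectral radius is literally unchanged. The key algebraic fact I will rely on is $\bm{F}_{1}^{\tp}\bm{H}_{1}=I$, which was already established inside the proof of Lemma \ref{lem01} from $\bm{F}\bm{H}=I$. Once this is in hand, neither the strict-inequality case nor the deadbeat case requires any perturbation argument.

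For the forward direction, given $\bm{K}\in\mathbb{R}^{q\times p}$ satisfying (\ref{eq20}) (resp.\ (\ref{e11})), I will define $\widehat{\bm{K}}=\bm{K}\bm{H}_{1}\in\mathbb{R}^{q\times m}$. Then $\bm{F}_{1}^{\tp}\bm{G}\widehat{\bm{K}}=\bm{F}_{1}^{\tp}\bm{G}\bm{K}\bm{H}_{1}$, so the matrix appearing in (\ref{eq21}) (resp.\ (\ref{e13})) coincides with the matrix appearing in (\ref{eq20}) (resp.\ (\ref{e11})), and the required bound on the spectral radius transfers verbatim. Notice that the dimensional consistency is automatic: $\bm{K}$ is $q\times p$, $\bm{H}_{1}$ is $p\times m$, so $\widehat{\bm{K}}$ lands in $\mathbb{R}^{q\times m}$ as needed.

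For the reverse direction, given $\widehat{\bm{K}}\in\mathbb{R}^{q\times m}$ satisfying (\ref{eq21}) (resp.\ (\ref{e13})), I will define $\bm{K}=\widehat{\bm{K}}\bm{F}_{1}^{\tp}\in\mathbb{R}^{q\times p}$. Using $\bm{F}_{1}^{\tp}\bm{H}_{1}=I$, I then compute
\[
\bm{F}_{1}^{\tp}\bm{G}\bm{K}\bm{H}_{1}
=\bm{F}_{1}^{\tp}\bm{G}\widehat{\bm{K}}\bm{F}_{1}^{\tp}\bm{H}_{1}
=\bm{F}_{1}^{\tp}\bm{G}\widehat{\bm{K}},
\]
so again the two matrices whose spectral radii are being compared are identical, and the condition (\ref{eq20}) (resp.\ (\ref{e11})) is immediately inherited from (\ref{eq21}) (resp.\ (\ref{e13})).

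Since both the strict-inequality and the zero-spectral-radius versions reduce to the same matrix identity, the ``respectively'' case is handled by literally the same construction, giving a unified proof. I do not anticipate a genuine obstacle here; the only thing to be careful about is keeping the factorizations of $\bm{K}$ and $\widehat{\bm{K}}$ consistent with the row/column dimensions and ensuring that $\bm{F}_{1}^{\tp}\bm{H}_{1}=I$ is quoted rather than re-derived.
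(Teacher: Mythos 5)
Your proposal is correct and follows essentially the same route as the paper: the paper's sufficiency step likewise sets $\bm{K}=\widehat{\bm{K}}\bm{F}_{1}^{\tp}$ and uses $\bm{F}_{1}^{\tp}\bm{H}_{1}=I$ to identify $\bm{F}_{1}^{\tp}\bm{G}\bm{K}\bm{H}_{1}$ with $\bm{F}_{1}^{\tp}\bm{G}\widehat{\bm{K}}$, and its necessity step takes $\widehat{\bm{K}}=\bm{K}\bm{H}_{1}$ exactly as you do. No gaps; the observation that both the strict-inequality and deadbeat conditions reduce to the same matrix identity matches the paper's treatment.
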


\begin{proof}
{\it Sufficiency:} Due to $\bm{F}_{1}^{\tp}\bm{H}_{1}=I$, we take $\bm{K}=\widehat{\bm{K}}\bm{F}_{1}^{\tp}$ and can arrive at $\bm{F}_{1}^{\tp}\bm{G}\bm{K}\bm{H}_{1}=\bm{F}_{1}^{\tp}\bm{G}\widehat{\bm{K}}\left(\bm{F}_{1}^{\tp}\bm{H}_{1}\right)
=\bm{F}_{1}^{\tp}\bm{G}\widehat{\bm{K}}$. Thus, we can obtain (\ref{eq20}) (respectively, (\ref{e11})) from (\ref{eq21}) (respectively, (\ref{e13})) by taking $\bm{K}=\widehat{\bm{K}}\bm{F}_{1}^{\tp}$.

{\it Necessity:} Let $\widehat{\bm{K}}=\bm{K}\bm{H}_{1}$, and then it is immediate to deduce (\ref{eq21}) (respectively, (\ref{e13})) from (\ref{eq20}) (respectively, (\ref{e11})).
\end{proof}

By Lemma \ref{lem09}, we always design the updating law (\ref{eq19}) with $\bm{K}=\widehat{\bm{K}}\bm{F}_{1}^{\tp}$ under the conditions (\ref{eq21}) and (\ref{e13}) without loss of generality. Since $\bm{F}_{1}^{\tp}\bm{G}$ is of the full-row rank based on (\ref{e019}), we can obtain the candidate selections of $\widehat{\bm{K}}$, respectively, fulfilling (\ref{eq21}) as
\begin{equation}\label{e017}
\widehat{\bm{K}}=\bm{G}^{\tp}\bm{F}_{1}\left(\bm{F}_{1}^{\tp}\bm{G}\bm{G}^{\tp}\bm{F}_{1}\right)^{-1}\left(I-\overline{\bm{K}}\right)
\end{equation}

\noindent and (\ref{e13}) as
\begin{equation}\label{e018}
\widehat{\bm{K}}=\bm{G}^{\tp}\bm{F}_{1}\left(\bm{F}_{1}^{\tp}\bm{G}\bm{G}^{\tp}\bm{F}_{1}\right)^{-1}\left(I-\widetilde{\bm{K}}\right)
\end{equation}

\noindent where $\overline{\bm{K}}\in\mathbb{R}^{m\times m}$ is any stable matrix (i.e., its eigenvalues are within the unit circle), and $\widetilde{\bm{K}}\in\mathbb{R}^{m\times m}$ is any nilpotent matrix (i.e., $\widetilde{\bm{K}}^{n}=0$ holds for some positive integer $n$ \cite{lt:85}).

With Lemma \ref{lem09}, a fundamental solvability result is proposed for LAEs in the following theorem.

\begin{thm}\label{thm09}
Given $\bm{Y}_{d}\in\mathbb{R}^{p}$, the LAE (\ref{eq1}) is solvable if and only if $\bm{Y}_{d}$ is trackable for the system (\ref{eq2}), where the application of the updating law (\ref{eq19}) with $\bm{K}=\widehat{\bm{K}}\bm{F}_{1}^{\tp}$ based on the condition (\ref{eq21}) determines the solution to the LAE (\ref{eq1}) in the convergence form of $\lim_{k\to\infty}\bm{U}_{k}$, which is collected with an analytical form in the convex set (\ref{eq25}). Further, if $\widehat{\bm{K}}$ is designed according to (\ref{e13}), then the solution to the LAE (\ref{eq1}) can be determined within finite iterations.
\end{thm}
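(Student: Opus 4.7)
The plan is to recognize that Theorem \ref{thm09} is essentially a packaging of results already established in Sections \ref{sec5} and \ref{sec9}, together with the design reduction of Lemma \ref{lem09}. Accordingly, the proof will proceed in three short steps: (i) reduce the solvability/trackability equivalence to Definition \ref{defi2}; (ii) obtain the convergent updating law and identify its limit with a solution of the LAE; (iii) upgrade convergence to finite-step termination under the deadbeat design.

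For step (i), I would note that by Definition \ref{defi2}, the trackability of $\bm{Y}_{d}$ for the system (\ref{eq2}) is synonymous with the existence of some $\bm{U}_{d}\in\mathbb{R}^{q}$ satisfying $\bm{Y}_{d}=\bm{G}\bm{U}_{d}$, which is exactly the solvability of the LAE (\ref{eq1}). The ``if and only if'' thus requires no further argument. For step (ii), assuming $\bm{Y}_{d}\in\mathcal{Y}_{T}$, I would invoke Lemma \ref{lem09} to translate the design choice $\bm{K}=\widehat{\bm{K}}\bm{F}_{1}^{\tp}$ under (\ref{eq21}) into the equivalent spectral radius condition (\ref{eq20}) required by Theorem \ref{thm08}. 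With this in hand, Theorem \ref{thm07} guarantees the tracking objective (\ref{eq3}), while Theorem \ref{thm08} yields both the convergence of $\bm{U}_{k}$ and the analytical characterization (\ref{eq25}) of the limit, along with the identity $\mathcal{U}_{\mathrm{ILC}}(\bm{Y}_{d})=\mathcal{U}_{d}(\bm{Y}_{d})$. Since $\mathcal{U}_{d}(\bm{Y}_{d})$ defined in (\ref{eq24}) is nothing but the solution set of the LAE (\ref{eq1}), the converged input $\lim_{k\to\infty}\bm{U}_{k}$ is a genuine solution and (\ref{eq25}) parameterizes all such solutions reachable by varying the initial input $\bm{U}_{0}$.

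For step (iii), I would again apply Lemma \ref{lem09} in its second form: designing $\widehat{\bm{K}}$ so that (\ref{e13}) holds is equivalent to $\bm{K}=\widehat{\bm{K}}\bm{F}_{1}^{\tp}$ satisfying (\ref{e11}). The finite-iteration theorem (Theorem \ref{t3}) then furnishes some $\nu_{1}\in\mathbb{Z}_{m}$ such that $\bm{U}_{k}=\bm{U}_{d}$ and $\bm{Y}_{k}=\bm{Y}_{d}$ for all $k\geq\nu_{1}$, showing that the solution is computed in a bounded number of iterations not exceeding $\rank(\bm{G})$.

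The proof contains no deep obstacle, since the substantive work has been absorbed into Theorems \ref{thm07}, \ref{thm08}, \ref{t3} and Lemma \ref{lem09}. The one point requiring care is the compatibility of the design reduction $\bm{K}=\widehat{\bm{K}}\bm{F}_{1}^{\tp}$ with the hypotheses of Theorem \ref{thm08}, i.e., verifying that the spectral radius condition (\ref{eq20}) is indeed implied by (\ref{eq21}) under this choice; Lemma \ref{lem09} handles this directly, and an analogous check with (\ref{e11}) and (\ref{e13}) suffices for the finite-step part.
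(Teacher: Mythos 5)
Your proposal is correct and follows essentially the same route as the paper, whose proof simply invokes Definition \ref{defi2} together with Theorems \ref{thm07}, \ref{thm08}, and \ref{t3}; your additional explicit use of Lemma \ref{lem09} to reconcile the condition (\ref{eq21}) on $\widehat{\bm{K}}$ with the hypothesis (\ref{eq20}) of Theorem \ref{thm08} is exactly the bridge the paper sets up immediately before stating the theorem.
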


\begin{proof}
With Definition \ref{defi2}, this theorem can be developed based on Theorems \ref{thm07}, \ref{thm08}, and \ref{t3}.
\end{proof}

From Lemma \ref{lem02}, we can resort to $\bm{H}_{1}\bm{F}_{1}^{\tp}\bm{Y}_d=\bm{Y}_d$ to establish the solvability of the LAE (\ref{eq1}) in Theorem \ref{thm09} for any $\bm{Y}_{d}\in\mathbb{R}^{p}$. We also note from (\ref{eq25}) and (\ref{eq26}) that the solutions to the LAE (\ref{eq1}) depend on the selection of the initial input $\bm{U}_{0}$ of the system (\ref{eq2}), and thus we may determine all solutions through choosing all candidates of $\bm{U}_{0}$ in $\mathbb{R}^{q}$.
%
%Based on Lemma \ref{lem02}, the solvability of the LAE (\ref{eq1}) can be ensured for any $\bm{Y}_{d}\in\mathbb{R}^{p}$ in Theorem \ref{thm09} if $\bm{F}_{2}^{\tp}\bm{Y}_{d}=0$. We can see from (\ref{eq25}) and (\ref{eq26}) that the solutions to the LAE (\ref{eq1}) depend on the selection of the initial input $\bm{U}_{0}$ of the system (\ref{eq2}), and hence, we may determine all solutions through choosing all candidates of $\bm{U}_{0}$ in $\mathbb{R}^{q}$. In particular, if $\rank\left(\bm{G}\right)=q$ holds, then the LAE (\ref{eq1}) has a unique solution that can be determined in a monotonic convergence manner by selecting $\bm{K}$ to satisfy
%\begin{equation*}\label{}
%\left\|I-\bm{K}\bm{G}\right\|<1.
%\end{equation*}

Next, we explore Theorem \ref{thm09} to present an implementation algorithm of ILC to determine solutions for solvable LAEs.

{\bf Algorithm 1: Solving the LAE (\ref{eq1}) for any $\bm{Y}_{d}\in\mathcal{Y}_{T}$}
\begin{enumerate}
\item[1)]
Set a tolerance $\varepsilon>0$, calculate $\bm{K}=\widehat{\bm{K}}\bm{F}_{1}^{\tp}$ by determining $\bm{F}_{1}$ and designing $\widehat{\bm{K}}$ according to (\ref{eq21}) or (\ref{e13}), and select the initial input $\bm{U}_{0}$.

\item[2)]
Let $k=0$, and go to the step 3) to start the iteration.

\item[3)]
Apply $\bm{U}_{k}$ to the system (\ref{eq2}) to measure $\bm{Y}_{k}$.

\item[4)]
Calculate $\bm{E}_{k}=\bm{Y}_{d}-\bm{Y}_{k}$. If $\|\bm{E}_{k}\|<\epsilon$, then go to the step 7) by returning the solution ``$\bm{U}_{d}=\bm{U}_{k}$;'' and otherwise, go to the step 5).

\item[5)]
Apply the updating law (\ref{eq19}) to calculate $\bm{U}_{k+1}$.

\item[6)]
Let $k=k+1$, and return to the step 3).

\item[7)]
Stop the iteration.
\end{enumerate}

To determine $\bm{F}_{1}$ in the Algorithm 1, we only need to choose $\bm{H}_{1}$ as a full-column rank matrix that satisfies $\sn\bm{H}_{1}=\sn\bm{G}$, and can directly take $\bm{F}_{1}=\bm{H}_{1}\left(\bm{H}_{1}^{\tp}\bm{H}_{1}\right)^{-1}$. This is thanks to the property that the selection of $\bm{H}_{2}$, together with that of $\bm{F}_{2}$, does not have influence on both design condition and convergence result of ILC though it plays a significant role in implementing the convergence analysis of ILC.

When the LAE (\ref{eq1}) is unsolvable, the system (\ref{eq2}) is subject to the untrackable desired references. However, the ILC updating law \eqref{eq19} is still applicable, and helps calculate the least squares solutions to the LAE \eqref{eq1}, as shown in the following theorem.

\begin{thm}\label{t2}
For the LAE (\ref{eq1}) with any $\bm{Y}_{d}\notin\mathcal{Y}_{T}$, the system (\ref{eq2}) with the application of the updating law (\ref{eq19}) for $\bm{K}=\widehat{\bm{K}}\bm{F}_{1}^{\tp}$ under the condition (\ref{eq21}) determines its least squares solution in the convergence form of $\lim_{k\to\infty}\bm{U}_{k}$, which is collected with an analytical form in the convex set (\ref{e3}) if and only if $\bm{H}_{1}^{\tp}\bm{H}_{2}=0$ holds for the selection of $\bm{H}$. Moreover, if $\widehat{\bm{K}}$ is designed based on (\ref{e13}), then the least squares solution to the LAE (\ref{eq1}) can be determined within finite iterations.
\end{thm}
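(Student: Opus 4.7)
The plan is to treat Theorem \ref{t2} as essentially a translation of the previously established tracking results, Theorem \ref{t1} and Theorem \ref{t3}, into the language of LAE solving. The only genuinely new content is the identification of $\overline{\mathcal{U}}_{d}(\bm{Y}_d)$ with the set of least squares solutions to \eqref{eq1} for $\bm{Y}_d\notin\mathcal{Y}_T$, which is immediate from the definition \eqref{e1}. Hence the proof reduces to stitching together the three ingredients: convergence of $\bm{U}_k$, characterization of the limit set, and the finite-iteration refinement.

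First, I would invoke Theorem \ref{t1} under its hypotheses (namely, the updating law \eqref{eq19} with $\bm{K}=\widehat{\bm{K}}\bm{F}_1^{\tp}$ and the spectral radius condition \eqref{eq21}) to conclude that for any initial input $\bm{U}_0\in\mathbb{R}^q$, the sequence $\{\bm{U}_k\}$ produced by iterating \eqref{eq2} and \eqref{eq19} is convergent, and $\bm{U}_\infty=\lim_{k\to\infty}\bm{U}_k$ lies in the convex set $\overline{\mathcal{U}}_{\mathrm{ILC}}(\bm{Y}_d)$ given in \eqref{e3}. This directly yields the analytical formula claimed in the theorem for the limit.

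Second, to obtain the equivalence with the least squares solution set, I would again appeal to Theorem \ref{t1}, which provides precisely the statement that $\overline{\mathcal{U}}_{\mathrm{ILC}}(\bm{Y}_d)=\overline{\mathcal{U}}_{d}(\bm{Y}_d)$ holds if and only if the selection of $\bm{H}$ satisfies $\bm{H}_1^{\tp}\bm{H}_2=0$. Since $\overline{\mathcal{U}}_{d}(\bm{Y}_d)$ is by its defining relation \eqref{e1} exactly the collection of all vectors $\overline{\bm{U}}_d\in\mathbb{R}^q$ that minimize $\|\bm{Y}_d-\bm{G}\widetilde{\bm{U}}\|_2$, this equivalence transfers verbatim into the statement that $\bm{U}_\infty$ is a least squares solution of \eqref{eq1} with the analytical form \eqref{e3} if and only if $\bm{H}_1^{\tp}\bm{H}_2=0$.

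Third, for the finite-iteration refinement I would invoke the untrackable branch of Theorem \ref{t3}: strengthening \eqref{eq21} to the nilpotency-type condition \eqref{e13} on $I-\bm{F}_1^{\tp}\bm{G}\widehat{\bm{K}}$ forces $\bm{U}_k=\overline{\bm{U}}_d$ for all $k\geq\nu_2$, where $\nu_2\in\mathbb{Z}_m$ is the degree of the minimal polynomial of $I-\bm{F}_1^{\tp}\bm{G}\widehat{\bm{K}}$ in accordance with \eqref{e14}, and $\overline{\bm{U}}_d\in\overline{\mathcal{U}}_{\mathrm{ILC}}(\bm{Y}_d)$. Combining this with the equivalence $\overline{\mathcal{U}}_{\mathrm{ILC}}(\bm{Y}_d)=\overline{\mathcal{U}}_d(\bm{Y}_d)$ secured in the second step gives the finite-iteration determination of a least squares solution, completing the proof. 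There is no real obstacle here; all the substantive work is done in Theorems \ref{t1} and \ref{t3}, and Theorem \ref{t2} is essentially a repackaging of those results for the LAE solvability viewpoint, with the only care required being to check that Lemma \ref{lem09} guarantees the design $\bm{K}=\widehat{\bm{K}}\bm{F}_1^{\tp}$ is consistent between the two invocations.
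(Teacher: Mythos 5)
Your proposal is correct and follows exactly the route the paper takes: the paper's own proof of Theorem \ref{t2} is simply a one-line citation of Theorems \ref{t1} and \ref{t3}, and your three-step elaboration (convergence and the limit set from Theorem \ref{t1}, the $\bm{H}_{1}^{\tp}\bm{H}_{2}=0$ equivalence from Theorem \ref{t1}, and the finite-iteration branch from Theorem \ref{t3}) is precisely the intended unpacking of that citation.
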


\begin{proof}
A consequence of applying Theorems \ref{t1} and \ref{t3}.
\end{proof}

With Theorem \ref{t2}, we show an implementation algorithm of ILC to determine least squares solutions for unsolvable LAEs.

{\bf Algorithm 2: Solving the LAE (\ref{eq1}) for any $\bm{Y}_{d}\notin\mathcal{Y}_{T}$}
\begin{enumerate}
\item[1)]
Select a full-column rank matrix $\bm{H}_{1}$ such that $\sn\bm{H}_{1}=\sn\bm{G}$, and calculate $\bm{F}_{1}=\bm{H}_{1}\left(\bm{H}_{1}^{\tp}\bm{H}_{1}\right)^{-1}$.
%Determine $\bm{H}_{1}$ and $\bm{F}_{1}$ such that $\bm{F}_{1}=\bm{H}_{1}$.

\item[2)]
Set a tolerance $\varepsilon>0$, calculate $\bm{K}=\widehat{\bm{K}}\bm{F}_{1}^{\tp}$ by determining $\widehat{\bm{K}}$ based on (\ref{eq21}) or (\ref{e13}), and select an initial input $\bm{U}_{0}$.

\item[3)]
Let $k=0$, and go to the step 4) to start the iteration.

\item[4)]
Apply $\bm{U}_{k}$ to the system (\ref{eq2}) to measure $\bm{Y}_{k}$.

\item[5)]
Calculate $\overline{\bm{E}}_{k}=\bm{H}_{1}\bm{F}_{1}^{\tp}\bm{Y}_{d}-\bm{Y}_{k}$. If $\left\|\overline{\bm{E}}_{k}\right\|<\epsilon$, then go to the step 8) by returning the least squares solution ``$\overline{\bm{U}}_{d}=\bm{U}_{k}$;'' and otherwise, go to the step 6).

\item[6)]
Apply the updating law (\ref{eq19}) to calculate $\bm{U}_{k+1}$.

\item[7)]
Let $k=k+1$, and return to the step 4).

\item[8)]
Stop the iteration.
\end{enumerate}

For the Algorithm 2, the use of $\bm{F}_{1}=\bm{H}_{1}\left(\bm{H}_{1}^{\tp}\bm{H}_{1}\right)^{-1}$ ensures $\bm{H}_{1}^{\tp}\bm{H}_{2}=0$ since the selection of $\bm{H}_{2}$ needs to satisfy $\bm{F}=\bm{H}^{-1}$. This renders the implementation of the Algorithm 2 capable of determining the least squares solutions for LAEs. To determine whether the Algorithm 1 or 2 should be used for any LAE (\ref{eq1}), it depends on whether $\bm{Y}_{d}\in\mathcal{Y}_{T}$ or $\bm{Y}_{d}\notin\mathcal{Y}_{T}$, which can instead be validated by whether the criterion $\bm{H}_{1}\bm{F}_{1}^{\tp}\bm{Y}_d=\bm{Y}_d$ is satisfied or not. %As an alternative of this criterion, we may also employ the well-known equivalence result that the LAE (\ref{eq1}) is solvable if and only if $\rank\left(\bm{G}\right)=\rank\left(\left[\bm{G},\bm{Y}_{d}\right]\right)$ holds.
In particular, for the design of $\widehat{\bm{K}}$, we can directly use (\ref{e017}) and (\ref{e018}) that render the Algorithms 1 and 2 exponentially convergent and convergent within finite iterations, respectively.

To explain how to solve LAEs with our ILC algorithms, we provide illustrations in the following example.

{\it Example 1:} Consider the LAE (\ref{eq1}) with $\bm{G}$ given by
\[
\bm{G}=\left[\begin{matrix}
1&0&0&2&2\\
0&3&4&0&0\\
1&-3&-4&2&2
\end{matrix}\right]^{\tp}.
\]

\noindent Clearly, $\rank(\bm{G})=2$ holds. To select a full-column rank matrix $\bm{H}_{1}$ such that $\sn\bm{H}_{1}=\sn\bm{G}$, we adopt
\[
\bm{H}_1=\left[\begin{matrix}
1&0&0&2&2\\
0&3&4&0&0
\end{matrix}\right]^{\tp}.
\]

\noindent As a consequence, to apply the Algorithms 1 and 2, we directly choose $\bm{F}_{1}$ as
\[
\bm{F}_{1}=\bm{H}_{1}\left(\bm{H}_{1}^{\tp}\bm{H}_{1}\right)^{-1}=\left[\begin{matrix}
1/9&0&0&2/9&2/9\\
0&3/25&4/25&0&0
\end{matrix}\right]^{\tp}
\]

\noindent and adopt $\bm{K}=\widehat{\bm{K}}\bm{F}_{1}^{\tp}$, where we particularly design $\widehat{\bm{K}}$ according to (\ref{e018}). Namely, by specifying a nilpotent matrix as
\[\widetilde{\bm{K}}=\begin{bmatrix}0&1\\0&0\end{bmatrix}\]

\noindent we can calculate $\widehat{\bm{K}}$ as
\[
\widehat{\bm{K}}=\bm{G}^{\tp}\bm{F}_{1}\left(\bm{F}_{1}^{\tp}\bm{G}\bm{G}^{\tp}\bm{F}_{1}\right)^{-1}\left(I-\widetilde{\bm{K}}\right)
=\left[\begin{matrix}
2/3&1/3&1/3\\
-1/3&1/3&-2/3
\end{matrix}\right]^{\tp}.
\]

\noindent Then we can leverage the Algorithms 1 and 2 to solve the LAE (\ref{eq1}) for any desired reference $\bm{Y}_{d}\in\mathbb{R}^{5}$. Due to $I-\bm{F}_{1}^{\tp}\bm{G}\widehat{\bm{K}}=\widetilde{\bm{K}}$ and $\widetilde{\bm{K}}^{2}=0$, we can derive the calculation result after only two iterations of the implementation of the Algorithms 1 and 2.

To proceed, we without any loss of generality use the initial input as $\bm{U}_0=[1,0,0]^{\tp}$, and consider two different cases of the desired references as
\begin{enumerate}
\item[c1)]
$\bm{Y}_{d}=\left[1, 3, 4, 2, 2\right]^{\tp}$;

\item[c2)]
$\bm{Y}_{d}=\left[1, 2, 1, 1, 2\right]^{\tp}$.
\end{enumerate}

\noindent We can easily verify that for the case c1), $\bm{H}_{1}\bm{F}_{1}^{\tp}\bm{Y}_{d}=\bm{Y}_{d}$ holds, whereas for the case c2), it does not. This implies that the LAE \eqref{eq1} is solvable in the case c1), but not in the case c2).

For the case c1), we apply the Algorithm 1, and can obtain the solution to the LAE \eqref{eq1} after two iterations as
\[\bm{U}_{d}=\left[4/3, 2/3, -1/3\right]^{\tp}.
\]

\noindent In fact, we can determine the solutions to the LAE (\ref{eq1}) for any trackable desired reference $\bm{Y}_{d}\in\mathcal{Y}_{T}$ in a general form of
\[
\bm{U}_{d}=\begin{bmatrix}
1/3&-1/3&-1/3\\
-1/3&1/3&1/3\\
-1/3&1/3&1/3
\end{bmatrix}\bm{U}_0+\begin{bmatrix}
1\\
1\\
0
\end{bmatrix},\quad\forall\bm{U}_{0}\in\mathbb{R}^{q}.
\]

For the case c2), the application of the Algorithm 2 results in the least squares solution to the LAE (\ref{eq1}) after two iterations, which is given by
\[\overline{\bm{U}}_{d}=\left[133/135, 26/135, -28/135\right]^{\tp}
\]

\noindent and as a result, the least squares error norm is $\left\|\bm{Y}_{d}-\bm{G}\overline{\bm{U}}_{d}\right\|_{2}=\sqrt{14}/{3}$. In addition, the least squares solutions to the LAE (\ref{eq1}) for any untrackable desired reference $\bm{Y}_{d}\notin\mathcal{Y}_{T}$ have a general form of
\[
\overline{\bm{U}}_{d}=\begin{bmatrix}
1/3&-1/3&-1/3\\
-1/3&1/3&1/3\\
-1/3&1/3&1/3
\end{bmatrix}\bm{U}_0+\begin{bmatrix}
88/135\\
71/135\\
17/135
\end{bmatrix},\quad\forall\bm{U}_{0}\in\mathbb{R}^{q}.
\]

%%%%%%%%%%%%%%%%%%%%%%%%%%%%%%%%%%%%%%%%%%%%%%%%%%%%%%%%%%%%%%%%%%%%%%%%%%%%%%%%%%%
\section{Applications to Traditional 2-D ILC Systems}\label{sec7}

Traditionally, each ILC system involves dynamics along two independent axes. In addition to an infinite iteration axis given by $k\in\mathbb{Z}_{+}$, it generally has a fixed time axis denoted by $t\in\mathbb{Z}_{N}$. We consider the ILC problem for a system described by
\begin{equation}\label{eq42}
\left\{\aligned
x_{k}(t+1)
&=Ax_{k}(t)+Bu_{k}(t)\\
y_{k}(t)
&=Cx_{k}(t)%,x_{k}(0)=0
\endaligned\right.,\quad\forall t\in\mathbb{Z}_{N},\forall k\in\mathbb{Z}_{+}
\end{equation}

\noindent where $x_{k}(t)\in\mathbb{R}^{n_{s}}$, $u_{k}(t)\in\mathbb{R}^{n_{i}}$ and $y_{k}(t)\in\mathbb{R}^{n_{o}}$ denote the state, input and output, respectively; and $A$, $B$ and $C$ are three system matrices of appropriate dimensions. The implementation of the system (\ref{eq42}) requires two different initial conditions, which are given by $x_{k}(0)=0$, $\forall k\in\mathbb{Z}_{+}$ and $u_{0}(t)$, $\forall t\in\mathbb{Z}_{N-1}$. In addition, let us denote $C=\left[C_{1}^{\tp},C_{2}^{\tp},\cdots,C_{n_{o}}^{\tp}\right]^{\tp}$, where $C_{i}\in\mathbb{R}^{1\times n_{s}}$, $\forall i=1$, $2$, $\cdots$, $n_{o}$.

Let each of the $n_{o}$ input-output channels for the system (\ref{eq42}) have a relative degree $r$ ($r\geq1$) without any loss of generality. Namely, it holds
\begin{equation}\label{eq43}
\aligned
C_{i}A^{j}B&=0,j=0,1,\cdots,r-2\\%\quad\hbox{and}\quad
C_{i}A^{r-1}B&\neq0
\endaligned,\quad\forall i=1,2,\cdots,n_{o}.
\end{equation}

\noindent Based on some standard algebraic manipulations, the condition (\ref{eq43}) can be generalized to describe different relative degrees $r_{i}$, $\forall i=1$, $2$, $\cdots$, $n_{o}$ for the system (\ref{eq42}) (see, e.g., \cite{sw:02,s:05}). This generalization, however, does not influence our investigations, which will not be considered. Due to (\ref{eq43}), we denote the series of the Markov parameter matrices for the system (\ref{eq42}) as
\[
G_{i}=CA^{i+r-1}B,\quad\forall i\in\mathbb{Z}_{+}.
\]

\noindent It is worth noting that $G_{0}=CA^{r-1}B$ is the first nonzero Markov parameter matrix for the system (\ref{eq42}), and also that the system relative degree $r$ is exactly the delay of time steps in the output $y_{k}(t)$ when the input $u_{k}(t)$ takes effect \cite{sw:03}.

For the system (\ref{eq42}), the evolution from iteration to iteration is to enable its output to track the desired reference for all time steps of interest over a fixed time interval. If we use $y_{d}(t)\in\mathbb{R}^{n_{o}}$ to represent any desired reference, then in view of the relative degree $r$ for the system (\ref{eq42}), the tracking objective is to design appropriate updating laws of $u_{k}(t)$, $\forall t\in\mathbb{Z}_{N-1}$ such that
\begin{equation}\label{eq44}
\lim_{k\to\infty}y_{k}(t)=y_{d}(t),\quad\forall t=r,r+1,\cdots,r+N-1.
\end{equation}

To cope with this tracking problem of ILC, we are motivated by Definition \ref{defi2} to give the trackability and realizability notions of the desired references for the system (\ref{eq42}).

\begin{defi}\label{defi3}
For the system (\ref{eq42}), a desired reference $y_{d}(t)$ is said to be trackable (respectively, realizable) if there exists some (respectively, a unique) desired input $u_{d}(t)$, together with the zero initial state condition $x_{d}(0)=0$, such that
\begin{equation}\label{eq45}
\left\{\aligned
x_{d}(t+1)
&=Ax_{d}(t)+Bu_{d}(t)\\
y_{d}(t)
&=Cx_{d}(t)%,x_{d}(0)=0
\endaligned\right.,\quad\forall t\in\mathbb{Z}_{N}.
\end{equation}
\end{defi}

In Definition \ref{defi3}, we may make extensions by considering any initial state condition $x_{d}(0)\in\mathbb{R}^{n_{s}}$. Since $x_{d}(0)$ is closely tied to $y_{d}(t)$, we employ $y_{d}(t)-CA^{t}x_{d}(0)$ instead of directly applying $y_{d}(t)$ to bridge the connection with $u_{d}(t)$. By this replacement, we can obtain the same trackability and realizability properties, thanks to which we directly adopt $x_{d}(0)=0$ to keep consistent with the establishment of Definition \ref{defi2}.

\subsection{ILC Reformulation Based on Super-Vectors}

Let $p=Nn_{o}$ and $q=Nn_{i}$, and define $\bm{Y}_{k}$, $\bm{Y}_{d}$, $\bm{U}_{k}$ and $\bm{U}_{d}$ as the super-vectors that result from $y_{k}(t)$, $y_{d}(t)$, $u_{k}(t)$ and $u_{d}(t)$, respectively, based on the lifting technique in the form of
\begin{equation}\label{eq46}
\aligned
\bm{Y}_{k}
&=\left[y_{k}^{\tp}(r),y_{k}^{\tp}(r+1),\cdots,y_{k}^{\tp}(N+r-1)\right]^{\tp},~\forall k\in\mathbb{Z}_{+}\\
\bm{Y}_{d}
&=\left[y_{d}^{\tp}(r),y_{d}^{\tp}(r+1),\cdots,y_{d}^{\tp}(N+r-1)\right]^{\tp}\\
\bm{U}_{k}
&=\left[u_{k}^{\tp}(0),u_{k}^{\tp}(1),\cdots,u_{k}^{\tp}(N-1)\right]^{\tp},~\forall k\in\mathbb{Z}_{+}\\
\bm{U}_{d}
&=\left[u_{d}^{\tp}(0),u_{d}^{\tp}(1),\cdots,u_{d}^{\tp}(N-1)\right]^{\tp}.
\endaligned%,\quad\forall k\in\mathbb{Z}_{+}.
\end{equation}

\noindent We can reformulate (\ref{eq42}) and (\ref{eq45}) into (\ref{eq2}) and (\ref{eq1}), respectively, by defining $\bm{G}$ in the structured form of a block Toeplitz matrix:
\begin{equation}\label{eq47}
\bm{G}
=\begin{bmatrix}
G_{0}&0&\cdots&0\\
G_{1}&G_{0}&\ddots&\vdots\\
\vdots&\vdots&\ddots&0\\
G_{N-1}&G_{N-2}&\cdots&G_{0}
\end{bmatrix}.
%=\begin{bmatrix}
%CA^{r-1}B&0&\cdots&0\\
%CA^{r}B&CA^{r-1}B&\ddots&\vdots\\
%\vdots&\vdots&\ddots&0\\
%CA^{N+r-2}B&CA^{N+r-3}B&\cdots&CA^{r-1}B
%\end{bmatrix}.
\end{equation}
%
%\noindent where, for the simplicity of notations, we denote
%\[
%G_{i}=CA^{i+r-1}B,\quad\forall i\in\mathbb{Z}_{+}.
%\]

\noindent Similarly to (\ref{eq46}), $\bm{E}_{k}$ can be defined as
\begin{equation*}\label{}
\bm{E}_{k}
=\left[e_{k}^{\tp}(r),e_{k}^{\tp}(r+1),\cdots,e_{k}^{\tp}(N+r-1)\right]^{\tp},\quad\forall k\in\mathbb{Z}_{+}
\end{equation*}

\noindent in which $e_{k}(t)=y_{d}(t)-y_{k}(t)$ denotes the output tracking error of the system (\ref{eq42}). %If no confusions occur, $y_d(t)$ is equivalent to $\bm{Y}_d$ in the following analysis.

Since the matrix $\bm{G}$ has a special structure of (\ref{eq47}), we benefit from this property to get some rank conditions on $\bm{G}$, especially with the aid of the rank conditions on $G_{0}$.

\begin{lem}\label{lem04}
For the matrix $\bm{G}$ defined by (\ref{eq47}), it holds:
\begin{enumerate}
\item
$N\rank\left(G_{0}\right)\leq\rank\left(\bm{G}\right)\leq N\min\left\{n_{o},n_{i}\right\}$;
%
%\item
%$\rank\left(\bm{G}\right)=N\rank\left(G_{0}\right)$ if $\rank\left(G_{0}\right)=\min\left\{n_{o},n_{i}\right\}$;

\item
$\rank\left(\bm{G}\right)=N\min\left\{n_{o},n_{i}\right\}\Leftrightarrow\rank\left(G_{0}\right)=\min\left\{n_{o},n_{i}\right\}$.
\end{enumerate}
\end{lem}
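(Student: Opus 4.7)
The plan is to exploit the block lower-triangular structure of $\bm{G}$ with a copy of $G_0$ on every diagonal block; both inequalities in item 1) and both directions in item 2) reduce to elementary rank counting once this structure is invoked.

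For item 1), the upper bound is immediate from the dimensions $\bm{G}\in\mathbb{R}^{Nn_o\times Nn_i}$, giving $\rank(\bm{G})\leq\min\{Nn_o,Nn_i\}=N\min\{n_o,n_i\}$. For the lower bound, I would exhibit $N\rank(G_0)$ linearly independent columns of $\bm{G}$. Let $r_0=\rank(G_0)$ and pick an index set $J\subseteq\{1,\ldots,n_i\}$ with $|J|=r_0$ indexing a linearly independent collection of columns in $G_0$. From each of the $N$ block columns of $\bm{G}$ extract the columns whose intra-block index lies in $J$, producing $Nr_0$ columns overall. To check independence, suppose $\sum_{j=1}^{N}\bm{G}_j\alpha_j=0$, where $\bm{G}_j$ is the $j$-th block column of $\bm{G}$ restricted to $J$ and $\alpha_j\in\mathbb{R}^{r_0}$. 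Reading this equality block row by block row and using that all entries above the diagonal vanish, the first block row collapses to $G_0\alpha_1=0$; independence of the chosen columns of $G_0$ forces $\alpha_1=0$. A straightforward downward induction then reduces the $k$-th block row to $G_0\alpha_k=0$ after $\alpha_1,\ldots,\alpha_{k-1}$ are known to be zero, hence $\alpha_k=0$ for all $k$. This yields $\rank(\bm{G})\geq Nr_0$.

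For item 2), the direction $(\Leftarrow)$ is immediate: if $\rank(G_0)=\min\{n_o,n_i\}$, the sandwich from item 1) forces $\rank(\bm{G})=N\min\{n_o,n_i\}$. For $(\Rightarrow)$ I would argue the contrapositive by splitting into two symmetric cases depending on which of $n_o,n_i$ attains the minimum. If $n_o\leq n_i$ and $\rank(G_0)<n_o$, then the first block row of $\bm{G}$ equals $[G_0~0~\cdots~0]$, which contributes only $\rank(G_0)$ to the row rank, while the remaining $(N-1)n_o$ rows contribute at most $(N-1)n_o$, so $\rank(\bm{G})\leq \rank(G_0)+(N-1)n_o<Nn_o$. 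If instead $n_i\leq n_o$ and $\rank(G_0)<n_i$, the same argument applied to the last block column $[0~\cdots~0~G_0]^{\tp}$ gives $\rank(\bm{G})\leq(N-1)n_i+\rank(G_0)<Nn_i$. In either case the strict inequality contradicts $\rank(\bm{G})=N\min\{n_o,n_i\}$.

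The only place where I expect to be careful is the forward propagation in item 1), specifically verifying that, after $\alpha_1,\ldots,\alpha_{k-1}$ have been cleared, the $k$-th block row of $\sum_{j}\bm{G}_j\alpha_j=0$ really does reduce cleanly to $G_0\alpha_k=0$ rather than to a mixed expression involving the nontrivial blocks $G_1,\ldots,G_{k-1}$. The block lower-triangular shape of $\bm{G}$, together with the vanishing of earlier $\alpha_j$, makes this reduction transparent, so I anticipate no genuine obstacle — the entire argument is essentially bookkeeping on the Toeplitz pattern in \eqref{eq47}.
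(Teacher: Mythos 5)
Your proof is correct and follows exactly the route the paper intends: the paper's own proof is the one-line remark that the lemma is ``a direct consequence of the block Toeplitz matrix structure of $\bm{G}$,'' and your argument simply supplies the details of that structural reasoning (column selection plus forward substitution for the lower bound, dimension count for the upper bound, and a row/column-partition rank estimate for the contrapositive of the forward implication in item 2). The induction step you flagged does reduce cleanly to $G_{0}\alpha_{k}=0$, since the mixed terms $G_{k-j}\alpha_{j}$ with $j<k$ all carry coefficients already shown to vanish.
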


\begin{proof}
A direct consequence of the block Toeplitz matrix structure of $\bm{G}$ in (\ref{eq47}).
\end{proof}

By Lemma \ref{lem04}, we generally have $\rank\left(\bm{G}\right)\geq N\rank\left(G_{0}\right)$, and $\rank\left(\bm{G}\right)=N\rank\left(G_{0}\right)$ emerges when $\rank\left(G_{0}\right)=\min\left\{n_{o},n_{i}\right\}$. We also find that $\bm{G}$ is of full row (respectively, column) rank if and only if $G_{0}$ is of full row (respectively, column) rank. These properties can help achieve the conditions of $\bm{G}$ in establishing the trackability and realizability properties for ILC systems. Of particular note is this benefit for the system (\ref{eq2}) that is induced from the system (\ref{eq42}).

From Definitions \ref{defi2} and \ref{defi3} and with (\ref{eq46}), we can establish the trackability and realizability properties of $y_{d}(t)$ for the system (\ref{eq42}) by exploring those of $\bm{Y}_{d}$ for the system (\ref{eq2}). Consequently, we also denote $\mathcal{Y}_{T}$ and $\mathcal{Y}_{R}$ as the trackability and realizability subspaces of the system (\ref{eq42}), respectively. Besides, a trackable (respectively, realizable) desired reference $y_{d}(t)$ is denoted by $y_{d}(t)\in\mathcal{Y}_{T}$ (respectively, $y_{d}(t)\in\mathcal{Y}_{R}$).

With $\bm{E}_{k}$, we can still arrive at the system (\ref{eq4}) to deal with the tracking problem (\ref{eq44}) for the system (\ref{eq42}). Based on Definition \ref{defi1}, it follows that we can achieve the tracking objective (\ref{eq44}) for the system (\ref{eq42}) by transforming it into the $k$-stability problem of the system (\ref{eq4}).

\subsection{Trackability-Based ILC Analysis and Design}

For the trackability of the system (\ref{eq42}), we present a helpful lemma to provide it with some basic properties.

\begin{lem}\label{lem05}
For the system (\ref{eq42}), $\mathcal{Y}_{T}=\sn\bm{G}$ always holds, from which two properties follow as:
\begin{enumerate}
\item
$N\rank\left(G_{0}\right)\leq\dim\left(\mathcal{Y}_{T}\right)\leq N\min\left\{n_{o},n_{i}\right\}$;

\item
$\mathcal{Y}_{T}=\mathbb{R}^{Nn_{o}}$ if and only if $\rank\left(G_{0}\right)=n_{o}$.
\end{enumerate}

\noindent In particular, for any desired reference $y_{d}(t)$, $y_{d}(t)\in\mathcal{Y}_{T}$ if and only if $\bm{Y}_{d}\in\sn\bm{G}$.
% which can be equivalently described by
%\begin{equation}\label{eq48}
%y_{d}(t)\in\left\{\aligned
%&\sn\left[G_{0},G_{1},\cdots,G_{t-r}\right],&r&\leq t\leq n_{s}-1\\
%&\sn\left[G_{0},G_{1},\cdots,G_{n_{s}-r}\right],&n_{s}&\leq t\leq N+r-1.
%\endaligned\right.
%\end{equation}
\end{lem}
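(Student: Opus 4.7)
The plan is to lift the 2-D tracking problem for system (\ref{eq42}) to the algebraic setting of Section \ref{sec3} via the super-vector reformulation in (\ref{eq46})--(\ref{eq47}), and then invoke Theorem \ref{thm01} together with Lemma \ref{lem04}. The bulk of the work will lie in verifying that this lifting preserves the trackability notion; once that is done, all three conclusions follow essentially mechanically.

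First, I would check that Definition \ref{defi3} and Definition \ref{defi2} coincide under the super-vector map. Starting from (\ref{eq45}) with $x_{d}(0)=0$, iterating the state equation, and using the relative-degree condition (\ref{eq43}) to strip out the leading zero Markov parameters, one obtains the closed form $y_{d}(t+r)=\sum_{j=0}^{t}G_{t-j}u_{d}(j)$ for $t\in\mathbb{Z}_{N-1}$. Stacking these identities across $t$ yields exactly the matrix equation $\bm{Y}_{d}=\bm{G}\bm{U}_{d}$ with $\bm{G}$ in the block Toeplitz form (\ref{eq47}). Hence the existence of some $u_{d}(t)$ verifying (\ref{eq45}) is equivalent to the solvability of the LAE (\ref{eq1}) for the lifted $\bm{Y}_{d}$, so $y_{d}(t)\in\mathcal{Y}_{T}$ for (\ref{eq42}) if and only if $\bm{Y}_{d}\in\mathcal{Y}_{T}$ for the lifted system (\ref{eq2}). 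Applying Theorem \ref{thm01} to the latter gives $\mathcal{Y}_{T}=\sn\bm{G}$, which simultaneously produces both the principal identity and the ``in particular'' clause.

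Next, I would derive the two enumerated properties. Since $\dim(\mathcal{Y}_{T})=\dim(\sn\bm{G})=\rank(\bm{G})$, the double inequality in 1) is an immediate transcription of item 1) of Lemma \ref{lem04}. For 2), observe that $\mathcal{Y}_{T}=\mathbb{R}^{Nn_{o}}$ is equivalent to $\rank(\bm{G})=Nn_{o}$, i.e., full row rank of $\bm{G}$; by the Toeplitz-rank characterization behind item 2) of Lemma \ref{lem04} (specialized with $\min\{n_{o},n_{i}\}=n_{o}$), this is in turn equivalent to $\rank(G_{0})=n_{o}$.

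I expect the only mildly delicate step to be the unrolling of (\ref{eq45}) to recover the Toeplitz action of $\bm{G}$ on $\bm{U}_{d}$ and the proper accounting of the relative-degree shift by $r$ time steps; after that, the two properties are just bookkeeping on top of Lemma \ref{lem04}, with no contraction-mapping, controllability, or feedback machinery from the earlier sections needing to be invoked.
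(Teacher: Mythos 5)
Your proposal is correct and follows essentially the same route as the paper, which simply cites Theorem \ref{thm01} and Lemmas \ref{lem02} and \ref{lem04} together with the lower-triangular block structure of (\ref{eq47}); your explicit unrolling of (\ref{eq45}) to justify that the super-vector lifting identifies Definition \ref{defi3} with Definition \ref{defi2} is exactly the step the paper leaves implicit.
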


\begin{proof}
With the applications of Theorem \ref{thm01} and Lemmas \ref{lem02} and \ref{lem04}, this lemma can be proved by noting the lower triangular block form of (\ref{eq47}) for the matrix $\bm{G}$.
%, where only the derivation steps of (\ref{eq48}) need to be detailed. Based on (\ref{eq43}) and according to the Cayley-Hamilton theorem, we can verify
%\begin{equation}\label{eq49}
%\aligned
%\sn\left[G_{0},G_{1},G_{2},\cdots\right]
%&=\sn\left[CA^{r-1}B,CA^{r}B,CA^{r+1}B,\cdots\right]\\
%&=\sn\left[CB,CAB,CA^{2}B,\cdots\right]\\
%&=\sn\left[CB,CAB,\cdots,CA^{n-1}B\right]\\
%&=\sn\left[CA^{r-1}B,CA^{r}B,\cdots,CA^{n-1}B\right]\\
%&=\sn\left[G_{0},G_{1},\cdots,G_{n-r}\right].
%\endaligned
%\end{equation}
%
%\noindent For any $y_{d}(t)\in\mathcal{Y}_{T}$, the substitution of (\ref{eq43}) into (\ref{eq45}) yields
%\begin{equation*}\label{}
%\aligned
%y_{d}(t)
%&=\sum_{i=r-1}^{t-1}CA^{i}Bu_{d}(t-1-i)\\
%&=\sum_{i=0}^{t-r}G_{i}u_{d}(t-r-i),\quad\forall t=r,r+1,\cdots,r+N-1
%\endaligned
%\end{equation*}
%
%\noindent which is equivalent to
%\begin{equation}\label{eq50}
%y_{d}(t)\in\sn\left[G_{0},G_{1},\cdots,G_{t-r}\right],\quad\forall t=r,r+1,\cdots,r+N-1.
%\end{equation}
%
%\noindent By integrating (\ref{eq49}) with (\ref{eq50}), we can conclude that $y_{d}(t)\in\mathcal{Y}_{T}$ if and only if (\ref{eq48}) holds.
\end{proof}

%\begin{rem}\label{rem3}
%Lemma \ref{lem05} indicates that if the desired reference $y_{d}(t)$ is trackable, it holds
%\begin{equation}\label{eq48}
%y_{d}(t)\in\left\{\aligned
%&\sn\left[G_{0},G_{1},\cdots,G_{t-r}\right],&r&\leq t\leq n_{s}-1\\
%&\sn\left[G_{0},G_{1},\cdots,G_{n_{s}-r}\right],&n_{s}&\leq t\leq N+r-1
%\endaligned\right.
%\end{equation}
%
%\noindent based on (\ref{eq43}) and the Cayley-Hamilton theorem, but not vice versa. Let us denote $\tau=\rank(B)$, $\mu$ as the controllability index of the system (\ref{eq42}), and $\bar{n}$ as the degree of the minimum matrix polynomial of $A$, respectively. We can further show that $y_{d}(t)\in\mathcal{Y}_{T}$ results in the following conditions:
%\[\aligned
%y_{d}(t)&\in\left\{\aligned
%&\sn\left[G_{0},G_{1},\cdots,G_{t-r}\right],
%~~~~~~~~~~~r\leq t\leq n_{s}-\tau-1\\
%&\sn\left[G_{0},G_{1},\cdots,G_{n_{s}-\tau-r}\right],
%~n_{s}-\tau\leq t\leq r+N-1
%\endaligned\right.\\
%y_{d}(t)&\in\left\{\aligned
%&\sn\left[G_{0},G_{1},\cdots,G_{t-r}\right],
%~~~~~~~~~~~r\leq t\leq\mu-1\\
%&\sn\left[G_{0},G_{1},\cdots,G_{\mu-r}\right],
%~~~~~~~~~\mu\leq t\leq r+N-1
%\endaligned\right.\\
%y_{d}(t)&\in\left\{\aligned
%&\sn\left[G_{0},G_{1},\cdots,G_{t-r}\right],
%~~~~~~~~~~~r\leq t\leq\bar{n}-1\\
%&\sn\left[G_{0},G_{1},\cdots,G_{\bar{n}-r}\right],
%~~~~~~~~~~\bar{n}\leq t\leq r+N-1.
%\endaligned\right.
%\endaligned\]
%\end{rem}

Analogously to Lemma \ref{lem05}, the following lemma shows basic realizability properties of the system (\ref{eq42}).

\begin{lem}\label{lem06}
For the system (\ref{eq42}), $\mathcal{Y}_{R}=\sn\bm{G}$ follows if and only if $\rank\left(G_{0}\right)=n_{i}$ holds; and $\mathcal{Y}_{R}=\{0\}$ emerges, otherwise. In particular, for any desired reference $y_{d}(t)$, $y_{d}(t)\in\mathcal{Y}_{R}$ if and only if $y_{d}(t)\in\mathcal{Y}_{T}$ and $\rank\left(G_{0}\right)=n_{i}$.
\end{lem}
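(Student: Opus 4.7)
The plan is to mirror the proof of Lemma \ref{lem05} but with realizability in place of trackability, invoking the general realizability dichotomy of Theorem \ref{thm01} part 2 for the super-vector system \eqref{eq2} induced from \eqref{eq42}. Recall that in that reformulation the input dimension is $q = Nn_i$, so Theorem \ref{thm01} gives $\mathcal{Y}_{R}=\mathcal{Y}_{T}$ if and only if $\rank(\bm{G})=Nn_i$, and $\mathcal{Y}_{R}=\{0\}$ otherwise. Combined with $\mathcal{Y}_{T}=\sn\bm{G}$ from Lemma \ref{lem05}, this immediately reduces the whole statement to showing the equivalence
\[
\rank(\bm{G}) = Nn_i \iff \rank(G_0) = n_i.
\]

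First I would handle this equivalence using Lemma \ref{lem04}. If $\rank(G_0)=n_i$, then necessarily $n_i\leq n_o$, so $\min\{n_o,n_i\}=n_i$, and Lemma \ref{lem04} part 2 then yields $\rank(\bm{G})=N\min\{n_o,n_i\}=Nn_i$. Conversely, if $\rank(\bm{G})=Nn_i$, then $\rank(\bm{G})\leq Nn_o$ forces $n_i\leq n_o$, so $Nn_i=N\min\{n_o,n_i\}$ and Lemma \ref{lem04} part 2 gives $\rank(G_0)=\min\{n_o,n_i\}=n_i$. Plugging this equivalence into Theorem \ref{thm01} part 2 gives $\mathcal{Y}_{R}=\mathcal{Y}_{T}=\sn\bm{G}$ when $\rank(G_0)=n_i$, and $\mathcal{Y}_{R}=\{0\}$ otherwise, which is the first assertion.

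For the ``in particular'' clause, I would use the super-vector correspondence between $y_{d}(t)$ and $\bm{Y}_{d}$ set up through \eqref{eq46}: by Definition \ref{defi3} and Definition \ref{defi2}, $y_{d}(t)\in\mathcal{Y}_{R}$ for \eqref{eq42} is equivalent to $\bm{Y}_{d}\in\mathcal{Y}_{R}$ for \eqref{eq2}. Using the first assertion just proved, this is in turn equivalent to $\rank(G_0)=n_i$ together with $\bm{Y}_{d}\in\sn\bm{G}$, and the latter is equivalent to $y_{d}(t)\in\mathcal{Y}_{T}$ by Lemma \ref{lem05}. This yields the stated characterization.

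The argument is essentially bookkeeping once the two ingredients (Theorem \ref{thm01} and Lemma \ref{lem04}) are in hand; the only mildly delicate point is the rank equivalence above, since Lemma \ref{lem04} part 2 is phrased in terms of $\min\{n_o,n_i\}$ rather than $n_i$ directly, so one must observe that $\rank(G_0)=n_i$ automatically forces $n_i\leq n_o$ (and likewise $\rank(\bm{G})=Nn_i$ forces $n_i\leq n_o$) to align the two formulations. Once that observation is made, the proof is essentially immediate.
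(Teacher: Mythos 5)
Your proposal is correct and follows essentially the same route as the paper, which simply cites Theorem \ref{thm01} (realizability dichotomy) together with the structure of $\bm{G}$ in (\ref{eq47}); you make explicit the rank bookkeeping $\rank(\bm{G})=Nn_i\Leftrightarrow\rank(G_0)=n_i$ via Lemma \ref{lem04} that the paper leaves implicit. The observation that $\rank(G_0)=n_i$ (or $\rank(\bm{G})=Nn_i$) forces $n_i\le n_o$, so that Lemma \ref{lem04} part 2 applies, is exactly the right detail to supply.
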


\begin{proof}
A consequence of Lemma \ref{lem03} and Theorem \ref{thm01} thanks to $\bm{G}$ given by (\ref{eq47}).
\end{proof}

With Lemmas \ref{lem05} and \ref{lem06}, we can notice by comparison that the proposed trackability property may play a more fundamentally important role than the usually employed realizability property in addressing the design and analysis problems of ILC. To proceed, we explore the controllability properties of the tracking error system (\ref{eq4}) with $\bm{G}$ given by (\ref{eq47}).

\begin{lem}\label{lem07}
The system (\ref{eq4}) is controllable (or stabilizable) if and only if $\rank\left(G_{0}\right)=n_{o}$, and admits a standard form for the controllability decomposition in (\ref{eq15}), otherwise. Furthermore, $\mathcal{C}_{C}=\mathcal{Y}_{T}=\sn\bm{G}$ always holds.
\end{lem}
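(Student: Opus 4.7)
The plan is to reduce the lemma to the general trackability and controllability machinery of Sections \ref{sec3}--\ref{sec4}, using only the block-Toeplitz structure of $\bm{G}$ in (\ref{eq47}) through Lemma \ref{lem04}. Nothing new about 2-D dynamics is really needed once the lifting to the system (\ref{eq2}) has been performed with $p = Nn_{o}$ and $q = Nn_{i}$.

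First I would dispatch the easy last claim $\mathcal{C}_{C} = \mathcal{Y}_{T} = \sn\bm{G}$. By Theorem \ref{thm03}(2), the controllability subspace of (\ref{eq4}) is always $\sn\bm{G}$ irrespective of the structure of $\bm{G}$, and by Theorem \ref{thm01}(1) the trackability subspace of (\ref{eq2}) is $\sn\bm{G}$. Since the lifted system induced from (\ref{eq42}) shares the same matrix $\bm{G}$, both identities follow with no further work.

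Next I would treat the main equivalence. By Theorem \ref{thm04}, controllability and stabilizability of (\ref{eq4}) are both equivalent to $\rank(\bm{G}) = p = Nn_{o}$. The task is therefore to show
\[
\rank(\bm{G}) = Nn_{o} \iff \rank(G_{0}) = n_{o}.
\]
For the forward direction, combine Lemma \ref{lem04}(1), which gives $\rank(\bm{G}) \leq N\min\{n_{o},n_{i}\}$, with the hypothesis to force $n_{o} \leq n_{i}$ and $\rank(\bm{G}) = N\min\{n_{o},n_{i}\}$; then Lemma \ref{lem04}(2) yields $\rank(G_{0}) = \min\{n_{o},n_{i}\} = n_{o}$. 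For the converse, $\rank(G_{0}) = n_{o}$ forces $n_{o} \leq n_{i}$, so $\min\{n_{o},n_{i}\} = n_{o}$, and Lemma \ref{lem04}(2) gives $\rank(\bm{G}) = Nn_{o} = p$.

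Finally, in the ``otherwise'' case $\rank(G_{0}) \neq n_{o}$, the argument above shows $\rank(\bm{G}) < p$, so (\ref{eq4}) is not completely controllable. If (\ref{eq4}) is incompletely controllable, Theorem \ref{thm05} directly supplies the decomposition (\ref{eq15}); in the degenerate boundary case where $\bm{G} = 0$ (so the system is completely uncontrollable), the decomposition still holds trivially with the controllable component absent and $\widehat{\bm{E}}_{k}^{NC} = \widehat{\bm{E}}_{0}^{NC}$ spanning all of $\mathbb{R}^{p}$. The only delicate point in the whole argument is this careful bookkeeping between $\rank(\bm{G})$ and $\rank(G_{0})$ via the two inequalities in Lemma \ref{lem04}, which I expect to be the main (and really only) obstacle in making the proof watertight.
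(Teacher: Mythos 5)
Your proposal is correct and follows essentially the same route as the paper, whose proof simply cites Theorems \ref{thm03}, \ref{thm04} and \ref{thm05} together with the structure of $\bm{G}$ in (\ref{eq47}); your rank bookkeeping via Lemma \ref{lem04} is exactly the detail the paper leaves implicit, and your handling of the degenerate case $\bm{G}=0$ is harmless but unnecessary since the relative-degree condition (\ref{eq43}) forces $G_{0}\neq0$.
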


\begin{proof}
With the specific structure of $\bm{G}$ in (\ref{eq47}), this lemma can follow directly from Theorems \ref{thm03}, \ref{thm04} and \ref{thm05}.
\end{proof}

Based on the development of Lemmas \ref{lem05} and \ref{lem07}, we establish a basic tracking result of ILC for the system (\ref{eq42}).

\begin{thm}\label{thm02}
For the system (\ref{eq42}) with any desired reference $y_{d}(t)$, the tracking objective (\ref{eq44}) can be achieved under some updating law of ILC, with gain matrices $K_{ij}\in\mathbb{R}^{n_{i}\times n_{o}}$, $\forall i$, $j=1$, $2$, $\cdots$, $N$, given by
\begin{equation}\label{eq51}
u_{k+1}(t)=u_{k}(t)+\sum_{i=0}^{N-1}K_{t+1,i+1}e_{k}(i+r),\quad\forall t\in\mathbb{Z}_{N-1},\forall k\in\mathbb{Z}_{+}
\end{equation}

\noindent if and only if $y_{d}(t)\in\mathcal{Y}_{T}$; and otherwise, the tracking objective (\ref{eq44}) can no longer be achieved for the system (\ref{eq42}), regardless of applying any input $u_{k}(t)$, $\forall t\in\mathbb{Z}_{N-1}$, $\forall k\in\mathbb{Z}_{+}$. In particular, the design condition of the gain matrices in (\ref{eq51}) for any $y_{d}(t)\in\mathcal{Y}_{T}$ is such that $\bm{K}=\left[K_{ij}\right]\in\mathbb{R}^{Nn_{i}\times Nn_{o}}$ fulfills the spectral radius condition (\ref{eq20}).
\end{thm}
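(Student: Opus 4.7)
The plan is to recognize Theorem \ref{thm02} as essentially a direct corollary of Theorem \ref{thm07} once the super-vector lifting has been installed. First, I would observe that, using the lifting technique (\ref{eq46}), the time-indexed input-output relation of the 2-D system (\ref{eq42}) over the tracking horizon $t \in \{r, r+1, \ldots, N+r-1\}$ collapses exactly to $\bm{Y}_{k} = \bm{G}\bm{U}_{k}$, with $\bm{G}$ the block Toeplitz matrix (\ref{eq47}). This is precisely the system (\ref{eq2}). Under the same lifting, the tracking objective (\ref{eq44}) becomes $\lim_{k\to\infty}\bm{Y}_{k}=\bm{Y}_{d}$, which is (\ref{eq3}), and the time-wise updating law (\ref{eq51}) rearranges block-by-block into $\bm{U}_{k+1}=\bm{U}_{k}+\bm{K}\bm{E}_{k}$ with $\bm{K}=[K_{ij}]\in\mathbb{R}^{Nn_{i}\times Nn_{o}}$, which is (\ref{eq19}). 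Thus the statement to be proved is literally the statement of Theorem \ref{thm07} read through the lifted coordinates.

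Next, I would invoke Theorem \ref{thm07} in the 1-D setting: the tracking objective (\ref{eq3}) is achievable under (\ref{eq19}) if and only if $\bm{Y}_{d}\in\mathcal{Y}_{T}$, and otherwise (\ref{eq3}) cannot be realized by any input sequence whatsoever. To translate these conclusions back into statements about $y_{d}(t)$ for the 2-D system (\ref{eq42}), I would appeal to Lemma \ref{lem05}, which asserts $\mathcal{Y}_{T}=\sn\bm{G}$ and certifies that $y_{d}(t)\in\mathcal{Y}_{T}$ is equivalent to its lifted vector $\bm{Y}_{d}$ lying in $\sn\bm{G}$. Combining the two yields the desired if-and-only-if characterization, together with the ``otherwise'' clause that no $u_{k}(t)$ at all can drive (\ref{eq44}) (this last piece I would take care to extract explicitly from the final paragraph of Theorem \ref{thm07}'s conclusion, rather than merely from the converse direction under the specific updating law (\ref{eq51})).

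Finally, for the design condition, the spectral radius inequality (\ref{eq20}) on $\bm{K}$ is exactly what Theorem \ref{thm07} isolates as the sufficient condition for $k$-stability of the closed-loop error dynamics, and Lemma \ref{lem09} together with the remark that $\bm{F}_{1}^{\tp}\bm{G}$ has full row rank (cf.\ (\ref{e019})) guarantees that such a $\bm{K}$ genuinely exists for the Toeplitz $\bm{G}$ of (\ref{eq47}). I do not expect a real obstacle: the entire content of Theorem \ref{thm02} is repackaging via the lifting, and the block Toeplitz structure of $\bm{G}$ plays no role beyond providing the reformulation. The only point where I would slow down is in justifying that the unachievability conclusion in the untrackable case transfers without alteration, since it is a statement about arbitrary input sequences of the 2-D system rather than merely about sequences produced by (\ref{eq51}); here Theorem \ref{thm06} (the invariance $\widehat{\bm{E}}_{k}^{NC}=\bm{F}_{2}^{\tp}\bm{E}_{0}\neq 0$ for $\bm{Y}_{d}\notin\mathcal{Y}_{T}$) provides the clean certificate, and this carries over to the lifted 2-D dynamics because (\ref{eq46}) is a bijection between input histories $\{u_{k}(t)\}_{t\in\mathbb{Z}_{N-1}}$ and the super-vector $\bm{U}_{k}$.
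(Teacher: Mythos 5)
Your proposal is correct and follows essentially the same route as the paper: the paper's proof likewise notes the equivalence between the lifted law (\ref{eq19}) and the time-wise law (\ref{eq51}) and then invokes Lemmas \ref{lem05} and \ref{lem07} together with Theorem \ref{thm07} and its proof. Your extra care about the ``otherwise'' clause holding for arbitrary input sequences (via the invariance certificate of Theorem \ref{thm06} and the bijectivity of the lifting) is exactly the content the paper delegates to the proof of Theorem \ref{thm07}, so nothing is missing.
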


\begin{proof}
By noting the equivalence between (\ref{eq19}) and (\ref{eq51}), we can develop this theorem with Lemmas \ref{lem05} and \ref{lem07} and based on Theorem \ref{thm07} as well as its proof.
\end{proof}

\begin{rem}\label{rem10}
From Theorem \ref{thm07}, we know that the ILC results of Theorem \ref{thm02} benefit from the trackability result of Lemma \ref{lem05}, the controllability result of Lemma \ref{lem07}, and the $k$-state feedback controller design in the Kalman state-space framework. It hints that our trackability-based approach can provide a feasible way to narrow the gap between ILC and the popular state feedback-based control methods. As a result, more powerful design and analysis tools may be developed for ILC, such as those gained based on the system stability (see also Remark \ref{rem5}).
\end{rem}

Next, we use an example to demonstrate the effectiveness of ILC even in the absence of full (row or column) rank condition of the controlled system.

{\it Example 2:} Consider the system (\ref{eq42}) with $A$, $B$, and $C$ as
\[
A=\begin{bmatrix}
1&0&0\\
0&1&0\\
0&0&1\end{bmatrix},\quad
B=\begin{bmatrix}1&-1\\2&-2\\0&0\end{bmatrix},\quad
C=\begin{bmatrix}1&0&1\\0&1&-1
%\0&1&-1
\end{bmatrix}.
\]

\noindent Clearly, $G_{0}=CB\in\mathbb{R}^{2\times2}$ is such that $\rank\left(G_{0}\right)=1$. That is, $G_{0}$ is neither of
full-row rank nor of full-column rank. It generally renders the existing results for ILC not applicable. By contrast, we reveal in Theorem \ref{thm02} that we can still perform the tracking tasks of ILC in the presence of trackable desired references.

As an example, we consider the desired reference given by
\[y_{d}(t)=\left[\sin(0.06t), 2\sin(0.06t)\right]^{\tp},\quad\forall t\in\mathbb{Z}_{100}.
\]

\noindent It is not difficult to know $y_{d}(t)\in\mathcal{Y}_{T}$. To apply the updating law (\ref{eq51}), we directly take $\bm{K}=I\otimes K_{0}\in\mathbb{R}^{200\times200}$, and to determine $K_{0}\in\mathbb{R}^{2\times2}$, we select $\bm{H}_{1}=\bm{G}\bm{Z}$ with $\bm{Z}=I\otimes[1,0]^{\tp}\in\mathbb{R}^{200\times100}$ and $\bm{H}_{2}=I\otimes[0,1]^{\tp}\in\mathbb{R}^{200\times100}$. In this case, if we choose $K_{0}=diag\{1.5,0.73\}$, then we can obtain $\rho\left(I-\bm{F}_{1}^{\tp}\bm{G}\bm{K}\bm{H}_{1}\right)=0.96$, namely, the spectral radius condition (\ref{eq20}) holds. In Fig. \ref{fig1}, we depict the perfect tracking performance for ILC under the zero initial input. This illustration demonstrates the effectiveness of our trackability-based ILC results of Theorem \ref{thm02}.

\begin{figure}
\centering
\includegraphics[width=3in]{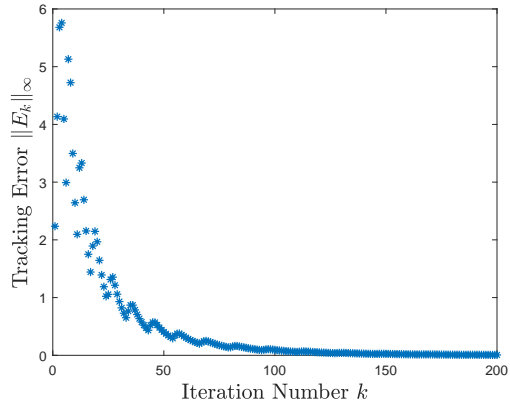}
\caption{The evolution of the tracking error for ILC versus iteration number.}
\label{fig1}
\end{figure}

\section{Conclusions}\label{sec8}

In this paper, we have discussed and answered the questions about whether and how we can achieve the objective of solving LAEs from the perspective of systems and control, concerning which we have developed an iterative method integrated with a learning control mechanism. By introducing a new trackability problem and connecting it to typical tracking problems of ILC, we have revealed an equivalent relation between the solvability of LAEs and the controllability of discrete control systems. In view of this equivalence relation, we have leveraged the classic state feedback-based methods to solve LAEs, which is realized by equivalently accomplishing the perfect tracking tasks of the resulting ILC systems. It has been shown that the solutions for any solvable LAE are linearly dependent upon the initial inputs and thus all its solutions can be derived by different selections of the initial inputs. This property also applies to obtaining all the least squares solutions for any unsolvable LAE. Moreover, we have incorporated the design idea of deadbeat control into ILC such that the solving of LAEs can be realized within finite iteration steps. In addition, we have generalized the trackability property to conventional 2-D ILC systems, and have developed a state feedback-based method for the analysis and synthesis of ILC, instead of resorting to the generally employed contraction mapping-based methods for ILC. This brings new insights into ILC and narrows the gap between it and the popular feedback-based control methods. Our analysis and synthesis results may also provide a way to promote the interaction between control and mathematics.

\section*{Acknowledgement}\label{sec}

We would like to express our thanks to Dr. Jingyao Zhang, Beihang University (BUAA), for his helpful discussion on the development of Theorem \ref{t3}.

%%%%%%%%%%%%%%%%%%%%%%%%%%%%%%%%%%%%%%%%%%%%%%%%%%%%%%%%%%%%%%%%%%%%%%%%%%%%%%%%%%%
\appendices

\section{Proof of Theorem \ref{thm07}}

\begin{proof}
Two steps are included to develop this proof.

{\it Step (i): Let us prove that for any $\bm{Y}_{d}\in\mathcal{Y}_{T}$, $\bm{U}_{\infty}$ exists and satisfies $\bm{U}_{\infty}\in\mathcal{U}_{\mathrm{ILC}}(\bm{Y}_{d})$.}

From (\ref{eq2}) and (\ref{eq19}), we can arrive at
\begin{equation}\label{eq28}
\bm{U}_{k+1}=\left(I-\bm{K}\bm{G}\right)\bm{U}_{k}+\bm{K}\bm{Y}_{d},\quad\forall k\in\mathbb{Z}_{+}.
\end{equation}

\noindent By combining two results of Lemma \ref{lem01} with (\ref{eq28}), we can derive
\begin{equation}\label{eq29}
\aligned
\bm{U}_{k+1}
&=\left[I-\bm{K}\left(\bm{H}_{1}\bm{F}_{1}^{\tp}+\bm{H}_{2}\bm{F}_{2}^{\tp}\right)\bm{G}\right]\bm{U}_{k}+\bm{K}\bm{Y}_{d}\\
&=\left(I-\bm{K}\bm{H}_{1}\bm{F}_{1}^{\tp}\bm{G}\right)\bm{U}_{k}+\bm{K}\bm{Y}_{d},\quad\forall k\in\mathbb{Z}_{+}.
\endaligned
\end{equation}

\noindent Owing to $\rank\left(\bm{F}_{1}^{\tp}\bm{G}\right)=m$, we denote $\bm{G}=\left[\bm{G}_{1}~\bm{G}_{2}\right]$ with $\bm{G}_{1}\in\mathbb{R}^{p\times m}$ and $\bm{G}_{2}\in\mathbb{R}^{p\times(q-m)}$, where we consider $\rank\left(\bm{F}_{1}^{\tp}\bm{G}_{1}\right)=m$ without any loss of generality. We correspondingly denote $\bm{K}=\left[\bm{K}_{1}^{\tp}~\bm{K}_{2}^{\tp}\right]^{\tp}$ with $\bm{K}_{1}\in\mathbb{R}^{m\times p}$ and $\bm{K}_{2}\in\mathbb{R}^{(q-m)\times p}$. From the condition (\ref{eq20}), we know that $\bm{F}_{1}^{\tp}\bm{G}\bm{K}\bm{H}_{1}$ is nonsingular, and thus can develop a nonsingular matrix $\bm{\Omega}=\left[\bm{\Omega}_{1}~\bm{\Omega}_{2}\right]\in\mathbb{R}^{q\times q}$ with $\bm{\Omega}_{1}\in\mathbb{R}^{q\times m}$ and $\bm{\Omega}_{2}\in\mathbb{R}^{q\times(q-m)}$ in the form of
\[\aligned
\bm{\Omega}_{1}
&=\bm{K}\bm{H}_{1}\left(\bm{F}_{1}^{\tp}\bm{G}\bm{K}\bm{H}_{1}\right)^{-1}=\begin{bmatrix}\bm{K}_{1}\bm{H}_{1}\left(\bm{F}_{1}^{\tp}\bm{G}\bm{K}\bm{H}_{1}\right)^{-1}\\
\bm{K}_{2}\bm{H}_{1}\left(\bm{F}_{1}^{\tp}\bm{G}\bm{K}\bm{H}_{1}\right)^{-1}\end{bmatrix}\\
\bm{\Omega}_{2}
&=\begin{bmatrix}-\left(\bm{F}_{1}^{\tp}\bm{G}_{1}\right)^{-1}\bm{F}_{1}^{\tp}\bm{G}_{2}\\
I\end{bmatrix}
\endaligned
\]

\noindent of which the inverse matrix is given by $\bm{\Omega}^{-1}\triangleq\bm{\Xi}=\left[\bm{\Xi}_{1}^{\tp}~\bm{\Xi}_{2}^{\tp}\right]^{\tp}$ with $\bm{\Xi}_{1}\in\mathbb{R}^{m\times q}$ and $\bm{\Xi}_{2}\in\mathbb{R}^{(q-m)\times q}$ satisfying
\[\aligned
\bm{\Xi}_{1}
&=\bm{F}_{1}^{\tp}\bm{G}=\begin{bmatrix}\bm{F}_{1}^{\tp}\bm{G}_{1}&\bm{F}_{1}^{\tp}\bm{G}_{2}\end{bmatrix}\\
%\bm{\Xi}_{2}
%&=\begin{bmatrix}
%-\bm{K}_{2}\bm{H}_{1}\left(\bm{F}_{1}^{\tp}\bm{G}\bm{K}\bm{H}_{1}\right)^{-1}\bm{F}_{1}^{\tp}\bm{G}_{1}&
%I-\bm{K}_{2}\bm{H}_{1}\left(\bm{F}_{1}^{\tp}\bm{G}\bm{K}\bm{H}_{1}\right)^{-1}\bm{F}_{1}^{\tp}\bm{G}_{2}\end{bmatrix}\\
\bm{\Xi}_{2}
&=\begin{bmatrix}
\left[-\bm{K}_{2}\bm{H}_{1}\left(\bm{F}_{1}^{\tp}\bm{G}\bm{K}\bm{H}_{1}\right)^{-1}\bm{F}_{1}^{\tp}\bm{G}_{1}\right]^{\tp}\\
\left[I-\bm{K}_{2}\bm{H}_{1}\left(\bm{F}_{1}^{\tp}\bm{G}\bm{K}\bm{H}_{1}\right)^{-1}\bm{F}_{1}^{\tp}\bm{G}_{2}\right]^{\tp}\end{bmatrix}^{\tp}.
\endaligned\]

\noindent With $\bm{Y}_{d}\in\mathcal{Y}_{T}$ and $\sn\bm{G}=\sn\bm{H}_{1}$, $\bm{Y}_{d}=\bm{H}_{1}\bm{\eta}$ holds for some $\bm{\eta}\in\mathbb{R}^{m}$, and thus we can leverage Lemmas \ref{lem01} and \ref{lem02} to derive
\[
\aligned
\bm{\Xi}_{1}\bm{K}\bm{Y}_{d}
&=\bm{F}_{1}^{\tp}\bm{G}\bm{K}\bm{Y}_{d}\\
&=\bm{F}_{1}^{\tp}\bm{G}\bm{K}\left(\bm{H}_{1}\bm{F}_{1}^{\tp}+\bm{H}_{2}\bm{F}_{2}^{\tp}\right)\bm{Y}_{d}\\
&=\bm{F}_{1}^{\tp}\bm{G}\bm{K}\bm{H}_{1}\bm{F}_{1}^{\tp}\bm{Y}_{d}
\endaligned
\]

\noindent and
\[
\aligned
\bm{\Xi}_{2}\bm{K}\bm{Y}_{d}
&=\bm{K}_{2}\bm{H}_{1}\bm{\eta}
-\bm{K}_{2}\bm{H}_{1}\left(\bm{F}_{1}^{\tp}\bm{G}\bm{K}\bm{H}_{1}\right)^{-1}\bm{F}_{1}^{\tp}\bm{G}_{1}\bm{K}_{1}\bm{H}_{1}\bm{\eta}\\
&~~~-\bm{K}_{2}\bm{H}_{1}\left(\bm{F}_{1}^{\tp}\bm{G}\bm{K}\bm{H}_{1}\right)^{-1}\bm{F}_{1}^{\tp}\bm{G}_{2}\bm{K}_{2}\bm{H}_{1}\bm{\eta}\\
&=\bm{K}_{2}\bm{H}_{1}\bm{\eta}
-\bm{K}_{2}\bm{H}_{1}\big[\left(\bm{F}_{1}^{\tp}\bm{G}\bm{K}\bm{H}_{1}\right)^{-1}\bm{F}_{1}^{\tp}\left(\bm{G}_{1}\bm{K}_{1}\right.\\
&~~~\left.+\bm{G}_{2}\bm{K}_{2}\right)\bm{H}_{1}\big]\bm{\eta}\\
&=0
\endaligned
\]

\noindent the combination of which yields
\begin{equation}\label{eq30}
\bm{\Omega}^{-1}\bm{K}\bm{Y}_{d}
=\begin{bmatrix}
\bm{\Xi}_{1}\bm{K}\bm{Y}_{d}\\
\bm{\Xi}_{2}\bm{K}\bm{Y}_{d}
\end{bmatrix}
=\begin{bmatrix}
\bm{F}_{1}^{\tp}\bm{G}\bm{K}\bm{H}_{1}\bm{F}_{1}^{\tp}\bm{Y}_{d}\\
0
\end{bmatrix}.
\end{equation}

\noindent Similarly, we can validate
\begin{equation}\label{eq31}
\bm{\Omega}^{-1}\bm{K}\bm{H}_{1}
=\begin{bmatrix}
\bm{F}_{1}^{\tp}\bm{G}\bm{K}\bm{H}_{1}\\
0
\end{bmatrix},\quad
\bm{F}_{1}^{\tp}\bm{G}\bm{\Omega}
=\begin{bmatrix}
I&0
\end{bmatrix}.
\end{equation}

\noindent By applying (\ref{eq30}) and (\ref{eq31}), we implement a nonsingular linear transformation of (\ref{eq29}) given by
\[
\bm{\Omega}^{-1}\bm{U}_{k}
\triangleq\bm{U}_{k}^{\ast}
=\begin{bmatrix}
\bm{U}_{1,k}^{\ast}\\
\bm{U}_{2,k}^{\ast}
\end{bmatrix}~\hbox{with}~
\left\{\aligned
\bm{U}_{1,k}^{\ast}&=\Xi_{1}\bm{U}_{k}\in\mathbb{R}^{m}\\
\bm{U}_{2,k}^{\ast}&=\Xi_{2}\bm{U}_{k}\in\mathbb{R}^{q-m}
\endaligned,\forall k\in\mathbb{Z}_{+}\right.
\]

\noindent and can obtain
\[\aligned
\bm{U}_{k+1}^{\ast}
&=\bm{\Omega}^{-1}\left(I-\bm{K}\bm{H}_{1}\bm{F}_{1}^{\tp}\bm{G}\right)\bm{\Omega}\bm{U}_{k}^{\ast}+\bm{\Omega}^{-1}\bm{K}\bm{Y}_{d}\\
&=\begin{bmatrix}
I-\bm{F}_{1}^{\tp}\bm{G}\bm{K}\bm{H}_{1}&0\\
0&I
\end{bmatrix}\bm{U}_{k}^{\ast}
+\begin{bmatrix}
\bm{F}_{1}^{\tp}\bm{G}\bm{K}\bm{H}_{1}\bm{F}_{1}^{\tp}\bm{Y}_{d}\\
0
\end{bmatrix}
\endaligned,~~\forall k\in\mathbb{Z}_{+}
\]

\noindent which can be decomposed into two separate subsystems of
\begin{equation}\label{eq32}
\bm{U}_{1,k+1}^{\ast}
=\left(I-\bm{F}_{1}^{\tp}\bm{G}\bm{K}\bm{H}_{1}\right)\bm{U}_{1,k}^{\ast}
+\bm{F}_{1}^{\tp}\bm{G}\bm{K}\bm{H}_{1}\bm{F}_{1}^{\tp}\bm{Y}_{d},\quad\forall k\in\mathbb{Z}_{+}
\end{equation}

\noindent and of
\begin{equation}\label{eq33}
\bm{U}_{2,k+1}^{\ast}
=\bm{U}_{2,k}^{\ast},\quad\forall k\in\mathbb{Z}_{+}.
\end{equation}

\noindent An immediate consequence of (\ref{eq33}) is
\begin{equation}\label{eq34}
\bm{U}_{2,k}^{\ast}
=\bm{U}_{2,0}^{\ast}
=\bm{\Xi}_{2}\bm{U}_{0},\quad\forall k\in\mathbb{Z}_{+}.
\end{equation}

\noindent By incorporating (\ref{eq20}) into (\ref{eq32}), we can verify
\begin{equation}\label{eq35}
\bm{U}_{1,\infty}^{\ast}
\triangleq\lim_{k\to\infty}\bm{U}_{1,k}^{\ast}
=\bm{F}_{1}^{\tp}\bm{Y}_{d}.
\end{equation}

\noindent With $\bm{U}_{k}=\bm{\Omega}\bm{U}_{k}^{\ast}$, the use of (\ref{eq34}) and (\ref{eq35}) ensures the existence of $\bm{U}_{\infty}$ such that
\begin{equation}\label{eq36}
\aligned
\bm{U}_{\infty}
%&\triangleq\lim_{k\to\infty}\bm{U}_{k}
&=\bm{\Omega}_{1}\bm{U}_{1,\infty}^{\ast}
+\bm{\Omega}_{2}\bm{U}_{2,0}^{\ast}\\
&=\bm{\Omega}_{1}\bm{F}_{1}^{\tp}\bm{Y}_{d}
+\bm{\Omega}_{2}\bm{\Xi}_{2}\bm{U}_{0}\\
&=\left[I-\bm{K}\bm{H}_{1}\left(\bm{F}_{1}^{\tp}\bm{G}\bm{K}\bm{H}_{1}\right)^{-1}\bm{F}_{1}^{\tp}\bm{G}\right]\bm{U}_{0}\\
&~~~+\bm{K}\bm{H}_{1}\left(\bm{F}_{1}^{\tp}\bm{G}\bm{K}\bm{H}_{1}\right)^{-1}\bm{F}_{1}^{\tp}\bm{Y}_{d}
\endaligned
\end{equation}

\noindent where we also insert the following fact:
\[
\bm{\Omega}_{2}\bm{\Xi}_{2}=I-\bm{\Omega}_{1}\bm{\Xi}_{1}=I-\bm{K}\bm{H}_{1}\left(\bm{F}_{1}^{\tp}\bm{G}\bm{K}\bm{H}_{1}\right)^{-1}\bm{F}_{1}^{\tp}\bm{G}.
\]
%\[\aligned
%\bm{\Omega}_{2}\bm{\Xi}_{2}
%&=I-\bm{\Omega}_{1}\bm{\Xi}_{1}\\
%&=I-\bm{K}\bm{H}_{1}\left(\bm{F}_{1}^{\tp}\bm{G}\bm{K}\bm{H}_{1}\right)^{-1}\bm{F}_{1}^{\tp}\bm{G}.
%\endaligned\]

\noindent From (\ref{eq25}) and (\ref{eq36}), $\bm{U}_{\infty}\in\mathcal{U}_{\mathrm{ILC}}(\bm{Y}_{d})$ holds for any $\bm{Y}_{d}\in\mathcal{Y}_{T}$.

{\it Step (ii): We prove (\ref{eq26}), and the equivalence between (\ref{eq27}) and $\rank\left(\bm{G}\right)=q$.}

``$\mathcal{U}_{\mathrm{ILC}}(\bm{Y}_{d})\subseteq\mathcal{U}_{d}(\bm{Y}_{d})$:'' For any $\bm{\chi}\in\mathcal{U}_{\mathrm{ILC}}(\bm{Y}_{d})$, there exists some $\bm{U}_{0}\in\mathbb{R}^{q}$ such that
\begin{equation*}\label{}
\aligned
\bm{\chi}
&=\left[I-\bm{K}\bm{H}_{1}\left(\bm{F}_{1}^{\tp}\bm{G}\bm{K}\bm{H}_{1}\right)^{-1}\bm{F}_{1}^{\tp}\bm{G}\right]\bm{U}_{0}\\
&~~~+\bm{K}\bm{H}_{1}\left(\bm{F}_{1}^{\tp}\bm{G}\bm{K}\bm{H}_{1}\right)^{-1}\bm{F}_{1}^{\tp}\bm{Y}_{d}
\endaligned
\end{equation*}

\noindent which, together with the results of Lemmas \ref{lem01} and \ref{lem02}, leads to
\begin{equation*}\label{}
\aligned
\bm{G}\bm{\chi}
&=\left(\bm{H}_{1}\bm{F}_{1}^{\tp}+\bm{H}_{2}\bm{F}_{2}^{\tp}\right)\bm{G}\bm{\chi}\\
&=\bm{H}_{1}\bm{F}_{1}^{\tp}\bm{G}\left[I-\bm{K}\bm{H}_{1}\left(\bm{F}_{1}^{\tp}\bm{G}\bm{K}\bm{H}_{1}\right)^{-1}\bm{F}_{1}^{\tp}\bm{G}\right]\bm{U}_{0}\\
&~~~+\bm{H}_{1}\bm{F}_{1}^{\tp}\bm{G}\bm{K}\bm{H}_{1}\left(\bm{F}_{1}^{\tp}\bm{G}\bm{K}\bm{H}_{1}\right)^{-1}\bm{F}_{1}^{\tp}\bm{Y}_{d}\\
&=\bm{H}_{1}\bm{F}_{1}^{\tp}\bm{Y}_{d}\\
%&=\left(\bm{H}_{1}\bm{F}_{1}^{\tp}+\bm{H}_{2}\bm{F}_{2}^{\tp}\right)\bm{Y}_{d}\\
&=\bm{Y}_{d}
\endaligned
\end{equation*}

\noindent namely, $\bm{\chi}\in\mathcal{U}_{d}(\bm{Y}_{d})$. This implies $\mathcal{U}_{\mathrm{ILC}}(\bm{Y}_{d})\subseteq\mathcal{U}_{d}(\bm{Y}_{d})$.

``$\mathcal{U}_{\mathrm{ILC}}(\bm{Y}_{d})\supseteq\mathcal{U}_{d}(\bm{Y}_{d})$:'' For any $\bm{\chi}\in\mathcal{U}_{d}(\bm{Y}_{d})$, we have $\bm{Y}_{d}=\bm{G}\bm{\chi}$ by (\ref{eq24}). Let us define
\[
\bm{U}_{0}=\bm{\chi}-\bm{K}\bm{H}_{1}\bm{\eta},\quad\forall\bm{\eta}\in\mathbb{R}^{m}.
\]

\noindent Since $\bm{K}\bm{H}_{1}$ is a matrix of full-column rank under the condition (\ref{eq20}), there exists $\bm{U}_{0}\in\mathbb{R}^{q}$ as defined above such that
\begin{equation*}\label{}
\aligned
&\left[I-\bm{K}\bm{H}_{1}\left(\bm{F}_{1}^{\tp}\bm{G}\bm{K}\bm{H}_{1}\right)^{-1}\bm{F}_{1}^{\tp}\bm{G}\right]
\left(\bm{\chi}-\bm{U}_{0}\right)\\
&~~~=\left[I-\bm{K}\bm{H}_{1}\left(\bm{F}_{1}^{\tp}\bm{G}\bm{K}\bm{H}_{1}\right)^{-1}\bm{F}_{1}^{\tp}\bm{G}\right]
\bm{K}\bm{H}_{1}\bm{\eta}\\
&~~~=0
\endaligned
\end{equation*}

\noindent which is equivalent to
\begin{equation*}\label{}
\aligned
\bm{\chi}
&=\left[I-\bm{K}\bm{H}_{1}\left(\bm{F}_{1}^{\tp}\bm{G}\bm{K}\bm{H}_{1}\right)^{-1}\bm{F}_{1}^{\tp}\bm{G}\right]\bm{U}_{0}\\
&~~~+\bm{K}\bm{H}_{1}\left(\bm{F}_{1}^{\tp}\bm{G}\bm{K}\bm{H}_{1}\right)^{-1}\bm{F}_{1}^{\tp}\bm{G}\bm{\chi}\\
&=\left[I-\bm{K}\bm{H}_{1}\left(\bm{F}_{1}^{\tp}\bm{G}\bm{K}\bm{H}_{1}\right)^{-1}\bm{F}_{1}^{\tp}\bm{G}\right]\bm{U}_{0}\\
&~~~+\bm{K}\bm{H}_{1}\left(\bm{F}_{1}^{\tp}\bm{G}\bm{K}\bm{H}_{1}\right)^{-1}\bm{F}_{1}^{\tp}\bm{Y}_{d}
\endaligned
\end{equation*}

\noindent namely, $\bm{\chi}\in\mathcal{U}_{\mathrm{ILC}}(\bm{Y}_{d})$. Thus, we can get $\mathcal{U}_{\mathrm{ILC}}(\bm{Y}_{d})\supseteq\mathcal{U}_{d}(\bm{Y}_{d})$.

Thanks to $\mathcal{U}_{\mathrm{ILC}}(\bm{Y}_{d})\subseteq\mathcal{U}_{d}(\bm{Y}_{d})$ and $\mathcal{U}_{\mathrm{ILC}}(\bm{Y}_{d})\supseteq\mathcal{U}_{d}(\bm{Y}_{d})$, we can develop (\ref{eq26}). From Theorem \ref{thm01}, $\mathcal{Y}_{T}=\mathcal{Y}_{R}$ is equivalent to $\rank\left(\bm{G}\right)=q$. Thus, $\mathcal{Y}_{T}=\mathcal{Y}_{R}$ yields $\rank\left(\bm{F}_{1}^{\tp}\bm{G}\right)=\rank\left(\bm{G}\right)=q$, and consequently $\bm{F}_{1}^{\tp}\bm{G}$ and $\bm{K}\bm{H}_{1}$ are nonsingular, owing to which (\ref{eq25}) collapses into
\begin{equation}\label{eq37}
\mathcal{U}_{\mathrm{ILC}}(\bm{Y}_{d})
=\left\{\left(\bm{F}_{1}^{\tp}\bm{G}\right)^{-1}\bm{F}_{1}^{\tp}\bm{Y}_{d}\right\},\quad\forall\bm{Y}_{d}\in\mathcal{Y}_{T}.
\end{equation}

\noindent Since we have $\left(\bm{F}_{1}^{\tp}\bm{G}\right)^{-1}\bm{F}_{1}^{\tp}=\left(\bm{G}^{\tp}\bm{G}\right)^{-1}\bm{G}^{\tp}$ by $\rank\left(\bm{F}_{1}^{\tp}\bm{G}\right)=\rank\left(\bm{G}\right)=q$, (\ref{eq27}) follows from (\ref{eq26}) and (\ref{eq37}). On the contrary, if (\ref{eq27}) holds, then for any $\bm{Y}_{d}\in\mathcal{Y}_{T}$, there exists a unique $\bm{U}_{d}$ to guarantee (\ref{eq1}) (namely, $\bm{Y}_{d}\in\mathcal{Y}_{R}$), based on which $\rank\left(\bm{G}\right)=q$ follows from Lemma \ref{lem03}, and equivalently $\mathcal{Y}_{T}=\mathcal{Y}_{R}$ holds. That is, we have the equivalence between (\ref{eq27}) and $\mathcal{Y}_{T}=\mathcal{Y}_{R}$.
\end{proof}

\section{Proof of Theorem \ref{t1}}

We first present a useful lemma to provide the general form of the least squares solutions to the LAE \eqref{eq1} for any $\bm{Y}_{d}\notin\mathcal{Y}_{T}$.

\begin{lem}\label{lem08}
For the LAE (\ref{eq1}), the set $\overline{\mathcal{U}}_{d}\left(\bm{Y}_{d}\right)$ of all its least squares solutions for any $\bm{Y}_{d}\notin\mathcal{Y}_{T}$ can be described by
\begin{equation}\label{e9}
\aligned
\overline{\mathcal{U}}_{d}\left(\bm{Y}_{d}\right)
=\bigg\{\bm{G}^{\{1,3\}}\bm{Y}_d+\left(I-\bm{G}^{\{1,3\}}\bm{G}\right)\bm{Z}\Big| \bm{G}^{\{1,3\}}&\in\widetilde{\bm{G}},\\
\bm{Z}\in\mathbb{R}^{q}\bigg\},\quad\forall\bm{Y}_{d}&\notin\mathcal{Y}_{T}
\endaligned
\end{equation}

\noindent where the matrix set $\widetilde{\bm{G}}$ is given by
\begin{equation}\label{e8}
\aligned
\widetilde{\bm{G}}
=\bigg\{\bm{G}^{\{1,3\}}\in\mathbb{R}^{q\times p}
\Big|\bm{G}\bm{G}^{\{1,3\}}\bm{G}&=\bm{G},\\
\left(\bm{G}\bm{G}^{\{1,3\}}\right)^{\tp}&=\bm{G}\bm{G}^{\{1,3\}}\bigg\}.
\endaligned
\end{equation}
\end{lem}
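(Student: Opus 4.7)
The plan is to reduce Lemma \ref{lem08} to the classical characterization of least squares solutions via the normal equations $\bm{G}^{\tp}\bm{G}\overline{\bm{U}}_d=\bm{G}^{\tp}\bm{Y}_d$, then show that the parameterization on the right-hand side of \eqref{e9} is precisely the particular-plus-null-space description of the solution set of those normal equations. First I would recall that $\overline{\bm{U}}_d\in\overline{\mathcal{U}}_d(\bm{Y}_d)$ iff $\overline{\bm{U}}_d$ satisfies the normal equations, and note that $\mathrm{null}(\bm{G}^{\tp}\bm{G})=\mathrm{null}(\bm{G})$. It is also worth remarking, to ensure $\widetilde{\bm{G}}$ is non-empty, that the Moore--Penrose pseudoinverse $\bm{G}^{\dagger}$ trivially satisfies both defining identities of $\widetilde{\bm{G}}$, so $\bm{G}^{\dagger}\in\widetilde{\bm{G}}$.

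Next I would establish the inclusion ``$\supseteq$'' in \eqref{e9}. Pick any $\bm{G}^{\{1,3\}}\in\widetilde{\bm{G}}$ and $\bm{Z}\in\mathbb{R}^{q}$. Taking the transpose of $\bm{G}\bm{G}^{\{1,3\}}\bm{G}=\bm{G}$ and using the symmetry $(\bm{G}\bm{G}^{\{1,3\}})^{\tp}=\bm{G}\bm{G}^{\{1,3\}}$ yields $\bm{G}^{\tp}\bm{G}\bm{G}^{\{1,3\}}=\bm{G}^{\tp}$, so $\bm{G}^{\{1,3\}}\bm{Y}_d$ is a particular least squares solution. Moreover, the identity $\bm{G}\bigl(I-\bm{G}^{\{1,3\}}\bm{G}\bigr)=\bm{G}-\bm{G}\bm{G}^{\{1,3\}}\bm{G}=0$ shows that $(I-\bm{G}^{\{1,3\}}\bm{G})\bm{Z}\in\mathrm{null}(\bm{G})=\mathrm{null}(\bm{G}^{\tp}\bm{G})$. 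Adding a null vector to a particular solution preserves the normal equations, which proves the inclusion.

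For the reverse inclusion ``$\subseteq$'', I would fix once and for all any $\bm{G}^{\{1,3\}}\in\widetilde{\bm{G}}$ (say $\bm{G}^{\dagger}$) and pick an arbitrary $\overline{\bm{U}}_d\in\overline{\mathcal{U}}_d(\bm{Y}_d)$. Writing
\[
\overline{\bm{U}}_d=\bm{G}^{\{1,3\}}\bm{Y}_d+\bigl(\overline{\bm{U}}_d-\bm{G}^{\{1,3\}}\bm{Y}_d\bigr),
\]
I would verify that the bracketed term lies in $\mathrm{null}(\bm{G})$, since both $\overline{\bm{U}}_d$ and $\bm{G}^{\{1,3\}}\bm{Y}_d$ satisfy the normal equations and $\mathrm{null}(\bm{G}^{\tp}\bm{G})=\mathrm{null}(\bm{G})$. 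Setting $\bm{Z}\triangleq\overline{\bm{U}}_d-\bm{G}^{\{1,3\}}\bm{Y}_d$, the relation $\bm{G}\bm{Z}=0$ gives $(I-\bm{G}^{\{1,3\}}\bm{G})\bm{Z}=\bm{Z}$, so $\overline{\bm{U}}_d$ admits the desired representation.

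I do not foresee a major obstacle: the proof is essentially a matter of correctly invoking the two defining identities of a $\{1,3\}$-inverse to derive $\bm{G}^{\tp}\bm{G}\bm{G}^{\{1,3\}}=\bm{G}^{\tp}$ and $\bm{G}(I-\bm{G}^{\{1,3\}}\bm{G})=0$. The only point requiring modest care is the bookkeeping in the ``$\subseteq$'' direction: one must keep $\bm{G}^{\{1,3\}}$ fixed while letting $\overline{\bm{U}}_d$ range, and observe that the map $\bm{Z}\mapsto (I-\bm{G}^{\{1,3\}}\bm{G})\bm{Z}$ is the identity on $\mathrm{null}(\bm{G})$, so that every element of $\bm{G}^{\{1,3\}}\bm{Y}_d+\mathrm{null}(\bm{G})$ is captured. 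Combining the two inclusions, \eqref{e9} follows.
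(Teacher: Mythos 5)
Your proof is correct, and it reaches \eqref{e9} by a genuinely different route from the paper's. The paper never invokes the normal equations: it shows directly from the two defining identities of $\widetilde{\bm{G}}$ that $\bm{G}\bm{G}^{\{1,3\}}\bm{Y}_d-\bm{Y}_d$ is orthogonal to $\bm{G}\widetilde{\bm{U}}-\bm{G}\bm{G}^{\{1,3\}}\bm{Y}_d$ for every $\widetilde{\bm{U}}$, applies the Pythagorean identity to conclude that $\overline{\bm{U}}_d$ is a least squares solution if and only if $\bm{G}\left(\overline{\bm{U}}_d-\bm{G}^{\{1,3\}}\bm{Y}_d\right)=0$, and then cites the idempotency of $\bm{G}^{\{1,3\}}\bm{G}$ together with two theorems from \cite{lt:85} to obtain $\nl\bm{G}=\sn\left(I-\bm{G}^{\{1,3\}}\bm{G}\right)$ and the particular-plus-homogeneous form of the solution set of that consistent LAE. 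You instead reduce everything to the normal equations $\bm{G}^{\tp}\bm{G}\overline{\bm{U}}_d=\bm{G}^{\tp}\bm{Y}_d$ and verify the two needed facts, $\bm{G}^{\tp}\bm{G}\bm{G}^{\{1,3\}}=\bm{G}^{\tp}$ and $\bm{G}\left(I-\bm{G}^{\{1,3\}}\bm{G}\right)=0$, by bare algebra; your observation that $I-\bm{G}^{\{1,3\}}\bm{G}$ acts as the identity on $\nl\bm{G}$ replaces the cited idempotent-matrix theorem entirely. The two characterizations are of course equivalent (the paper's condition follows from the normal equations via $\nl\left(\bm{G}^{\tp}\bm{G}\right)=\nl\bm{G}$, and conversely by left-multiplying by $\bm{G}^{\tp}$), so the mathematical content is the same; what your version buys is self-containedness and a cleaner handling of the quantifier over $\bm{G}^{\{1,3\}}$ --- you make explicit that the forward inclusion must hold for every member of $\widetilde{\bm{G}}$ while the reverse inclusion only needs one fixed member, a point the paper leaves implicit. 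Your remark that $\bm{G}^{\dagger}\in\widetilde{\bm{G}}$, so the parameterizing set is non-empty, is likewise a worthwhile addition the paper omits.
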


\begin{proof}
For any $\bm{G}^{\{1,3\}}\in\widetilde{\bm{G}}$, we notice (\ref{e8}) and can derive
\[\aligned
&\left(\bm{G}\bm{G}^{\{1,3\}}\bm{Y}_d-\bm{Y}_d\right)^{\tp}\left(\bm{G}\widetilde{\bm{U}}-\bm{G}\bm{G}^{\{1,3\}}\bm{Y}_d\right)\\
&~~~~~~~~~~~~~~=\left(\bm{Y}_d^{\tp}\bm{G}\bm{G}^{\{1,3\}}-\bm{Y}_d^{\tp}\right)\left(\bm{G}\widetilde{\bm{U}}-\bm{G}\bm{G}^{\{1,3\}}\bm{Y}_d\right)\\
&~~~~~~~~~~~~~~=\bm{Y}_d^{\tp}\bm{G}\bm{G}^{\{1,3\}}\bm{G}\widetilde{\bm{U}}
-\bm{Y}_d^{\tp}\left(\bm{G}\bm{G}^{\{1,3\}}\right)^{2}\bm{Y}_d\\
&~~~~~~~~~~~~~~~~~-\bm{Y}_d^{\tp}\bm{G}\widetilde{\bm{U}}
+\bm{Y}_d^{\tp}\bm{G}\bm{G}^{\{1,3\}}\bm{Y}_d\\
&~~~~~~~~~~~~~~=0,\quad\forall\widetilde{\bm{U}}\in\mathbb{R}^{q},\forall\bm{Y}_{d}\in\mathbb{R}^{p}.
\endaligned
\]

\noindent Namely, $\bm{G}\bm{G}^{\{1,3\}}\bm{Y}_d-\bm{Y}_d$ and $\bm{G}\widetilde{\bm{U}}-\bm{G}\bm{G}^{\{1,3\}}\bm{Y}_d$ are orthogonal for any $\widetilde{\bm{U}}\in\mathbb{R}^{q}$ and any $\bm{Y}_{d}\in\mathbb{R}^{p}$, with which we can deduce
\[\aligned
\left\|\bm{G}\widetilde{\bm{U}}-\bm{Y}_d\right\|_{2}
&=\left\|\left(\bm{G}\widetilde{\bm{U}}-\bm{G}\bm{G}^{\{1,3\}}\bm{Y}_d\right)
+\left(\bm{G}\bm{G}^{\{1,3\}}\bm{Y}_d-\bm{Y}_d\right)\right\|_2\\
&=\left\|\bm{G}\widetilde{\bm{U}}-\bm{G}\bm{G}^{\{1,3\}}\bm{Y}_d\right\|_2
+\left\|\bm{G}\bm{G}^{\{1,3\}}\bm{Y}_d-\bm{Y}_d\right\|_2\\
&\geq\left\|\bm{G}\bm{G}^{\{1,3\}}\bm{Y}_d-\bm{Y}_d\right\|_2,\quad\forall\widetilde{\bm{U}}\in\mathbb{R}^{q},\forall\bm{Y}_{d}\in\mathbb{R}^{p}.
\endaligned\]

\noindent This together with (\ref{e1}) ensures that $\overline{\bm{U}}_{d}\in\overline{\mathcal{U}}_{d}\left(\bm{Y}_{d}\right)$ holds if and only if $\overline{\bm{U}}_{d}$ satisfies $\left\|\bm{G}\overline{\bm{U}}_d-\bm{G}\bm{G}^{\{1,3\}}\bm{Y}_d\right\|_2=0$, or equivalently, $\overline{\bm{U}}_{d}\in\overline{\mathcal{U}}_{d}\left(\bm{Y}_{d}\right)$ is exactly the solution to the resulting LAE as
\begin{equation}\label{e010}
\bm{G}\left(\overline{\bm{U}}_d-\bm{G}^{\{1,3\}}\bm{Y}_d\right)=0.
\end{equation}

\noindent Thanks to $\nl\bm{G}=\nl\left(\bm{G}^{\{1,3\}}\bm{G}\right)$ and $\left(\bm{G}^{\{1,3\}}\bm{G}\right)^{2}=\bm{G}^{\{1,3\}}\bm{G}$, we can leverage the properties of the idempotent matrices (see, e.g., Theorem 1 of \cite[Subchapter 5.8]{lt:85}) to obtain
\begin{equation}\label{e011}
\nl\bm{G}=\sn\left(I-\bm{G}^{\{1,3\}}\bm{G}\right).
\end{equation}

\noindent For the LAE (\ref{e010}), since $\overline{\bm{U}}_d=\bm{G}^{\{1,3\}}\bm{Y}_d$ is a particular solution, we resort to (\ref{e011}) and can conclude that any solution takes the general form of (see, e.g., Theorem 7 of \cite[Subchapter 12.6]{lt:85})
\begin{equation*}\label{}
\overline{\bm{U}}_d
=\bm{G}^{\{1,3\}}\bm{Y}_d+\left(I-\bm{G}^{\{1,3\}}\bm{G}\right)\bm{Z},\quad\forall\bm{Z}\in\mathbb{R}^q.
\end{equation*}

\noindent Namely, the description of $\overline{\mathcal{U}}_{d}\left(\bm{Y}_{d}\right)$ in (\ref{e9}) is identical to that of (\ref{e1}). The proof of this lemma is complete.
\end{proof}

Based on Lemma \ref{lem08}, we are in position to present the proof of Theorem \ref{t1} as follows.

\begin{proof}[Proof of Theorem \ref{t1}]
We next adopt two separate steps to develop this theorem.

{\it Step (i): We prove that for any $\bm{Y}_{d}\notin\mathcal{Y}_{T}$, $\bm{Y}_{\infty}$ and $\bm{U}_{\infty}$ exist and satisfy $\bm{Y}_{\infty}=\bm{H}_1\bm{F}_1^{\tp}\bm{Y}_d$ and $\bm{U}_{\infty}\in\overline{\mathcal{U}}_{\mathrm{ILC}}(\bm{Y}_{d})$, respectively.}

With Theorem \ref{thm05}, we substitute (\ref{eq19}) into \eqref{eq13} and can obtain
\begin{equation}\label{e16}
%\aligned
\widehat{\bm{E}}_{k+1}^{C}%&=\widehat{\bm{E}}_{k}^{C}+\bm{F}_{1}^{\tp}\bm{G}\bm{\Psi}_{k}\\
=\widehat{\bm{E}}_{k}^{C}-\bm{F}_{1}^{\tp}\bm{G}\bm{K}\bm{E}_k
=\left(I-\bm{F}_{1}^{\tp}\bm{G}\widehat{\bm{K}}\right)\widehat{\bm{E}}_{k}^{C},\quad\forall k\in\mathbb{Z}_{+}
%&=\widehat{\bm{E}}_{k}^{C}-\bm{F}_{1}^{\tp}\bm{G}\widehat{\bm{K}}\bm{F}_1^{\tp}\bm{E}_k\\
%&=\widehat{\bm{E}}_{k}^{C}-\bm{F}_{1}^{\tp}\bm{G}\widehat{\bm{K}}\widehat{\bm{E}}_{k}^{C}\\
%&=(I-\bm{F}_{1}^{\tp}\bm{G}\widehat{\bm{K}})\widehat{\bm{E}}_{k}^{C},\quad\forall k\in\mathbb{Z}_{+}
%\endaligned
\end{equation}

\noindent where we also use $\bm{\Psi}_{k}=-\Delta\bm{U}_{k}$, $\forall k\in\mathbb{Z}_{+}$, $\bm{K}=\bm{\widehat{K}}\bm{F}^{\tp}_1$ and $\widehat{\bm{E}}_{k}^{C}=\bm{F}_1^{\tp}\bm{E}_k$. With the condition (\ref{eq21}), $\lim_{k\to\infty}\widehat{\bm{E}}_{k}^{C}=0$ follows directly from (\ref{e16}). This, together with \eqref{eq18} and Lemma \ref{lem01}, yields
\begin{equation}\label{e5}
\aligned
\lim_{k\to\infty}\bm{E}_k
&=\bm{H}\lim_{k\to\infty}\widehat{\bm{E}}_{k}
=\left[\bm{H}_1~ \bm{H}_2\right]\left[\begin{matrix}0\\
\bm{F}_2^{\tp}\bm{E}_0\end{matrix}\right]
=\bm{H}_2\bm{F}_2^{\tp}\left(\bm{Y}_d-\bm{G}\bm{U}_0\right)\\
&=\left(I-\bm{H}_1\bm{F}_1^{\tp}\right)\bm{Y}_d.
\endaligned
\end{equation}

\noindent As a consequence of (\ref{e5}), $\bm{Y}_\infty=\lim_{k\to\infty}\left(\bm{Y}_d-\bm{E}_k\right)=\bm{H}_1\bm{F}_1^{\tp}\bm{Y}_d$ is immediate, namely, \eqref{e2} holds.

From \eqref{eq19}, we can also deduce
\begin{equation}\label{e6}
\aligned
\bm{U}_{k+1}
&=\bm{U}_{k}+\widehat{\bm{K}}\bm{F}_1^{\tp}\bm{E}_{k}\\
&=\bm{U}_{k}+\widehat{\bm{K}}\bm{F}_1^{\tp}\left(\bm{Y}_d-\bm{G}\bm{U}_{k}\right)\\
&=\left(I-\widehat{\bm{K}}\bm{F}_1^{\tp}\bm{G}\right)\bm{U}_{k}+\widehat{\bm{K}}\bm{F}_1^{\tp}\bm{Y}_d,\quad\forall k\in\mathbb{Z}_{+}.
\endaligned
\end{equation}

\noindent Since the condition \eqref{eq21} ensures the nonsingularity of $\bm{F}_1^{\tp}\bm{G}\widehat{\bm{K}}$, we can construct a nonsingular matrix $\overline{\bm{\Omega}}=\left[\overline{\bm{\Omega}}_{1}~\overline{\bm{\Omega}}_2\right]\in\mathbb{R}^{q\times q}$ with $\overline{\bm{\Omega}}_{1}\in\mathbb{R}^{q\times m}$ and $\overline{\bm{\Omega}}_{2}\in\mathbb{R}^{q\times(q-m)}$ given by
%\[\aligned
%\overline{\bm{\Omega}}_{1}
%&=\widehat{\bm{K}}\left(\bm{F}_{1}^{\tp}\bm{G}\widehat{\bm{K}}\right)^{-1}\\
%&=\begin{bmatrix}\widehat{\bm{K}}_{1}\left(\bm{F}_{1}^{\tp}\bm{G}\widehat{\bm{K}}\right)^{-1}\\
%\widehat{\bm{K}}_{2}\left(\bm{F}_{1}^{\tp}\bm{G}\widehat{\bm{K}}\right)^{-1}\end{bmatrix}\\
%\overline{\bm{\Omega}}_{2}
%&=\begin{bmatrix}-\left(\bm{F}_{1}^{\tp}\bm{G}_{1}\right)^{-1}\bm{F}_{1}^{\tp}\bm{G}_{2}\\
%I\end{bmatrix}.
%\endaligned
%\]
\[\aligned
\overline{\bm{\Omega}}_{1}
&=\widehat{\bm{K}}\left(\bm{F}_{1}^{\tp}\bm{G}\widehat{\bm{K}}\right)^{-1}=\begin{bmatrix}\widehat{\bm{K}}_{1}\left(\bm{F}_{1}^{\tp}\bm{G}\widehat{\bm{K}}\right)^{-1}\\
\widehat{\bm{K}}_{2}\left(\bm{F}_{1}^{\tp}\bm{G}\widehat{\bm{K}}\right)^{-1}\end{bmatrix}\\
\overline{\bm{\Omega}}_{2}
&=\begin{bmatrix}-\left(\bm{F}_{1}^{\tp}\bm{G}_{1}\right)^{-1}\bm{F}_{1}^{\tp}\bm{G}_{2}\\
I\end{bmatrix}
\endaligned
\]

\noindent where $\rank\left(\bm{F}_{1}^{\tp}\bm{G}_{1}\right)=m$ is assumed as in the proof of Theorem \ref{thm08}. Then the inverse matrix of $\overline{\bm{\Omega}}$ can be given by $\overline{\bm{\Xi}}=\left[\overline{\bm{\Xi}}_{1}^{\tp}~\overline{\bm{\Xi}}_{2}^{\tp}\right]^{\tp}$ with $\overline{\bm{\Xi}}_{1}\in\mathbb{R}^{m\times q}$ and $\overline{\bm{\Xi}}_{2}\in\mathbb{R}^{(q-m)\times q}$ satisfying
\[\aligned
\overline{\bm{\Xi}}_{1}
&=\bm{F}_{1}^{\tp}\bm{G}=\begin{bmatrix}\bm{F}_{1}^{\tp}\bm{G}_{1}&\bm{F}_{1}^{\tp}\bm{G}_{2}\end{bmatrix}\\
\overline{\bm{\Xi}}_{2}
&=\begin{bmatrix}
\left[-\widehat{\bm{K}}_{2}\left(\bm{F}_{1}^{\tp}\bm{G}\widehat{\bm{K}}\right)^{-1}\bm{F}_{1}^{\tp}\bm{G}_{1}\right]^{\tp}\\
\left[I-\widehat{\bm{K}}_{2}\left(\bm{F}_{1}^{\tp}\bm{G}\widehat{\bm{K}}\right)^{-1}\bm{F}_{1}^{\tp}\bm{G}_{2}\right]^{\tp}\end{bmatrix}^{\tp}.
\endaligned\]

\noindent By taking a nonsingular linear transformation of \eqref{e6} with
\[
\overline{\bm{\Omega}}^{-1}\bm{U}_{k}
\triangleq\overline{\bm{U}}_{k}^{\ast}
=\begin{bmatrix}
\overline{\bm{U}}_{1,k}^{\ast}\\
\overline{\bm{U}}_{2,k}^{\ast}
\end{bmatrix}~\hbox{with}~
\left\{\aligned
\overline{\bm{U}}_{1,k}^{\ast}&=\overline{\Xi}_{1}\bm{U}_{k}\in\mathbb{R}^{m}\\
\overline{\bm{U}}_{2,k}^{\ast}&=\overline{\Xi}_{2}\bm{U}_{k}\in\mathbb{R}^{q-m}
\endaligned,\forall k\in\mathbb{Z}_{+}\right.
\]

\noindent we can derive
\begin{equation*}\label{}
\aligned
\overline{\bm{U}}_{k+1}^{\ast}
&=\overline{\bm{\Omega}}^{-1}\left(I-\widehat{\bm{K}}\bm{F}_{1}^{\tp}\bm{G}\right)\overline{\bm{\Omega}}\overline{\bm{U}}_{k}^{\ast}+\overline{\bm{\Omega}}^{-1}\widehat{\bm{K}}\bm{F}_{1}^{\tp}\bm{Y}_{d}\\
&=\begin{bmatrix}
I-\bm{F}_{1}^{\tp}\bm{G}\widehat{\bm{K}}&0\\
0&I
\end{bmatrix}\overline{\bm{U}}_{k}^{\ast}
+\begin{bmatrix}
\bm{F}_{1}^{\tp}\bm{G}\widehat{\bm{K}}\bm{F}_{1}^{\tp}\bm{Y}_{d}\\
0
\end{bmatrix}
\endaligned,~~\forall k\in\mathbb{Z}_{+}
\end{equation*}

\noindent which clearly consists of two separate subsystems as
\begin{equation}\label{e7}
\overline{\bm{U}}_{1,k+1}^{\ast}
=\left(I-\bm{F}_{1}^{\tp}\bm{G}\widehat{\bm{K}}\right)\overline{\bm{U}}_{1,k}^{\ast}
+\bm{F}_{1}^{\tp}\bm{G}\widehat{\bm{K}}\bm{F}_{1}^{\tp}\bm{Y}_{d},\quad\forall k\in\mathbb{Z}_{+}
\end{equation}

\noindent and
\[
\overline{\bm{U}}_{2,k+1}^{\ast}
=\overline{\bm{U}}_{2,k}^{\ast},\quad\forall k\in\mathbb{Z}_{+}
\]

\noindent or equivalently,
\begin{equation}\label{e07}
\overline{\bm{U}}_{2,k}^{\ast}
=\overline{\bm{U}}_{2,0}^{\ast}
=\overline{\bm{\Xi}}_{2}\bm{U}_{0},\quad\forall k\in\mathbb{Z}_{+}.
\end{equation}

\noindent For the subsystem (\ref{e7}), the condition (\ref{eq21}) ensures its solution to converge and satisfy
\[
\lim_{k\to\infty}\overline{\bm{U}}_{1,k}^{\ast}
=\bm{F}_{1}^{\tp}\bm{Y}_{d}.
\]

\noindent With this fact and based on (\ref{e07}), we consider the nonsingular linear transformation $\bm{U}_{k}=\overline{\bm{\Omega}}\overline{\bm{U}}_{k}^{\ast}$ and can arrive at the existence of $\bm{U}_{\infty}$. Further, we notice $\overline{\bm{\Omega}}_{2}\overline{\bm{\Xi}}_{2}=I-\overline{\bm{\Omega}}_{1}\overline{\bm{\Xi}}_{1}$ and can deduce
\[
\aligned
\bm{U}_{\infty}
&=\overline{\bm{\Omega}}_{1}\overline{\bm{U}}_{1,\infty}^{\ast}
+\overline{\bm{\Omega}}_{2}\overline{\bm{U}}_{2,0}^{\ast}\\
&=\overline{\bm{\Omega}}_{1}\bm{F}_{1}^{\tp}\bm{Y}_{d}
+\overline{\bm{\Omega}}_{2}\overline{\bm{\Xi}}_{2}\bm{U}_{0}\\
%&=\overline{\bm{\Omega}}_{1}\bm{F}_{1}^{\tp}\bm{Y}_{d}
%+\left(I-\overline{\bm{\Omega}}_{1}\overline{\bm{\Xi}}_{1}\right)\bm{U}_{0}\\
&=\left[I-\widehat{\bm{K}}\left(\bm{F}_{1}^{\tp}\bm{G}\widehat{\bm{K}}\right)^{-1}\bm{F}_{1}^{\tp}\bm{G}\right]\bm{U}_{0}
+\widehat{\bm{K}}\left(\bm{F}_{1}^{\tp}\bm{G}\widehat{\bm{K}}\right)^{-1}\bm{F}_{1}^{\tp}\bm{Y}_{d}
\endaligned
\]

\noindent which yields $\bm{U}_{\infty}\in\overline{\mathcal{U}}_{\mathrm{ILC}}(\bm{Y}_{d})$ for any $\bm{Y}_{d}\notin\mathcal{Y}_{T}$ based on \eqref{e3}.

{\it Step (ii): We prove that \eqref{e4} holds if and only if $\bm{H}_1^{\tp}\bm{H}_2=0$.} With Lemma \ref{lem08}, we only need to show that $\widehat{\bm{K}}\left(\bm{F}_{1}^{\tp}\bm{G}\widehat{\bm{K}}\right)^{-1}\bm{F}_{1}^{\tp}\in\widetilde{\bm{G}}$ if and only if $\bm{H}_1^{\tp}\bm{H}_2=0$.

{\it Sufficiency:} According to Lemma \ref{lem01}, we can arrive at
\begin{equation}\label{e012}
\aligned
\bm{G}\widehat{\bm{K}}\left(\bm{F}_{1}^{\tp}\bm{G}\widehat{\bm{K}}\right)^{-1}\bm{F}_{1}^{\tp}
&=\left(\bm{H}_1\bm{F}_1^{\tp}+\bm{H}_2\bm{F}_2^{\tp}\right)
\bm{G}\widehat{\bm{K}}\left(\bm{F}_{1}^{\tp}\bm{G}\widehat{\bm{K}}\right)^{-1}\bm{F}_{1}^{\tp}\\
%&=\bm{H}_1\bm{F}_1^{\tp}\bm{G}\widehat{\bm{K}}\left(\bm{F}_{1}^{\tp}\bm{G}\widehat{\bm{K}}\right)^{-1}\bm{F}_{1}^{\tp}\\
&=\bm{H}_{1}\bm{F}_{1}^{\tp}
\endaligned
\end{equation}

\noindent which can be employed to get
\[
%\aligned
\bm{G}\widehat{\bm{K}}\left(\bm{F}_{1}^{\tp}\bm{G}\widehat{\bm{K}}\right)^{-1}\bm{F}_{1}^{\tp}\bm{G}
=\bm{H}_{1}\bm{F}_{1}^{\tp}\bm{G}
=\left(I-\bm{H}_{2}\bm{F}_{2}^{\tp}\right)\bm{G}
=\bm{G}.
%\endaligned
\]

\noindent From (\ref{e012}), the use of $\bm{H}_1^{\tp}\bm{H}_2=0$ and $\bm{H}_1\bm{F}_1^{\tp}+\bm{H}_2\bm{F}_2^{\tp}=I$ yields
\[\aligned
\bm{G}\widehat{\bm{K}}\left(\bm{F}_{1}^{\tp}\bm{G}\widehat{\bm{K}}\right)^{-1}\bm{F}_{1}^{\tp}
%&=\left(\bm{H}_1\bm{F}_1^{\tp}+\bm{H}_2\bm{F}_2^{\tp}\right)
%\bm{G}\widehat{\bm{K}}\left(\bm{F}_{1}^{\tp}\bm{G}\widehat{\bm{K}}\right)^{-1}\bm{F}_{1}^{\tp}\\
%&=\bm{H}_{1}\bm{F}_{1}^{\tp}\\
&=\left(\bm{H}_1\bm{F}_1^{\tp}+\bm{H}_2\bm{F}_2^{\tp}\right)^{\tp}\bm{H}_{1}\bm{F}_{1}^{\tp}\\
&=\bm{F}_1\bm{H}_1^{\tp}\bm{H}_{1}\bm{F}_{1}^{\tp}\\
&=\left(\bm{G}\widehat{\bm{K}}\left(\bm{F}_{1}^{\tp}\bm{G}\widehat{\bm{K}}\right)^{-1}\bm{F}_{1}^{\tp}\right)^{\tp}.
\endaligned\]

\noindent Thus, we have $\widehat{\bm{K}}\left(\bm{F}_{1}^{\tp}\bm{G}\widehat{\bm{K}}\right)^{-1}\bm{F}_{1}^{\tp}\in\widetilde{\bm{G}}$ by (\ref{e8}) when $\bm{H}_1^{\tp}\bm{H}_2=0$.

{\it Necessity:} If $\widehat{\bm{K}}\left(\bm{F}_{1}^{\tp}\bm{G}\widehat{\bm{K}}\right)^{-1}\bm{F}_{1}^{\tp}\in\widetilde{\bm{G}}$, then by (\ref{e8}), it follows
\[
\left(\bm{G}\widehat{\bm{K}}\left(\bm{F}_{1}^{\tp}\bm{G}\widehat{\bm{K}}\right)^{-1}\bm{F}_{1}^{\tp}\right)^{\tp}=\bm{G}\widehat{\bm{K}}\left(\bm{F}_{1}^{\tp}\bm{G}\widehat{\bm{K}}\right)^{-1}\bm{F}_{1}^{\tp}
\]

\noindent and as a consequence of (\ref{e012}), it is equivalent to
\begin{equation}\label{e17}
\bm{F}_{1}\bm{H}_{1}^{\tp}=\bm{H}_{1}\bm{F}_{1}^{\tp}.
\end{equation}

\noindent Due to $\bm{F}_{1}^{\tp}\bm{H}_2=0$, we resort to (\ref{e17}) and can obtain
\[
\bm{F}_{1}\bm{H}_{1}^{\tp}\bm{H}_2=\bm{H}_{1}\bm{F}_{1}^{\tp}\bm{H}_2=0
\]

\noindent from which $\bm{H}_{1}^{\tp}\bm{H}_2=0$ follows immediately, thanks to the full-column rank property of $\bm{F}_{1}$.
\end{proof}

\section{Proof of Theorem \ref{t3}}

\begin{proof}
For any $\bm{Y}_{d}\in\mathcal{Y}_T$, we follow the proof of Theorem \ref{thm08} to obtain that (\ref{e013}) is equivalent to
\begin{equation}\label{e015}
\bm{U}_{1,k}^{\ast}
\left\{\aligned
&\neq\bm{F}_{1}^{\tp}\bm{Y}_{d}, &\forall k&\leq\nu_{1}-1\\
&=\bm{F}_{1}^{\tp}\bm{Y}_{d}, &\forall k&\geq\nu_{1}.
\endaligned
\right.
\end{equation}

\noindent Since we can rewrite (\ref{eq32}) as
\begin{equation*}\label{}
\bm{U}_{1,k+1}^{\ast}-\bm{F}_{1}^{\tp}\bm{Y}_{d}
=\left(I-\bm{F}_{1}^{\tp}\bm{G}\bm{K}\bm{H}_{1}\right)
\left(\bm{U}_{1,k}^{\ast}-\bm{F}_{1}^{\tp}\bm{Y}_{d}\right),\quad\forall k\in\mathbb{Z}_{+}
\end{equation*}

\noindent we consequently have
\begin{equation}\label{e016}
\bm{U}_{1,k}^{\ast}-\bm{F}_{1}^{\tp}\bm{Y}_{d}
=\left(I-\bm{F}_{1}^{\tp}\bm{G}\bm{K}\bm{H}_{1}\right)^{k}
\left(\bm{U}_{1,0}^{\ast}-\bm{F}_{1}^{\tp}\bm{Y}_{d}\right),\quad\forall k\in\mathbb{Z}_{+}.
\end{equation}

\noindent Clearly, we can achieve (\ref{e015}) for (\ref{e016}) if and only if (\ref{e12}) holds, which is equivalent to (\ref{e11}) by properties of nilpotent matrices.

In the same way, we can develop the equivalent relationships among (\ref{e014}), (\ref{e13}), and (\ref{e14}) by applying Theorem \ref{t1} and further exploiting its proof, for which the details are thus omitted.
\end{proof}

%%%%%%%%%%%%%%%%%%%%%%%%%%%%%%%%%%%%%%%%%%%%%%%%%%%%%%%%%%%%%%%%%%%%%%%%%%%%%%%%%%%

%
%\begin{IEEEbiography}[]{Deyuan Meng}%[{\includegraphics[width=1in,height=1.25in,clip,keepaspectratio]{DeyuanMeng}}]{Deyuan Meng}
%(M'12-SM'18) received the B.S. degree in mathematics and applied mathematics from the Ocean University of China, Qingdao, China, in 2005, and the Ph.D. degree in control theory and control engineering from Beihang University (BUAA), Beijing, China, in 2010. From 2012 to 2013, he was a Visiting Scholar with the Department of Electrical Engineering and Computer Science, Colorado School of Mines, Golden, CO, USA. He is currently a Full Professor with the Seventh Research Division and the School of Automation Science and Electrical Engineering, Beihang University (BUAA).
%
%His current research interests include iterative learning control, data-driven control, and multi-agent systems.
%%Dr. Meng was a recipient of the Best Paper Award from the IEEE 7th Data Driven Control and Learning Systems Conference in 2018.
%\end{IEEEbiography}
%
%\begin{IEEEbiography}[]{Yuxin Wu}%[{\includegraphics[width=1in,height=1.25in,clip,keepaspectratio]{Yuxin Wu}}]
%received the B.S. degree in automation from North China Electric Power University, Beijing, China, in June 2017, and the M.S. degree in control theory and control engineering from Beihang University (BUAA), Beijing, in January 2020. She is currently working towards her Ph.D. degree in Beihang University (BUAA).
%
%Her research interests include multi-agent systems and iterative learning control.
%\end{IEEEbiography}
%

\end{document}